\newtheorem{thm}{Theorem}[section]
\newtheorem{lem}[thm]{Lemma}
\newtheorem{prop}[thm]{Proposition}
\newtheorem{cor}[thm]{Corollary}
\theoremstyle{definition}
\newtheorem{defn}[thm]{Definition}
\theoremstyle{remark}
\newtheorem{rem}[thm]{Remark}
\newtheorem{exa}[thm]{Example}
\crefname{thm}{Theorem}{Theorems}
\numberwithin{equation}{section}
\DeclareMathOperator{\tr}{tr}
\DeclareMathOperator{\conv}{conv}
\DeclareMathOperator{\supp}{supp}
\DeclareMathOperator{\diagmat}{diag}
\DeclareMathOperator{\diagvec}{diag}
\DeclareMathOperator{\GL}{GL}
\DeclareMathOperator{\PD}{PD}
\DeclareMathOperator{\U}{U}
\DeclareMathOperator{\dom}{dom}
\DeclareMathOperator{\Herm}{Herm}
\DeclareMathOperator{\dist}{dist}
\DeclareMathOperator{\spec}{spec}
\DeclareMathOperator{\rk}{rk}
\DeclareMathOperator{\ncrk}{ncrk}
\DeclareMathOperator{\Lin}{Lin}
\DeclareMathOperator{\SR}{SR}
\DeclareMathOperator{\Sym}{S}
\DeclareMathOperator{\Alt}{\Lambda}
\newcommand{\diff}{\mathrm{d}}
\newcommand{\PP}{\mathbbm{P}}
\newcommand{\RR}{\mathbbm{R}}
\newcommand{\ZZ}{\mathbbm{Z}}
\newcommand{\CC}{\mathbbm{C}}
\newcommand{\rmH}{\mathbb{H}}
\newcommand{\rmT}{\mathrm{T}}
\newcommand{\VV}{\mathcal{V}}
\newcommand{\MM}{\mathcal{M}}
\newcommand{\HH}{\mathcal{H}}
\newcommand{\BB}{\mathcal{B}}
\newcommand{\op}{\oplus}
\newcommand{\ot}{\otimes}
\DeclarePairedDelimiter\norm{\lVert}{\rVert}
\title{Strassen's support functionals coincide with the quantum functionals}
\author{Keiya Sakabe\thanks{Faculty of Computer Science, Ruhr University Bochum, Germany, \texttt{keiya.sakabe@rub.de}}
\and
M.~Levent Do\u{g}an\thanks{Faculty of Computer Science, Ruhr University Bochum, Germany, \texttt{mahmut.dogan@rub.de}}
\and
Michael Walter\thanks{Faculty of Physics and Faculty of Mathematics, Computer Science, and Statistics, LMU Munich, Germany;
Korteweg-de Vries Institute for Mathematics and QuSoft, University of Amsterdam, The Netherlands, \texttt{michael.walter@lmu.de}}
}
\date{}
\begin{document}
\maketitle
\begin{abstract}
Strassen's asymptotic spectrum offers a framework for analyzing the complexity of tensors.
It has found applications in diverse areas, from computer science to additive combinatorics and quantum information.
A long-standing open problem, dating back to 1991, asks whether Strassen's support functionals are universal spectral points, that is, points in the asymptotic spectrum of tensors.

In this paper, we answer this question in the affirmative by proving that the support functionals coincide with the quantum functionals---universal spectral points that are defined via entropy optimization on entanglement polytopes.
We obtain this result as a special case of a general minimax formula for convex optimization on entanglement polytopes (and other moment polytopes) that has further applications to other tensor parameters, including the asymptotic slice rank.
Our proof is based on a recent Fenchel-type duality theorem on Hadamard manifolds due to Hirai.
\end{abstract}

\section{Introduction}\label{sec:introduction}
Tensors play a central role in mathematics, computer science, physics and many other fields of sciences.
In algebraic complexity theory, tensors capture multilinear problems such as the problem of multiplying $n \times n$ matrices.
Strassen's theory of the \emph{asymptotic spectrum} of tensors establishes an important unifying framework for studying the computational complexity of multilinear problems~\cite{Strassen-86, Strassen-88, Strassen-91, Strassen-05}.
We briefly sketch this theory.
Given a $d$-tensor $t\in \CC^{n_1}\otimes\dots\otimes\CC^{n_d}$, we say that $t$ \emph{restricts} to a tensor $s\in\CC^{m_1}\otimes\dots\otimes\CC^{m_d}$, and write $t\geq s$, if there are linear maps $A_i \colon \CC^{n_i}\rightarrow \CC^{m_i}$ such that
$(A_1\otimes\dots\otimes A_d)t = s$.
We say that $t$ \emph{asymptotically restricts} to $s$ if $t^{\otimes n + o(n)}\geq s^{\otimes n}$.
When one of the tensors is the \emph{unit tensor} $\langle r\rangle\coloneqq \sum_{i=1}^r e_i\otimes\dots\otimes e_i$, asymptotic restriction captures the \emph{asymptotic rank and subrank} of a tensor, which directly bounds the algebraic complexity.
In general, it is a difficult problem to determine when two tensors asymptotically restrict each other.
Strassen discovered an important structural simplification: the asymptotic restriction problem is fully captured by certain functionals.
More precisely, for a family $\mathcal{X}$ of $d$-tensors that is closed under tensor product, direct sums, and contains the unit tensors $\langle r\rangle$, define its asymptotic spectrum as the set of all functions from~$\mathcal{X}$ to~$\RR_{\geq 0}$ that satisfy the following four axioms:
they are
\begin{enumerate}[(a),noitemsep]
    \item monotone under restriction,
    \item multiplicative under tensor product,
    \item additive under direct sum, and
    \item normalized to have value $r$ on the unit tensor $\langle r\rangle$.
\end{enumerate}
Strassen's spectral theorem states that such functionals, called \emph{spectral points}, are invariants that completely characterize the asymptotic restriction between tensors.
Namely, for two tensors $s,t\in\mathcal{X}$ we have $t^{\otimes n + o(n)} \geq s^{\otimes n}$ if and only if $\phi(t)\geq \phi(s)$ for every spectral point $\phi$.
In particular, spectral points provide upper bounds on the asymptotic subrank and lower bounds on the asymptotic rank of a tensor.
This makes the asymptotic spectrum particularly powerful in computational complexity.
Remarkably, while originally invented in the context of complexity theory, and in particular the long-standing open problem of characterizing the exponent of matrix multiplication, Strassen's theory has found applications in a diverse set of areas such as additive combinatorics~\cite{Alon-13,Tao-16,Tao-Sawin-16,Blasiak-17,Kleinberg-18} and quantum information theory, where tensors describe quantum states and restriction corresponds to entanglement transformations~\cite{dur2000three,walter2013entanglement,zuiddam2018algebraic,christandl2024tensor}.

The construction of nontrivial \emph{universal} spectral points, that is, spectral points for the family of \emph{all} $d$-tensors, turned out to be a particularly challenging task.
Early on, Strassen~\cite{Strassen-91} introduced a class of candidate spectral points, parameterized by probability distributions~$\theta\in\Theta\coloneqq \{\theta\in\RR_{\geq 0}^d\mid \sum_j \theta_j = 1\}$.
They can be defined as follows:
for a tensor $t\in\CC^{n_1}\otimes\CC^{n_2}\otimes\dots\otimes\CC^{n_d}$, define its \emph{support polytope}
\begin{equation}\label{eq:support polytope intro}
    \Omega(t) \coloneqq \conv\{(e_{j_1},e_{j_2},\dots,e_{j_d}) \mid t_{j_1,j_2,\dots,j_d}\neq 0\}\subseteq \RR^{n_1} \!\times\! \cdots \!\times\! \RR^{n_d},
\end{equation}
where $e_j$ denotes the standard basis vectors.
Then the \emph{(upper) support functional} $\zeta^\theta(t)$ for~$\theta \in \Theta$ is%
\footnote{The support functionals can also be defined by considering joint probability distributions~$P$ on $[n_1] \times \cdots \times [n_d]$ with nonzero entries on the support of~$t$.
Then $\Omega(t)$ is obtained by computing the tuple~$(p_1,\dots,p_d)$ of marginal distributions of the~$d$ random variables under any such distribution~$P$.}
\begin{equation}\label{eq:zeta}
    \zeta^\theta(t)\coloneqq \min_{g \in \GL} \max_{p\in \Omega(g\cdot t)} 2^{\sum_{j=1}^d \theta_j H(p_j)},
\end{equation}
where the minimum is over all tuples $g=(g_1,\dots,g_d)$ in~$\GL\coloneqq \GL(n_1)\times\GL(n_2)\times\dots\times \GL(n_d)$, $g \cdot t = (g_1 \ot \cdots \ot g_d) t$ denotes the result of applying the corresponding basis changes on the tensor's indices, and $H(p_j)=-\sum_{\ell = 1}^{n_j} p_{j,\ell} \log_2 (p_{j,\ell}) $ is the Shannon entropy of the probability distribution~$p_j$.
Strassen proved that $\zeta^{\theta}$ is a spectral point for the family of \emph{oblique} tensors (tensors whose support form an antichain in some basis).
While a significant example of a spectral point for an important (but non-generic) class of tensors, it left open the question of whether~$\zeta^{\theta}$ is a universal spectral point.

In a 2017 breakthrough result, Christandl, Vrana, and Zuiddam constructed the first nontrivial (that is, different from the flattening ranks) family of universal spectral points~\cite{CVZ2018}.
Their construction, called the \emph{quantum functional}~$F_{\theta}(t)$, is also parameterized by a probability distribution~$\theta \in \Theta$ and admits a similar definition to $\zeta^{\theta}(t)$, but is in terms of the \emph{entanglement} or \emph{moment polytope} of the tensor, rather than its support polytope.
In more detail, for every $i\in [d]$, we can view a tensor $t\in \CC^{n_1}\otimes\dots\otimes\CC^{n_d}$ as a linear map $t_i : \CC^{n_i}\rightarrow \otimes_{j\neq i} \CC^{n_j}$, called the $i$-th \emph{(principal) flattening} of $t$.
The \emph{entanglement polytope}~\cite{walter2013entanglement} of a tensor $t\neq 0$ is defined in terms of the flattenings~as
\begin{equation}\label{eq:entanglement polytope intro}
\Delta(t) \coloneqq \left\{\; \left(\spec \rho_1(s), \spec \rho_2(s),\dots, \spec \rho_d(s) \right) \; \Big\lvert \; 0\neq s \in \overline{\GL\cdot t} \right\} \subseteq \RR^{n_1} \!\times\! \cdots \!\times\! \RR^{n_d},
\end{equation}
where $\rho_i(s) = \frac{s_i^{\dag}s_i}{\norm{s}^2}$ is the $i$-th \emph{quantum marginal} of~$s$, and $\spec(\cdot)$ maps a Hermitian matrix to its sorted eigenvalues.
Then the \emph{quantum functional} is given by
\begin{equation}\label{eq:F}
    F_\theta(t) \coloneqq \max_{p\in\Delta(t)} \, 2^{\sum_{j=1}^d \theta_j H(p_j)}.
\end{equation}
In contrast to~\eqref{eq:zeta}, no minimization over~$g \in \GL$ is required because it holds that $\Delta(t)=\Delta(g\cdot t)$ for all~$g\in\GL$, a property that is easy to see from the definition but is not shared by~$\Omega(t)$.%
\footnote{However, the support polytope is invariant under the action of the \emph{maximal torus}~$\rmT$ consisting of the diagonal matrices in~$\GL$. In fact, $\Omega(t)$ can be alternatively defined as the moment polytope of $t$ corresponding to this action, see \cref{sec:moment-poly}.}
It is a highly nontrivial fact that $\Delta(t)$ is a convex polytope, as shown in a more general setting by~\cite{Kirwan1984,Ness-Mumford-84} (see also
 \cite{Kostant-73,Atiyah1982,GuilleminSternberg1982,GuilleminSternberg1984,Brion-06,ressayre2011geometric,vergne2017inequalities,walter2014multipartite,van2025computing}).
The quantum functionals~$F_\theta$ satisfy Strassen's four axioms for the family of all $d$-tensors and hence are universal spectral points, generalizing the flattening ranks, which are recovered when $\theta$ is a deterministic distribution ($\theta = e_i$ for some $i\in[d]$).
The similarity between the definitions of~$\zeta^\theta$ and~$F_\theta$ is remarkable and it is natural to ask if these quantities are related.
In \cite{CVZ2018}, the authors proved that $F_{\theta}(t)\leq \zeta^{\theta}(t)$, using a nontrivial relationship between the support polytope and moment polytope of a tensor due to Strassen~\cite{Strassen-05}, and they established equality for the (non-generic) class of \emph{free} tensors that includes the oblique tensors.
However, the precise relationship between the two functionals was not known, and indeed the question whether Strassen's support functional are universal spectral has remained open for over 35 years.

\subsection{Main results}

In this paper we answer this question in the affirmative by proving that the quantum functionals~$F_\theta(t)$ and Strassen's upper support functional~$\zeta^\theta(t)$ are in fact equal, for \emph{every} tensor $t$ and every~$\theta\in\Theta$.

\begin{thm}\label{thm:support-equals-quantum}
For every tensor $t$ and every $\theta\in\Theta$, $F_\theta(t)=\zeta^\theta(t)$.
In particular, Strassen's support functional is a universal spectral point in the asymptotic spectrum of tensors.
\end{thm}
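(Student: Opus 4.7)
The inequality $F_\theta(t) \leq \zeta^\theta(t)$ is established in~\cite{CVZ2018}, so the task reduces to proving the reverse inequality
\[
\inf_{g\in\GL}\ \max_{p\in\Omega(g\cdot t)}\sum_{j=1}^d\theta_j H(p_j)\ \leq\ \max_{p\in\Delta(t)}\sum_{j=1}^d\theta_j H(p_j).
\]
Both sides optimize a concave entropy functional over convex sets---a $\GL$-parametrized family of support polytopes on the left, the single entanglement polytope on the right---so the equality has the shape of a minimax duality.

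First, I would dualize the inner maximum. For fixed $g$, $\max_{p\in\Omega(g\cdot t)}\sum_j\theta_j H(p_j)$ is a maximum marginal entropy problem over probability distributions on $[n_1]\times\cdots\times[n_d]$ supported on $\supp(g\cdot t)$; by the standard entropy/log-partition Fenchel duality it is the convex conjugate of a log-partition function in ``potentials'' $\lambda=(\lambda^{(j)})_j$ tilted by this support. The $\GL$-action and these tilting parameters can be combined into a single optimization on the product of symmetric spaces $\PD_{n_1}\times\cdots\times\PD_{n_d}\cong\GL/\U$, a Hadamard manifold under its natural $\GL$-invariant metric of non-positive curvature. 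On this manifold, the function $g\mapsto\log\|g\cdot t\|^2$ and the capacity-type functionals that govern $F_\theta$ are geodesically convex, as is classical in non-commutative optimization (Kempf--Ness). After this reformulation, $\log_2\zeta^\theta(t)$ becomes the infimum of a geodesically convex functional on a Hadamard manifold.

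I would then apply Hirai's Fenchel-type duality to swap this infimum with the supremum over the dual entropy variable. The Fenchel conjugate of the log-partition function returns Shannon entropy, while Kempf--Ness/Hilbert--Mumford/moment-map theory identifies the achievable asymptotic slopes of geodesically convex functionals on $\PD$ with the sorted marginal spectra realized by elements of the orbit closure $\overline{\GL\cdot t}$---that is, with the points of the entanglement polytope~$\Delta(t)$. After the swap, the dual becomes $\max_{p\in\Delta(t)}\sum_j\theta_j H(p_j)=\log_2 F_\theta(t)$, yielding the desired inequality; by CVZ's axioms, $\zeta^\theta$ is then a universal spectral point.

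The main obstacle is the combination of these last two steps: verifying that Hirai's duality applies (geodesic convexity, properness, and the appropriate legality/coercivity hypotheses so that no duality gap opens) and matching the objects it produces---which naturally live at the boundary at infinity of $\PD$ and encode asymptotic behavior along geodesic rays---with the algebraic orbit closure $\overline{\GL\cdot t}$ appearing in the definition of $\Delta(t)$. Geodesic rays in $\PD$ correspond to one-parameter degenerations of $t$ into its orbit closure, which is the geometric reason the framework should succeed; but making this correspondence quantitative, so that the entropy-on-moment-polytope side is matched exactly, is the heart of the technical work and presumably proceeds via the general minimax formula for convex optimization on entanglement polytopes announced in the abstract.
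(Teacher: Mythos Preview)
Your proposal assembles the right ingredients---Hirai's Fenchel-type duality on the Hadamard manifold~$\PD$, the Kempf--Ness function, and a Euclidean (entropy/log-partition) Fenchel duality---and these are exactly what the paper uses. But you run the argument in the harder direction, and the ``main obstacle'' you flag is in fact bypassed entirely by reversing the order.

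The paper starts from the $F_\theta$ side, not from~$\zeta^\theta$. The point is that the differential of the Kempf--Ness function $f_t(x)=\log\langle t,x\cdot t\rangle$ \emph{is} the moment map, in the sense that $\tau^*_{I\to x}\diff f_t(x)=\mu(x^{1/2}\cdot t)$. Hence, setting $F(p)=-\sum_j\theta_j H(p_j)$ and $Q$ the associated function on $T^*\PD$, one gets immediately
\[
\min_{p\in\Delta(t)}F(p)\;=\;\inf_{x\in\PD}Q(\diff f_t(x)),
\]
with no duality and no boundary-at-infinity interpretation required. Now apply Hirai's theorem on~$\PD$ to rewrite the right-hand side as a supremum over geodesic rays $\gamma(s)=g^\dagger e^{\diagmat(q+sp)}g$, indexed by $(g,q,p)$. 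The decisive trick is that, for each fixed~$g$, the remaining supremum over $(q,p)$ is \emph{precisely} the Euclidean asymptotic (Fenchel) duality applied in reverse to the torus Kempf--Ness function $f_t^g(y)=f_t(g^\dagger e^{\diagmat y}g)$, and therefore collapses to $\min_{p\in\Omega(g\cdot t)}F(p)$. Taking the outer $\sup_g$ gives $\max_{g}\min_{p\in\Omega(g\cdot t)}F(p)=-\log_2\zeta^\theta(t)$, and equality with $-\log_2 F_\theta(t)$ is established.

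In your plan, by contrast, the step ``combine $\inf_{g\in\GL}$ with the entropy-dual $\inf_\lambda$ into a single geodesically convex optimization on~$\PD$'' is not clearly well-posed: the parameter space $\GL\times\RR^N$ does not naturally descend to~$\PD$, and the dualized objective need not factor through $x=g^\dagger e^{\diagmat\lambda}g$. Moreover, if that step \emph{did} succeed in producing $\inf_{x\in\PD}Q(\diff f_t(x))$, you would already be finished---by the moment-map identity above this infimum equals $-\log_2 F_\theta(t)$---so your subsequent application of Hirai and the boundary/orbit-closure matching would be redundant. The paper's organization (Hirai on~$\PD$ first, then fiberwise Euclidean Fenchel in reverse) is what makes the obstacle you identify disappear.
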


Our result also provides a new and direct proof that the quantum functionals~$F_\theta$ are universal spectral points.%
\footnote{Another direct proof can be obtained by using properties of the gradient flow, cf.~\cite{Hirai2025}.}
Indeed, it is easy to see using basic properties of the entropy and the definition of the polytopes~$\Delta(t)$ and~$\Omega(t)$ that~$F_\theta$ is super-additive and super-multiplicative (as also shown in~\cite{CVZ2018}) and that~$\zeta^\theta$ is sub-additive and sub-multiplicative (as already shown in~\cite{Strassen-91}).
Thus, \cref{thm:support-equals-quantum} implies at once that the functionals are additive and multiplicative.
Monotonicity and normalization are immediate from the definition, concluding the proof.
In contrast, the original proof in \cite{CVZ2018} proceeded in two steps:
after establishing that $F_\theta$ is super-additive and super-multiplicative, they defined an \emph{upper quantum functional}~$F^\theta$ in terms of invariant theory, used representation-theoretic techniques to prove that~$F^\theta$ is sub-additive and sub-multiplicative, and proved that $F_\theta=F^\theta$ by comparing two characterizations of the entanglement polytope.
\Cref{thm:support-equals-quantum} allows us to completely bypass this second step, replacing invariant-theoretic machinery with more elementary convex-analytic methods.

\Cref{thm:support-equals-quantum} also implies a surprising connection between the \emph{asymptotic slice rank} of tensors and the asymptotic vertex cover number of hypergraphs.
The \emph{slice rank} $\SR(t)$ of a tensor~$t$ is the minimum number of slices that sum to~$t$, where a tensor is called a \emph{slice} if it has a principal flattening of rank one.
It was introduced by Tao and has various applications in algebraic complexity and additive combinatorics, most notably for bounding the sizes of \emph{cap sets}~\cite{Tao-16,Tao-Sawin-16}.
As the slice rank is not multiplicative, one defines the \emph{asymptotic slice rank} as its regularization (which always exists for complex tensors):
\[
    \widetilde{\SR}(t) \coloneqq \lim_{n\rightarrow\infty} \SR(t^{\otimes n})^{1/n}.
\]
To any $d$-tensor $t$, we assign also a $d$-uniform, $d$-partite hypergraph~$\HH_t$ with hyperedges~$(i_1,i_2,\dots,i_d)\in E(\HH_t)$ if and only if $t_{i_1,i_2,\dots,i_d}\neq 0$.
We denote by $\tau(\HH)$ the \emph{vertex cover (or hitting set) number} of a hypergraph~$\HH$, and define the \emph{asymptotic vertex cover number} by
\begin{align*}
    \tilde{\tau}(\HH) \coloneqq \lim_{n\rightarrow\infty} \tau(\HH^{\times n})^{1/n},
\end{align*}
where $\HH^{\times n}$ denotes the $n$-th power of the hypergraph (see \cref{sec:applications} for the definitions).
We find:

\begin{cor}\label{cor:slice-rank}
    For every tensor~$t$, the asymptotic slice rank can be computed as
    $\widetilde{\SR}(t) = \min_{g\in\GL} \tilde{\tau}(\HH_{g\cdot t})$.
\end{cor}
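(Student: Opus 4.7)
The plan is to identify both sides of the identity with the common quantity $\min_{\theta\in\Theta} F_\theta(t)$, using \cref{thm:support-equals-quantum} as the bridge between the support-polytope and the entanglement-polytope formulations.

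I would first establish the upper bound $\widetilde{\SR}(t) \le \min_{g\in\GL} \tilde\tau(\HH_{g\cdot t})$ by a direct combinatorial argument: a vertex cover $C$ of $\HH_{g\cdot t}$ of size $\tau$ yields a slice decomposition of $g\cdot t$ in which every $v\in C$ located in part $j\in[d]$ contributes its $v$-th $j$-slice. The $\GL$-invariance of $\SR$ together with the product identity $\HH_{s\otimes t} = \HH_s \times \HH_t$ (after the natural reindexing of hypergraph vertices) lifts this to $\SR(t^{\otimes n}) \le \tau(\HH_{g\cdot t}^{\times n})$ for all $n$; taking $n$-th roots, letting $n\to\infty$, and then minimizing over~$g$ gives the claimed bound.

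The main combinatorial input is the entropic characterization
\[
\tilde\tau(\HH) \;=\; \min_{\theta\in\Theta}\,\max_{p\in\Omega(\langle\HH\rangle)} 2^{\sum_j\theta_j H(p_j)} \;=\; 2^{\max_{p\in\Omega(\langle\HH\rangle)}\min_j H(p_j)},
\]
the second equality being Sion's minimax theorem (the exponent is linear in $\theta\in\Theta$ and concave in $p$, on the compact convex domain $\Omega(\langle\HH\rangle)$). For the first equality, I would run a standard type-counting argument on $\HH^{\times n}$: the $\le$ direction samples an optimizing $p^\ast$ as a probability distribution on hyperedges and covers all $p^\ast$-typical sequences using $\approx 2^{n\min_j H(p^\ast_j)}$ vertices from the minimizing part; the $\ge$ direction observes that every cover of $\HH^{\times n}$ must hit every typical sequence, forcing a cover of size $\ge 2^{nH(p^\ast_j)-o(n)}$ in some part.

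Combining these two steps with \cref{thm:support-equals-quantum} and swapping the two outer minima,
\[
\min_{g\in\GL}\tilde\tau(\HH_{g\cdot t}) \;=\; \min_\theta\min_g \max_{p\in\Omega(g\cdot t)} 2^{\sum_j\theta_j H(p_j)} \;=\; \min_\theta \zeta^\theta(t) \;=\; \min_\theta F_\theta(t).
\]
To match this with $\widetilde{\SR}(t)$, I would invoke the known identity $\widetilde{Q}(t) = \min_\theta F_\theta(t)$ (due to Christandl--Vrana--Zuiddam: the asymptotic subrank is achieved on the family of quantum functionals), together with the elementary inequality $\widetilde{Q}(t)\le \widetilde{\SR}(t)$ (slice rank is monotone under restriction and agrees with subrank on unit tensors), which yields $\widetilde{\SR}(t) \ge \min_\theta F_\theta(t)$ and hence equality throughout. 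The principal obstacle I anticipate is the entropic lower bound on $\tilde\tau(\HH)$, which demands a careful typical-type argument on product hypergraphs; the minimax swap and the reduction to $F_\theta$ via the main theorem are then formal.
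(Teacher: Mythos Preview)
Your overall strategy---identify both sides with $\min_{\theta\in\Theta} F_\theta(t)$, using the Tao--Sawin entropic formula for $\tilde\tau$ on the right and \cref{thm:support-equals-quantum} as the bridge---is exactly the route the paper takes. The chain
\[
\min_{g\in\GL}\tilde\tau(\HH_{g\cdot t})
=\min_g\min_\theta\max_{p\in\Omega(g\cdot t)}2^{\sum_j\theta_j H(p_j)}
=\min_\theta\zeta^\theta(t)
=\min_\theta F_\theta(t)
\]
is correct and matches the paper's argument verbatim.

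The gap is in your final matching step. You invoke ``the known identity $\widetilde{Q}(t)=\min_\theta F_\theta(t)$ (due to Christandl--Vrana--Zuiddam)'' and then close via $\widetilde{Q}(t)\le\widetilde{\SR}(t)$. But CVZ did \emph{not} prove that the asymptotic subrank equals $\min_\theta F_\theta(t)$; from Strassen's spectral theorem one only gets $\widetilde{Q}(t)\le\min_\theta F_\theta(t)$, and equality would require knowing that the quantum functionals already realize the minimum over \emph{all} spectral points---a major open problem. What CVZ actually proved (and what the paper cites as \cref{thm:weighted-cvz-formula}) is the identity for the asymptotic \emph{slice} rank itself: $\widetilde{\SR}(t)=\min_\theta F_\theta(t)$. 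Once you cite that result directly, the proof closes immediately, and your separate combinatorial upper bound $\widetilde{\SR}(t)\le\min_g\tilde\tau(\HH_{g\cdot t})$ together with the detour through $\widetilde{Q}$ becomes superfluous.

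A minor remark on your sketch of the Tao--Sawin upper bound: covering only the $p^\ast$-typical sequences from a single optimizing distribution does not cover all of $E(\HH)^n$; the actual argument partitions $E(\HH)^n$ into polynomially many type classes and covers each class from the coordinate minimizing $H(p_j)/\xi_j$ for its empirical type. This is standard and the paper simply cites it, but your one-line description would not literally work as stated.
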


\noindent
We also give a generalization of this result to the weighted asymptotic slice rank, see \cref{thm:main-weighted-slice-rank}.
Previously, such a formula was only known for free tensors~\cite{CVZ2018,CLZ-23}.
\Cref{cor:slice-rank} and its generalization follows by a direct application of \cref{thm:support-equals-quantum}, because the asymptotic slice rank and the asymptotic vertex cover number can be computed by minimization of the quantum and support functional, respectively, over the parameter~$\theta$:
we have $\widetilde{\SR}(t) = \min_{\theta\in\Theta} F_\theta(t)$ and $\tilde{\tau}(\HH) = \min_{\theta\in\Theta} \zeta^\theta(t)$~\cite{CVZ2018,Tao-Sawin-16}.

\Cref{thm:support-equals-quantum} is a special case of a result that applies to \emph{any} convex, symmetric and lower semi-continuous \emph{(l.s.c.)} function $F \colon \RR^{n_1}\times\dots\times\RR^{n_d} \to \RR \cup \{\infty\}$ on the entanglement polytope of a tensor, or, in fact, on moment polytopes associated with arbitrary actions of the group~$\GL$.
Here, by \emph{symmetric} we mean that the function~$F$ is invariant under permuting the entries within each argument, that is, under the permutation group $S_{n_1}\times\dots\times S_{n_d}$.
Our theorem states that for any such function, one obtains the same value by either minimizing~$F$ over the moment polytope~$\Delta(t)$, or by minimizing~$F$ over the support polytopes~$\Omega(g \cdot t)$ for every~$g \in \GL$ and taking the maximum value:

\begin{thm}[Minimax formula for moment polytope optimization]
\label{thm:moment-polytope-general-q}
Let $F \colon \RR^{n_1} \times \dots \times \RR^{n_d} \to \RR \cup \{\infty\}$ be a convex symmetric l.s.c.\ function with $\Delta(t)\subseteq\dom F$.
Then, for any tensor $0 \neq t \in \CC^{n_1} \ot \cdots \ot \CC^{n_d}$,
\begin{equation}
\label{eq:moment-support-duality}
    \min_{p \in \Delta(t)} F(p)
=   \max_{g \in \GL} \min_{p \in \Omega(g\cdot t)} F(p).
\end{equation}
More generally, \eqref{eq:moment-support-duality} holds for any rational representation~$\pi \colon \GL\rightarrow\GL(\VV)$ of the group $\GL$ and vector~$0\neq t\in \VV$,
where $\Delta(t)$ denotes the moment polytope and $\Omega(g \cdot t)$ the support polytope of~$\pi(g)t$.
\end{thm}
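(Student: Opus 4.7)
The plan is to establish the identity through two inequalities proved by different techniques.

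The easy direction $\min_{p\in\Delta(t)}F(p)\geq\max_{g\in\GL}\min_{p\in\Omega(g\cdot t)}F(p)$ generalizes the bound $F_\theta(t)\leq\zeta^\theta(t)$ from~\cite{CVZ2018} from the entropy to arbitrary convex symmetric $F$. Since convex symmetric functions are Schur-convex, it suffices to invoke the relation between support and moment polytopes due to~\cite{Strassen-05}: for every $g\in\GL$ and every $q\in\Delta(t)$, there exists $p\in\Omega(g\cdot t)$ with $p_j\prec q_j$ (in Schur's majorization order) for each $j\in[d]$. Schur-convexity of $F$ then gives $F(q)\geq F(p)\geq\min_{\Omega(g\cdot t)}F$, so taking the infimum over $q\in\Delta(t)$ and the supremum over $g\in\GL$ yields the desired inequality.

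The reverse inequality $\min_{p\in\Delta(t)}F(p)\leq\max_{g\in\GL}\min_{p\in\Omega(g\cdot t)}F(p)$ is the main new contribution and is where Hirai's Fenchel-type duality on Hadamard manifolds~\cite{Hirai2025} enters. The approach has three steps. First, pass to the Hadamard manifold $M\coloneqq\GL/\U\cong\PD(n_1)\times\cdots\times\PD(n_d)$ equipped with its natural nonpositive-curvature symmetric-space metric. By the Kempf--Ness correspondence, $\min_{p\in\Delta(t)}F(p)=\inf_{g\in M}h(g)$, where $h(g)\coloneqq F(\spec\rho_1(g\cdot t),\ldots,\spec\rho_d(g\cdot t))$; by the convexity and symmetry of $F$ combined with the Mumford--Kempf convexity properties (as extended by Hirai), $h$ is a geodesically convex l.s.c.\ function on $M$. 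Second, apply Hirai's Fenchel-type duality to $h$; on a Hadamard manifold, this duality expresses $\inf_M h$ as a supremum over the ideal boundary $\partial_\infty M$ in terms of asymptotic slopes (Busemann functions) of $h$ along each geodesic ray. Third, identify this dual expression with $\max_g\min_\Omega F$: by the Cartan decomposition, every rational geodesic ray in $M$ is conjugate by some $g\in\GL$ to a ray in the diagonal torus $T$, and the asymptotic slope of $h$ along such a ray becomes, via an LP-duality computation, the minimum of $F$ over the torus moment polytope of $g\cdot t$, which is precisely $\Omega(g\cdot t)$. Taking the supremum over the conjugating $g$ matches the right-hand side.

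The main obstacle is the third step: faithfully translating the abstract Hirai dual, phrased in terms of horofunctions on $M$, into the concrete combinatorial formula involving support polytopes. The key calculation is to recognize that the asymptotic linear behavior of $h$ along a one-parameter subgroup $\exp(tX)$ is controlled by the $X$-highest-weight degeneration of $g\cdot t$, which in turn identifies the slope with a linear program over $\Omega(g\cdot t)$. Additional technical care is required to handle the mere lower semicontinuity of $F$, the possible non-attainment of the infimum over $\GL\cdot t$ (requiring passage to $\overline{\GL\cdot t}$ and a closure argument), and the generalization to arbitrary rational representations $\pi\colon\GL\to\GL(\VV)$, where the proof is structurally identical after replacing the quantum marginals and the support polytope by their analogs for $\pi$: the moment map for $\pi$ and the weight decomposition of $\VV$ under the maximal torus of $\GL$.
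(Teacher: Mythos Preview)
Your outline has the right ingredients (Kempf--Ness, Hirai's duality, reduction to torus via geodesic rays), but step~2 of the hard direction rests on a claim that does not hold and misidentifies what Hirai's theorem actually says. You assert that $h(x) \coloneqq F(\spec\rho_1(x\cdot t),\ldots,\spec\rho_d(x\cdot t))$ is geodesically convex on $\PD$ and then ``apply Hirai's Fenchel-type duality to $h$'' to write $\inf h$ as a supremum of asymptotic slopes of $h$. Neither part is correct. First, $h$ is in general \emph{not} geodesically convex: already in the Euclidean toy model, $F(\nabla f)$ need not be convex for convex $F$ and convex $f$ (take $f(y_1,y_2)=e^{y_1}+e^{y_2}$ and $F(p_1,p_2)=C(p_1-p_2)^2+p_1^2+p_2^2$ for large $C$). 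Second, the duality in \cite{Hirai2025} is not a generic ``$\inf h = \sup$ over horofunctions'' statement for geodesically convex $h$; it is specifically a duality for expressions of the form $Q(\diff f(x))$ with $f$ geodesically convex and $Q$ a parallel-transport-invariant convex function on the cotangent bundle, and the dual is $\sup_\gamma\bigl(-f^\infty(\gamma)-Q^*(-\dot\gamma)\bigr)$, which separates the recession of $f$ from the Legendre--Fenchel conjugate of $Q$. This structure is essential and is absent from your proposal.

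The paper exploits precisely this structure: one writes $h(x)=Q(\diff f_t(x))$ where $f_t(x)=\log\langle t,x\cdot t\rangle$ is the geodesically convex Kempf--Ness function and $Q$ is the parallel-transport-invariant extension of $F_\rmH=F\circ\spec$. Hirai's theorem on $\PD$ then gives
\[
\inf_{x\in\PD}Q(\diff f_t(x))=\sup_{\gamma\in\Gamma(\PD)}-f_t^\infty(\gamma)-Q^*(-\dot\gamma).
\]
Parametrizing rays as $\gamma(s)=g^\dagger e^{\diagmat(q+sp)}g$, one recognizes $f_t^\infty(\gamma)=(f_t^g)^\infty(s\mapsto q+sp)$ with $f_t^g(y)=f_t(g^\dagger e^{\diagmat y}g)$ convex on $\RR^N$, and $Q^*(-\dot\gamma)=F^*(-p)$. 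The crucial trick is now to apply Hirai's duality a \emph{second} time, in the reverse direction and in the Euclidean setting (which is just ordinary Fenchel duality), to collapse the inner $\sup_{p,q}$ back into $\inf_{y\in\RR^N}F(\nabla f_t^g(y))$. Since $\nabla f_t^g$ is the torus moment map of $g\cdot t$, this equals $\min_{\Omega(g\cdot t)}F$. Your step~3 gestures at ``an LP-duality computation,'' but it is exactly this second application of the asymptotic duality that turns the Hirai dual on $\PD$ into the support-polytope minimum, and it only works because the dual separates into $f^\infty$ and $Q^*$ pieces. Without the $Q(\diff f)$ decomposition you cannot get started. (As a minor point, the paper does not prove the two inequalities separately; both come out of this chain of equalities, so your ``easy direction'' via Strassen's majorization relation, while plausibly correct, is not needed.)
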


\noindent
We prove \cref{thm:moment-polytope-general-q} as \cref{thm:generalpolytope} in \cref{sec:moment-poly}, where we give all relevant definitions and also show the maximization can be restricted to the unitary subgroup $\U \coloneqq \U(n_1)\times\dots\times \U(n_d)$ (\cref{rem:K general mopo}).
\Cref{thm:support-equals-quantum} is a direct consequence of \cref{thm:moment-polytope-general-q} for the choice~$F(p)\coloneqq -\sum_{i}\theta_iH(p_i)$.
In \cref{sec:applications} we show that \cref{thm:moment-polytope-general-q} captures and relates several further tensor invariants, including the non-commutative rank~\cite{Fortin-Rautenauer-04, GGOW16, AZGLOW, IQS, IQS2, DM-oc, DM-poly, Hirai_ncrank}, the $G$-stable rank~\cite{Derksen-22, Hirai2025}, and the symmetric quantum functional~\cite{CFTZ-22}, leading to new connections and~interpretations.

The proof of \cref{thm:moment-polytope-general-q} relies on a powerful duality theorem for convex optimization on Hadamard manifolds due to Hirai~\cite{Hirai2025}.
A \emph{Hadamard manifold} is a complete, simply-connected Riemannian manifold with non-positive curvature.
Just like in Euclidean space, Hadamard manifolds have the convenient property that any two points are connected by a unique geodesic (shortest path).
On such manifolds $\MM$, we consider \emph{geodesically convex} functions $f \colon \MM\rightarrow\RR$ that are convex along every geodesic in~$\MM$.
Hirai's recent result establishes a strong Fenchel-type duality in this setting.
Specifically, assume we have a twice differentiable geodesically convex function $f\colon\MM\rightarrow\RR$, and a function $Q\colon T^* \MM\rightarrow\RR$, defined on the cotangent bundle of $\MM$, such that for every base point $x\in \MM$, $Q_x\coloneqq Q|_{T^*_x\MM}$ is a convex l.s.c.\ function with bounded sublevel sets.
In addition, $Q$ should be invariant under \emph{parallel transport}, meaning that~$Q_y$ can be obtained from~$Q_x$ by transporting cotangent vectors along curves from~$x$ to~$y$.
In particular~$Q$ is already fully determined by~$Q_x$ for any fixed base point~$x\in\MM$.
Under these assumptions, Hirai constructs a dual program to the minimization problem
\[
\inf_{x\in \MM} Q(\diff f(x))
\] in terms of the Legendre--Fenchel conjugate of $Q$ and the \emph{recession function} of $f$.
Hirai proved that the weak duality always holds (the infimum is at least the supremum), and proved that when $\MM$ is a product of manifolds of positive definite matrices and Euclidean spaces, there is no duality gap, i.e., the infimum is equal to the supremum of the dual~\cite[Thm.~3.19]{Hirai2025}.
In our context, we apply Hirai's theorem to $\MM=\PD\coloneqq\PD(n_1)\times\dots\times\PD(n_d)$, where $\PD(n)$ denotes the manifold of $n\times n$ positive definite Hermitian matrices (equipped with the affine-invariant Riemannian metric).
Since $\PD$ is an open subset of $\Herm(n_1) \times \dots \times \Herm(n_d)$, the cotangent space of $\PD$ at the identity element is naturally identified as $T_I^* \PD\cong \Herm(n_1) \times \dots \times \Herm(n_d)$.
Then parallel transport invariance property of~$Q$ translates to the assumption that $Q_I \colon \Herm(n_1) \times \dots \times \Herm(n_d) \rightarrow\RR$ is $\U$-invariant, where $\U \coloneqq \U(n_1)\times\dots\times \U(n_d)$, that is, it only depends on the eigenvalues of the Hermitian matrices.
In other words, there exists a function $F \colon \RR^{n_1} \times \cdots \times \RR^{n_d} \rightarrow\RR\cup\{\infty\}$ such that
\[
    Q_I (H_1,\dots,H_d) \equiv F(\spec(H_1),\dots,\spec(H_d)).
\]
Hirai's assumption on~$Q$ translates to the assumption that~$F$ is a convex symmetric l.s.c.\ function with bounded sublevel sets, and conversely any such~$F$ gives rise to a function~$Q$ satisfying Hirai's assumptions, via the above formula.
In this setting, we use Hirai's strong duality theorem to prove the following result, which implies \cref{thm:moment-polytope-general-q} and generalizes it beyond moment polytopes:

\begin{thm}[Minimax formula for gradient optimization]
\label{thm:general-theorem-intro}
Let $F \colon \RR^{n_1} \times \cdots \times \RR^{n_d} \to \RR\cup\{\infty\}$ be a convex symmetric l.s.c.\ function with bounded sublevel sets,
and let $f \colon \PD \to \RR$ be a twice-differentiable geodesically convex function such that $\diff f(\PD) \subseteq \dom Q$, with $Q \colon T^* \PD\rightarrow\RR\cup\{\infty\}$ as above. Then,
\[
\inf_{x\in \PD} Q(\diff f(x)) = \sup_{g\in \GL} \inf_{y\in\RR^{n_1} \times \dots \times \RR^{n_d}}  F(\nabla f^g(y)),
\]
were $f^g \colon \RR^{n_1} \times \cdots \times \RR^{n_d}\rightarrow\RR$ for $g\in\GL$ denotes the function $f^g(y) \coloneqq f(g_1^\dag e^{\diagmat y_1}g_1, \dots, g_d^\dag e^{\diagmat y_d}g_d)$.
\end{thm}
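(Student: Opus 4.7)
The plan is to deduce the identity from Hirai's Fenchel-type duality theorem \cite[Thm.~3.19]{Hirai2025} applied to the Hadamard manifold $\PD = \PD(n_1) \times \cdots \times \PD(n_d)$ with the affine-invariant metric. First, I would verify the hypotheses: each factor is a Hadamard manifold, $f$ is twice differentiable and geodesically convex by assumption, and the function $Q_I(H_1,\dots,H_d) \coloneqq F(\spec H_1,\dots,\spec H_d)$ on $T_I^*\PD \cong \Herm(n_1) \times \cdots \times \Herm(n_d)$ is convex, l.s.c., and has bounded sublevel sets since $F$ does. Its $\U$-invariance is precisely what allows it to extend by parallel transport to a function $Q$ on the whole cotangent bundle satisfying Hirai's hypotheses. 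Plugging this data into Hirai's theorem identifies $\inf_{x \in \PD} Q(\diff f(x))$ with a dual program built from the Legendre--Fenchel conjugates of the fiberwise $Q_x$ and the recession behavior of $f$ along geodesic rays of $\PD$.

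The main task is then to translate this abstract dual into the explicit form $\sup_{g \in \GL} \inf_y F(\nabla f^g(y))$. The structural fact I would exploit is that $\PD$ is a symmetric space for $\GL$: the group acts transitively by isometries via $g \cdot x = g^\dag x g$, and for every $g \in \GL$ the map $y \mapsto (g_i^\dag e^{\diagmat y_i} g_i)_i$ is a totally geodesic embedding of $\RR^{n_1} \times \cdots \times \RR^{n_d}$ as a maximal flat, with every maximal flat arising in this way up to the action of $\U$. A direct chain-rule computation shows that
\[
(\nabla f^g(y))_{i,\ell} = e^{y_{i,\ell}} \bigl( g_i M_i(x)\, g_i^\dag \bigr)_{\ell\ell}, \qquad x_i = g_i^\dag e^{\diagmat y_i} g_i,
\]
where $M_i(x)$ is the $i$-th component of $\diff f(x)$ under the trace pairing, whereas parallel-transport invariance identifies $Q(\diff f(x)) = F\bigl(\spec(x_1^{1/2} M_1(x) x_1^{1/2}),\dots,\spec(x_d^{1/2} M_d(x) x_d^{1/2})\bigr)$. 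Since the matrix $g_i M_i g_i^\dag e^{\diagmat y_i}$ is similar to the Hermitian matrix $e^{\diagmat y_i/2} g_i M_i g_i^\dag e^{\diagmat y_i/2}$ and shares its diagonal entries with it, the Schur--Horn majorization (applied blockwise) combined with $F$ being symmetric and convex yields the pointwise inequality $F(\nabla f^g(y)) \leq Q(\diff f(x))$ along each flat, which is the key quantitative link between the two sides.

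The heart of the argument is then to convert Hirai's abstract dual precisely into the outer $\sup_g$ together with the inner $\inf_y$. My plan is to parametrize geodesic rays in $\PD$ by pairs $(g, y)$ with $g \in \GL$ selecting a maximal flat and $y$ a direction inside it, and to verify that the recession of $f$ along such a ray, combined with the fiberwise Fenchel conjugate of $Q$, collapses (after use of $\U$-invariance and parallel transport) to $F(\nabla f^g(y))$. The main obstacle I anticipate will be carrying out this matching rigorously: tracking the asymptotic behavior of $\nabla f^g(y)$ as $y \to \infty$ in various directions, where Hirai's recession terms intervene; verifying that the Fenchel conjugation on each fiber is compatible with the $(g,y)$-parametrization of flats; and checking the attainment and non-degeneracy conditions of Hirai's theorem in our setting. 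Once this translation is in place, Hirai's strong duality directly yields the claimed minimax identity.
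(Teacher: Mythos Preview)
Your overall strategy---apply Hirai's asymptotic duality on $\PD$ and then unwind the dual---is exactly what the paper does. The verification of hypotheses, the explicit formula for $\nabla f^g$, and the Schur--Horn argument for the easy inequality are all fine (and the last is a pleasant bonus, though the paper does not use it, since weak duality is already contained in Hirai's theorem).

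Where your plan goes off track is the translation step. You propose to parametrize geodesic rays by pairs $(g,y)$ with $y$ a \emph{direction}, and then to check that the dual term for such a ray ``collapses to $F(\nabla f^g(y))$.'' This is not what happens, and it cannot: for a single ray $\gamma(t)=g^\dag e^{\diagmat(q+tp)}g$ the dual contribution is $-f^\infty(\gamma)-Q^*(-\dot\gamma)= -(f^g)^\infty(t\mapsto q+tp)-F^*(-p)$, which has no reason to equal $F(\nabla f^g(\cdot))$ at any particular point. The direction $p$ of a ray and a \emph{point} $y$ in the flat play dual roles, not the same role. Accordingly, the ``obstacles'' you anticipate (tracking $\nabla f^g(y)$ as $y\to\infty$, compatibility of fiberwise conjugation with your parametrization) are symptoms of aiming at the wrong identity.

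The missing idea is that, after applying Hirai's duality on $\PD$ and parametrizing rays as above, one should \emph{apply Euclidean Fenchel duality a second time, in reverse}, on each flat. Concretely: using $Q^*(-\dot\gamma)=F^*(-p)$ (which follows from Davis--Lewis together with unitary invariance) and the tautology $f^\infty(\gamma)=(f^g)^\infty(t\mapsto q+tp)$, the supremum over all rays in the flat indexed by $g$ reads $\sup_{p,q}\bigl[-(f^g)^\infty(t\mapsto q+tp)-F^*(-p)\bigr]$, and this is precisely the Euclidean instance of the same duality (or ordinary Fenchel duality) applied to the convex function $f^g$ on $\RR^N$, yielding $\inf_{y}F(\nabla f^g(y))$. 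Taking the outer $\sup_g$ finishes the proof in one line. Once you see this, the recession/conjugate bookkeeping you were worried about disappears.
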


\noindent
We prove \cref{thm:general-theorem-intro} as \cref{thm:generaltheorem} in \cref{sec:general}, where we also note that the supremum over~$\GL$ can be reduced to~$\U$ (\cref{rem:K_general}).
\Cref{thm:moment-polytope-general-q} is obtained from \cref{thm:general-theorem-intro} by specializing~$f$ to the \emph{Kempf--Ness function} of~$t$; this is the geodesically convex function $f_t \colon \PD\rightarrow\RR$ defined by $f_t(x) = \log \langle t, x \cdot t \rangle$, where $\langle\cdot,\cdot\rangle$ is the~$\ell^2$-inner product.
The derivative of the Kempf--Ness function at the identity is known as the \emph{moment map}; its image determines the moment polytope~$\Delta(t)$, see~\cref{sec:moment-poly}.

\subsection{Outlook}
We indicate some directions for future research.
While our results are stated for actions of $\GL = \GL(n_1) \times \dots \times \GL(n_d)$ and geodesically convex function on $\PD = \PD(n_1)\times\dots\times\PD(n_d)$, their proofs are general.
Hirai conjectured that strong duality holds for every Hadamard manifold~$\MM$ \cite[Conj.~3.18]{Hirai2025}.
If this conjecture is true, \cref{thm:moment-polytope-general-q} generalizes to arbitrary rational representations $\pi \colon G\rightarrow\GL(\VV)$ of self-adjoint subgroups~$G\leq\GL(n)$.
Likewise, \cref{thm:general-theorem-intro} generalizes to the symmetric space~$\MM\coloneqq\{g^\dag g\mid g\in G\}$, which is a totally geodesic submanifold of~$\PD(n)$, and indeed admits a natural generalization to any Hadamard manifold.

It is interesting that \cref{thm:support-equals-quantum} leads to a short and direct proof that the quantum functionals are universal spectral point.
We speculate that the general minimax formula in \cref{thm:general-theorem-intro} might be a useful ingredient for the construction of spectral points.

Finally, we comment on the problem of \emph{computing} the quantum functionals, or, more generally, computing $\min_{p\in\Delta(t)}F
(p)$ for a convex function~$F$ by efficient algorithms.
If one has access to the moment polytope (either as a convex hull of vertices, an intersection of half-spaces or by a membership or separation oracle), then the minimum can be computed by standard methods of convex optimization.
No such simple description is known that is efficient in all parameters, but see \cite{GGOW16,burgisser2018efficient,BFGOWW_FOCS2019,hirai2023interior} for progress.
Hirai showed that there is a natural \emph{$Q$-gradient}~$\nabla^Q f$ in the setting described above and proved that the corresponding flow $\dot{x} = - \nabla^Q f(x)$ minimizes~$Q(\diff f(x))$~\cite{Hirai2025}, generalizing an earlier analysis of the ordinary gradient flow~\cite{HiraiSakabe2024}.
In~\cite{DSW-26}, we show that Hirai's flow can be discretized and analyzed for a natural class of~$Q$ and~$f$.
In particular, we find the quantum functionals~$F_\theta(t)$ can be computed by a simple and natural \emph{entropic scaling algorithm}: $t \leftarrow ( \rho_1^{-\theta_1/2} \otimes \rho_2^{-\theta_2/2} \otimes \dots\otimes \rho_d^{-\theta_d/2}) t$, where~$\rho_j = {t_j^\dag t_j}/{\norm{t}^2}$ for~$j\in [d]$, which converges in a number of steps that is polynomial in $n_1,\dots,n_d$ and the bit-length of~$t$.
It would be interesting to use the minimax formula in \cref{thm:moment-polytope-general-q} to obtain an alternative approach for this problem.

\section{Preliminaries}
\label{sec:prelim}
In this paper, we write $[n] \coloneqq \{1, 2, \ldots, n\}$.
For a subset~$S \subseteq \RR^n$, let $\overline{S}$ denote its closure and $\conv S$ denote its convex hull.
We write $\RR^n_{\downarrow} \coloneqq \{(y_1, \ldots, y_n) \in \RR^n \mid y_1 \ge \cdots \ge y_n\}$ for the closed convex set of vectors that are sorted in non-increasing order.

\subsection{Linear algebra}
All vector spaces are finite-dimensional by assumption.
For a $\CC$-vector space $\VV$, $\Lin(\VV)$ denotes the set of linear operators on~$\VV$, and $\GL(\VV) \subseteq \Lin(\VV)$ denotes the group of invertible ones.
If $\VV = \CC^n$, we simply write $\Lin(n)$ and $\GL(n)$, which are identified by the set of $n \times n$ (nonsingular) matrices.
We denote by~$I \in \GL(n)$ the identity matrix.
Let $\Herm(n) \subseteq \Lin(n)$ and $\U(n) \subseteq \GL(n)$ denote the set of Hermitian and unitary matrices, respectively.
For a vector $y \in \CC^n$, $\diagmat y \in \Lin(n)$ denotes the diagonal matrix whose $i$-th diagonal entry is~$y_i$, while if~$A \in \Lin(n)$ is a matrix then $\diagvec A \in \CC^n$ denotes the vector consisting of its diagonal entries.
For $A \in \Lin(n)$, let $e^{A} \in \GL(n)$ denote the matrix exponential.
Let $\spec X \in \RR^n_{\downarrow}$ denote the spectrum of $X \in \Herm(n)$, i.e., the vector of sorted eigenvalues of $X$.
Let $X \oplus Y \in \Lin(m + n)$ denote the block-diagonal matrix with diagonal blocks~$X \in \Lin(m)$ and~$Y \in \Lin(n)$.

\subsection{Hadamard manifolds}\label{sub:hadamard_manifolds}
For a Riemannian manifold $\MM$, let $T_x\MM$ denote the tangent space at $x \in \MM$, $T^*_x\MM$ denote the cotangent space, and $T\MM \coloneqq \sqcup_{x \in \MM}T_x\MM$ and $T^*\MM \coloneqq \sqcup_{x \in \MM}T_x^*\MM$ denote the (co)tangent bundles.
Let~$\tau_{x \to y}: T_x\MM \to T_y\MM$ denote the parallel transport along the (unique) geodesic, and $\tau^*_{x \to y}: T^*_y \to T^*_x$ be its dual map, which parallel transports cotangent vectors.
Let $\Gamma(\MM)$ denote the set of all geodesic rays, i.e., geodesic curves defined on~$[0,\infty)$.

A \emph{Hadamard manifold} is a complete, simply-connected Riemannian manifolds with non-positive curvature.
We do not introduce the general definition of these concepts on manifolds, since we only consider Euclidean spaces and products of positive definite matrices, which we discuss in the following.

\paragraph{Euclidean space:}
In the case of the Euclidean space $\MM = \RR^n$ with its usual structure as a Riemannian manifold, the (co)tangent spaces $T_x\MM$ and $T^*_x\MM$ are identified with $\RR^n$, and the parallel transport is the identity map.
A geodesic ray $\gamma(t) \in \Gamma(\RR^n)$ is a line $\gamma(t) = q + tp$ ($q, p \in \RR^n$).
For a function $f: \RR^n \to \RR \cup \{\infty\}$, the gradient at $x$ is denoted by $\nabla f(x) \in \RR^n$, if it exists.

\paragraph{Positive-definite matrices with the affine-invariant Riemannian metric:}
Let $\PD(n)$ denote the set of $n \times n$ positive-definite Hermitian matrices.
It is an open subset of $\Herm(n)$, and hence we can canonically identify the tangent space $T_x \PD(n) \cong \Herm(n)$ at every point $x \in \PD(n)$.
Under this identification, the \emph{affine-invariant Riemannian metric} on $\PD(n)$ is given by~$\langle X,Y\rangle_x := \tr[x^{-1} X x^{-1} Y]$ for $x \in \PD(n)$ and $X, Y \in \Herm(n)$ and turns $\PD(n)$ into a Hadamard manifold~(e.g., \cite[\S{}II.10]{BridsonHaefliger1999}).
Every geodesic ray $\gamma \in \Gamma(\PD(n))$ is of the form $\gamma(t) = g^\dagger e^{\diagmat(q + tp)}g$ for~$g \in \GL(n)$, $q, p \in \RR^n$, and the parallel transport along the geodesic from the identity~$I \in \PD(n)$ to any $x \in \PD(n)$ is given by~$\tau_{I\to x}(X) = x^{1/2} X x^{1/2}$, where $X \in \Herm(n) \cong T_I \PD(n)$.
We similarly identify the cotangent spaces~$T_x^* \PD(n) \cong \Herm(n)^* \cong \Herm(n)$, where the first isomorphism is canonical from the identification $T_x \PD(n) \cong \Herm(n)$ and the second isomorphism uses the Hilbert-Schmidt inner product~$\langle X, Y\rangle_{\Herm(n)} \coloneqq \tr[XY]$.
Then the parallel transport of cotangent vectors along the geodesic from~$x$ to~$I$ is given by $\tau^*_{I \to x}(\hat X) = x^{1/2} \hat X x^{1/2}$, where $\hat X \in \Herm(n) \cong T_x^* \PD(n)$.
For a differentiable function $f : \PD(n) \to \RR$, its differential at $x \in \PD(n)$ is denoted by $\diff f(x) \in T^*_x\PD(n)$.

We also define the direct product $\PD \coloneqq \PD(n_1) \times \cdots \times \PD(n_d)$.
By embedding $\rmH \coloneqq \Herm(n_1) \op \cdots \op \Herm(n_d)$ as block-diagonal matrices in~$\Herm(N)$, $N \coloneqq n_1 + \cdots + n_d$, we consider $\PD$ as a totally geodesic submanifold of $\PD(N)$.
Hence it is also a Hadamard manifold, its (co)tangent spaces identify with the subspace $\rmH \subseteq \Herm(N)$, and all of the above formulas for $\PD(n)$ can also be used for $\PD$.

\subsection{Convex functions and Legendre--Fenchel conjugate}
A function $f\colon \RR \to \RR \cup \{\infty\}$ on the real line is called \emph{convex} if $f(tx + (1 - t)y) \le tf(x) + (1 - t)f(y)$ for all $x, y \in \RR$ and $t\in[0,1]$.
A function $f\colon \MM \to \RR \cup \{\infty\}$ on a Hadamard manifold~$\MM$ is called (geodesically) \emph{convex} if $f \circ \gamma$ is convex for every geodesic $\gamma \colon \RR \to \MM$.
The domain of a convex function is defined as $\dom f \coloneqq \{x \in \MM \mid f(x) <\infty\}$.

When $\MM = E$ is an $\RR$-vector space, the above coincides with the usual notion of convexity.
In this setting, for a convex function $f\colon E \to \RR \cup \{\infty\}$, the \emph{Legendre--Fenchel conjugate} $f^*\colon E^* \to \RR \cup \{\infty\}$ is defined as $f^*(\alpha) \coloneqq \sup_{y \in E}\alpha(y) - f(y)$, where $E^*$ is the dual space of~$E$.
If $E$ is a Euclidean space with inner product $(\cdot,\cdot)$, then this can also be thought of as a function~$f^* \colon E \to \RR\cup\{\infty\}$ on~$E$, given by
$f^*(x) \coloneqq \sup_{y \in E}(x,y) - f(y)$.
We will always be in this situation: either $E = \RR^n$ with the standard inner product, or $E = \rmH \subseteq \Herm(N)$ with the restriction of the Hilbert-Schmidt inner product.

\section{A minimax formula for gradient optimization}\label{sec:general}
In this section, we present our general minimax formula (\cref{thm:general-theorem-intro}, restated below as \cref{thm:generaltheorem}).
Its proof relies on a recent Fenchel-type duality theorem for convex optimization on Hadamard manifolds, referred to as ``asymptotic duality'' by Hirai~\cite{Hirai2025}.
Hirai's theorem applies to a broader class of Hadamard manifolds, but we only require it in the cases of~$\PD$ and $\RR^N$ (\cref{sub:hadamard_manifolds}).
In the following, we first discuss Hirai's result concretely in this setting and then state our theorem and its proof.

Throughout this section, let $n_1, \ldots, n_d$ be integers and $N \coloneqq n_1 + \cdots + n_d$.
For a function $F \colon \RR^{n_1} \times \cdots\times \RR^{n_d} \cong \RR^N \to \RR \cup \{\infty\}$, consider the following properties:
\begin{enumerate}[leftmargin=4em,noitemsep,label={(F\arabic*)}]
    \item\label{it:F1} $F$ is convex and lower-semicontinuous (l.s.c.).
    \item\label{it:F2} $F$ is symmetric in each argument, i.e., $F(P_1x_1, \ldots, P_dx_d) = F(x_1, \ldots, x_d)$ for all permutation matrices~$P_1, \ldots, P_d$.
    \item\label{it:F3} $F$ has bounded sublevel sets.
\end{enumerate}
As in \cref{sub:hadamard_manifolds}, we let $\rmH \coloneqq \Herm(n_1) \oplus \cdots \oplus \Herm(n_d) \subseteq \Herm(N)$, where the inclusion is by embedding as block-diagonal matrices.
We can extend $F$ to tuples of Hermitian matrices in~$\rmH$ by taking the eigenvalues of each matrix, as follows:
\begin{equation}\label{eq:dfn_fH}
    F_\rmH \colon \rmH \to \RR \cup \{\infty\}, \quad
    F_\rmH(X_1, \ldots, X_d) \coloneqq F(\spec X_1, \ldots, \spec X_d).
\end{equation}
In short, $F_\rmH(X) = F(\spec X)$, where $\spec$ here acts argument-wise.
It is clear that $F_\rmH$ is unitarily invariant, i.e., $F_\rmH(u_1^\dagger X_1u_1, \ldots, u_d^\dagger X_du_d) = F_\rmH(X_1, \ldots, X_d)$ for all $u_i \in \U(n_i)$.
In short, $F_\rmH(u^\dagger Xu) = F_\rmH(X)$ for any $u \in \U \coloneqq \U(n_1) \times \cdots \times \U(n_d) \subseteq \U(N)$.
Moreover, the following properties hold:
\begin{prop}[{\cite{Davis1957,Lewis1996}}\footnote{These references prove these properties in the case of~$d = 1$, but their proofs generalize readily to~$d \ge 2$.}]\label{prop:symmetricconvex}
    Let $F\colon \RR^N \to \RR \cup \{\infty\}$ be a function that satisfies~\ref{it:F1} and~\ref{it:F2}.
    Then, $F_\rmH\colon \rmH \to \RR \cup \{\infty\}$ satisfies the following properties:
    \begin{enumerate}
        \item[(a)] $F_\rmH$ is convex and lower-semicontinuous function on $\rmH$.
        \item[(b)] The Legendre--Fenchel conjugate is given by $(F_\rmH)^* = (F^*)_\rmH$ (so we can simply write $F_\rmH^*$).
    \end{enumerate}
\end{prop}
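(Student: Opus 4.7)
The statement is a classical spectral/unitarily-invariant-function result, essentially due to Davis and Lewis; I would treat parts (a) and (b) separately, reducing each to well-known majorization or trace inequalities, applied block by block across the direct-sum structure $\rmH=\Herm(n_1)\oplus\cdots\oplus\Herm(n_d)$.

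For part (a), lower-semicontinuity is immediate: $F_\rmH = F\circ\spec$, where the map $\spec\colon\rmH\to\RR^{n_1}\times\cdots\times\RR^{n_d}$ (applied argument-wise) is continuous, since eigenvalues depend continuously on the matrix and sorting into non-increasing order is continuous. Hence $F_\rmH$ is the composition of a continuous map with an l.s.c.\ function. Convexity is the main content; I would deduce it from two ingredients: (i) the Ky Fan/Lidskii majorization $\spec(tX_i+(1-t)Y_i)\prec t\spec X_i+(1-t)\spec Y_i$ in each block, and (ii) a multi-argument version of the Hardy--Littlewood--Polya/Rado principle, namely that block-wise majorization $\lambda_i\prec\mu_i$ (for all $i$) implies $F(\lambda)\leq F(\mu)$. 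The latter follows by writing each $\lambda_i$ as a convex combination of permutations of $\mu_i$, expanding the product over blocks, and using the per-argument symmetry \ref{it:F2} together with convexity of $F$. Chaining (i), (ii), and ordinary convexity of $F$ then yields
\begin{align*}
    F_\rmH(tX+(1-t)Y)
    &\leq F\bigl(t\spec X+(1-t)\spec Y\bigr) \\
    &\leq tF(\spec X)+(1-t)F(\spec Y) \\
    &= tF_\rmH(X)+(1-t)F_\rmH(Y).
\end{align*}

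For part (b), I would compute $(F_\rmH)^*$ directly using von Neumann's trace inequality: in each block, $\sup\{\tr(X_iY_i):\spec X_i=\lambda_i\}=\langle\lambda_i,\spec Y_i\rangle$ when $\lambda_i\in\RR^{n_i}_{\downarrow}$. Since the Hilbert--Schmidt inner product on $\rmH$ decomposes as $\langle X,Y\rangle=\sum_i\tr(X_iY_i)$, this yields
\begin{equation*}
    (F_\rmH)^*(Y)
    = \sup_{X\in\rmH}\bigl(\langle X,Y\rangle-F(\spec X)\bigr)
    = \sup_{\lambda_i\in\RR^{n_i}_{\downarrow}}\biggl(\sum_{i=1}^d\langle\lambda_i,\spec Y_i\rangle-F(\lambda)\biggr).
\end{equation*}
On the other hand, $F^*(\spec Y)=\sup_{\mu\in\RR^N}\langle\mu,\spec Y\rangle-F(\mu)$, and by the rearrangement inequality together with the per-argument symmetry of $F$, this supremum is attained on tuples with each $\mu_i$ sorted non-increasingly. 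Comparing the two expressions gives $(F_\rmH)^*(Y)=F^*(\spec Y)=(F^*)_\rmH(Y)$.

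The only nontrivial step is the convexity in part (a); everything else is immediate or follows from classical trace and majorization identities. The crux is combining the block-wise Ky Fan/Lidskii majorization with the joint symmetry assumption \ref{it:F2}, but this is a clean extension of the standard Hardy--Littlewood--Polya argument to the product setting.
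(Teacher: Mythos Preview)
Your proposal is correct and follows precisely the classical Davis--Lewis argument that the paper cites but does not reproduce (the paper gives no proof of its own, only the citation with the remark that the $d=1$ argument extends block-wise). Both the majorization route for convexity in~(a) and the von Neumann trace inequality computation for~(b) are exactly the intended proofs.
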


\noindent On the other hand, it is easy to verify that every $\U$-invariant l.s.c.\ convex function $F_\rmH$ having bounded sublevel sets is obtained in this manner by taking~$F \coloneqq F_\rmH(\diagmat(\cdot))$.

Finally, we let $\PD \coloneqq \PD(n_1) \times \cdots \times \PD(n_d)$ be equipped with the affine-invariant metric, as explained in \cref{sub:hadamard_manifolds}.
We can then define a function $Q \colon T^*\PD \to \RR \cup \{\infty\}$ on the cotangent bundle (called a ``norm-like function'' in~\cite{Hirai2025}) by the formula
\begin{equation}\label{eq:dfn_QF}
    Q(\hat Y) \coloneqq Q_x(\hat Y) \coloneqq F_\rmH\mleft(\tau_{I \to x}^*\hat Y\mright) = F_\rmH\mleft(x^{1/2}\hat Yx^{1/2}\mright) \qquad (x \in \PD, \hat Y \in T^*_x\PD \cong \rmH),
\end{equation}
where $Q_x\coloneqq Q|_{T^*_x\PD}$ denotes the restriction of~$Q$ to the cotangent space at~$x$ and we use the identification~$T^*_x\PD \cong \rmH$ from \cref{sub:hadamard_manifolds}.
Define $Q^*\colon T\PD \to \RR \cup \{\infty\}$ by taking the Legendre--Fenchel conjugate on each cotangent space, i.e., $Q^*_x \coloneqq Q^*|_{T^*_x\PD} \coloneqq (Q_x)^*$, and using $(T^*_x\PD)^* \cong T_x\PD$.
Concretely, this is given by
\begin{align*}
    Q^*(X)
&= Q^*_x(X)
= \sup_{\hat Y \in T^*_x\PD} \hat Y(X) - Q(\hat Y)
= \sup_{\hat Y \in \rmH} \tr[X \hat Y] - F_\rmH\mleft(x^{1/2}\hat Yx^{1/2}\mright) \\
&= \sup_{\hat Z \in \rmH} \tr\mleft[x^{-1/2} X x^{-1/2} \hat Z \mright] - F_\rmH(\hat Z)
    = F_\rmH^*\mleft(x^{-1/2}Xx^{-1/2}\mright) \qquad (x \in \PD, X \in T_x\PD \cong \rmH).
\end{align*}
In particular, for any geodesic ray $\gamma(t) \coloneqq g^\dagger e^{\diagmat q + tp}g$, the value at $-\dot\gamma(t) \in T_{\gamma(t)}\PD$ is given by
\begin{equation}\label{eq:Qstarminus}
    Q^*(-\dot\gamma(t)) = F_H^*\mleft(-(g^\dagger e^{\diagmat q + tp}g)^{-1/2}g^\dagger e^{\diagmat q + tp}\diagmat pg(g^\dagger e^{\diagmat q + tp}g)^{-1/2}\mright) = F_\rmH^*(-\diagmat p),
\end{equation}
where the last equality follows from the unitary invariance of $F_\rmH$ combined with the fact that $h (h^\dagger h)^{-1/2}$ is unitary for any $h \in \GL(N)$.
Since the right-hand side does not depend on~$t$, we can define
\begin{align*}
    Q^*(-\dot\gamma) \coloneqq Q^*(-\dot\gamma(t)).
\end{align*}

Let $f \colon \PD \to \RR$ be a twice-differentiable geodesically convex function satisfying $\diff f(\PD) \subseteq \dom Q$.
Hirai~\cite{Hirai2025} considers the following gradient minimization problem (in a more general setting),
\begin{equation}\label{eq:infproblem}
    \inf_{x \in \PD} Q(\diff f(x)),
\end{equation}
and shows that its optimal value can be expressed by a dual maximization problem.
To state his result, define the \emph{recession function} $f^\infty(\gamma)$~\cite{Hirai2024,KLM2009} on the space of geodesic rays as
\[
    f^\infty \colon \Gamma(\PD) \to \RR \cup \{\infty\}, \quad
    f^\infty(\gamma) \coloneqq \lim_{t \to \infty}\frac1tf(\gamma(t))
= \lim_{t \to \infty}\frac1t \left( f(\gamma(t)) - f(\gamma(0)) \right),
\]
where the limit always exists in $\RR \cup \{\infty\}$ because $\frac1t(f(\gamma(t)) - f(\gamma(0)))$ is non-decreasing, due to the geodesic convexity of~$f$.
Then Hirai's result is the following:

\begin{thm}[{Asymptotic duality~\cite[Thm.~3.19]{Hirai2025}, $\PD$ version}]\label{prop:strongduality}
Let $F \colon \RR^N \to \RR$ be a function satisfying \ref{it:F1}--\ref{it:F3}, and let $f \colon \PD \to \RR$ be a twice differentiable geodesically convex function satisfying $\diff f(\PD) \subseteq \dom Q$, where $Q$ is defined as in \cref{eq:dfn_QF}.
Then, the following equality holds:
    \[
        \inf_{x \in \PD}Q(\diff f(x)) = \sup_{\gamma \in \Gamma(\PD)}-f^\infty(\gamma) - Q^*(-\dot\gamma).
    \]
\end{thm}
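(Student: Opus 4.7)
The plan is to establish the asserted equality by splitting into the two standard directions, weak duality ($\inf\ge\sup$) and strong duality ($\inf\le\sup$).

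For weak duality, I would combine the Fenchel--Young inequality with geodesic convexity at a chosen base point. Fix $x\in\PD$ and $\gamma\in\Gamma(\PD)$, and let $v\in T_x\PD$ denote the parallel transport of $\dot\gamma(0)$ to~$x$. Because $Q$ (and hence $Q^*$) is defined to be parallel-transport invariant, we have $Q^*(-v)=Q^*(-\dot\gamma)$. The Fenchel--Young inequality in the dual pair $(T_x^*\PD,T_x\PD)$ then yields $Q(\diff f(x))+Q^*(-\dot\gamma)\ge-\diff f(x)(v)$, and the geodesic convexity of~$f$ forces the initial slope $\diff f(x)(v)$ to be dominated by the asymptotic slope $f^\infty(\tilde\gamma)$ of the ray $\tilde\gamma(t)\coloneqq\exp_x(tv)$, by the standard monotonicity of difference quotients of convex functions. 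Since $\tilde\gamma$ is itself an admissible ray, combining these estimates and exploiting the compatibility of $f^\infty$ with the affine-symmetric geometry of~$\PD$ (so that recession values transfer between parallel-transport-related rays) yields weak duality upon varying $x$ and~$\gamma$.

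For strong duality, I would analyze a primal minimizing sequence $\{x_n\}\subset\PD$ for $Q(\diff f(\cdot))$ and construct a dual-optimal geodesic ray from it. The argument splits into two cases according to the behavior of $\{x_n\}$ in the Hadamard compactification $\PD\cup\partial_\infty\PD$. In the \emph{bounded case}, $\{x_n\}$ admits a limit point $x^*\in\PD$; lower semicontinuity of $Q$ makes $x^*$ a primal minimizer, and first-order optimality at~$x^*$ selects a subgradient $\xi\in\partial Q(\diff f(x^*))$ lying in the kernel of the Hessian of~$f$ at~$x^*$, so the ray $\gamma_*(t)\coloneqq\exp_{x^*}(-t\xi)$ is the natural candidate to saturate Fenchel--Young equality and realize the dual. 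In the \emph{unbounded case}, I would use the polar-type decomposition $x_n=u_n^\dagger e^{\diagmat q_n}u_n$ with $u_n\in\U$ and $q_n\in\RR^N$ to extract a limiting direction: compactness of~$\U$ yields a subsequential limit $u_*\in\U$, while normalization of~$q_n$ yields a limit $p_*\in\RR^N$, which together define the direction of a candidate ray~$\gamma_*$.

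The hard part will be the unbounded case: verifying that the extracted ray $\gamma_*$ jointly witnesses Fenchel--Young equality against the limit of $\diff f(x_n)$ \emph{and} realizes the asymptotic slope $f^\infty(\gamma_*)$, i.e., $\lim_n Q(\diff f(x_n)) = -f^\infty(\gamma_*)-Q^*(-\dot\gamma_*)$. This demands a quantitative asymptotic analysis of the primal gradient along the escape direction, coupling the convergence of $u_n$, of $q_n/\|q_n\|$, and of $\diff f(x_n)$. The natural variational tool is Hirai's $Q$-gradient flow $\dot x=-\nabla^Q f(x)$: along this flow $Q(\diff f(x))$ is monotonically non-increasing, and the trajectory either converges in $\PD$ to a minimizer or asymptotically aligns with a geodesic ray at infinity, in either case supplying the dual certificate. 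The $\U$-invariance of $F_\rmH$, the bounded-sublevel-set hypothesis on~$F$, and the affine-invariant Riemannian geometry of~$\PD$ together furnish the compactness and monotonicity needed to close the argument.
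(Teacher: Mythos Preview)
The paper does not prove this statement: \cref{prop:strongduality} is quoted verbatim from Hirai~\cite[Thm.~3.19]{Hirai2025} and invoked as a black box in the proof of \cref{thm:generaltheorem}. There is therefore no ``paper's own proof'' to compare your proposal against. The only hint the paper gives about Hirai's argument is the remark in the Outlook that the $Q$-gradient flow $\dot x=-\nabla^Q f(x)$ minimizes $Q(\diff f(x))$, which is indeed the tool you reach for in the strong-duality half.

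As a standalone sketch, your weak-duality argument has a genuine gap. You fix $x$ and $\gamma$, parallel-transport $\dot\gamma(0)$ to $v\in T_x\PD$, and set $\tilde\gamma(t)=\exp_x(tv)$. The identity $Q^*(-v)=Q^*(-\dot\gamma)$ is correct, since $Q$ is parallel-transport invariant by construction. But the step ``recession values transfer between parallel-transport-related rays'' is not: in a Hadamard manifold of nonzero curvature (and $\PD$ has strictly negative curvature off the flats), $\tilde\gamma$ is \emph{not} in general asymptotic to $\gamma$---parallel transport does not preserve points of the visual boundary---so there is no reason for $f^\infty(\tilde\gamma)=f^\infty(\gamma)$. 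The repair is to replace $\tilde\gamma$ by the unique ray from $x$ that \emph{is} asymptotic to $\gamma$, and then invoke the nontrivial fact (recorded in \cref{rem:K_general}, citing~\cite{KLM2009,Hirai2024,Hirai2025}) that both $f^\infty$ and $Q^*(-\dot{\,\cdot\,})$ depend only on the asymptotic class. Your strong-duality outline via a minimizing sequence and the $Q$-gradient flow is in the right spirit and consonant with Hirai's approach, but as written it is a heuristic rather than a proof: extracting a dual-optimal ray from a diverging primal sequence, and in particular verifying the claimed joint saturation of Fenchel--Young and the recession slope, requires the detailed analysis carried out in~\cite{Hirai2025}.
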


In the special case that $\PD = \PD(1)^n$, and using the isometry $(\PD(1))^n \cong \RR^n$ given by the entrywise logarithm, one obtains a Euclidean version of the duality theorem.
This Euclidean version can also be understood as a special case of the Fenchel's duality~\cite[Thm.~31.1]{Rockafellar1970}, and hence the twice differentiability and the condition $\diff f \subseteq \dom Q$ are no longer required; see \cite[Ex.~3.6]{Hirai2025}.
Thus:

\begin{prop}[{Asymptotic duality, Euclidean version}]\label{prop:strongduality-Euc}
    Let $F\colon \RR^N \to \RR \cup \{\infty\}$ be a convex l.s.c.\ function having bounded sublevel sets, and let $h\colon \RR^N \to \RR$ be a differentiable convex function.
    For any geodesic ray $\gamma(t) \coloneqq q + tp \in \Gamma(\RR^n)$, let $h^\infty(\gamma) \coloneqq \lim_{t \to \infty}\frac1th(q + tp)$ and $F^*(-\dot\gamma) \coloneqq F^*(-p)$.
    Then, the following equality holds:
    \[
        \inf_{y \in \RR^N}F(\nabla h(y)) = \sup_{\gamma \in \Gamma(\RR^n)}-h^\infty(\gamma) - F^*(-\dot\gamma).
    \]
\end{prop}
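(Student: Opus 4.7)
The plan is to derive the stated equality as a direct application of Rockafellar's Fenchel duality theorem \cite[Thm.~31.1]{Rockafellar1970}, as indicated in \cite[Ex.~3.6]{Hirai2025}. The central convex-analytic identity is the classical fact that the recession function of a closed proper convex function $h \colon \RR^N \to \RR$ coincides with the support function of $\overline{\dom h^*}$, namely
\[
    h^\infty(p) \;=\; \sup_{\xi \in \overline{\dom h^*}} \langle p, \xi \rangle \;=\; \delta^*_{\overline{\dom h^*}}(p),
\]
equivalently $(h^\infty)^* = \delta_{\overline{\dom h^*}}$ (see, e.g., \cite[Thm.~13.3]{Rockafellar1970}). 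Weak duality ($\inf \geq \sup$) follows immediately from Fenchel's inequality applied pointwise: for every $y$ and $p$ one has $F(\nabla h(y)) \geq \langle -p, \nabla h(y) \rangle - F^*(-p) \geq -h^\infty(p) - F^*(-p)$, using that $\langle p, \nabla h(y) \rangle \leq h^\infty(p)$ by convexity of~$h$.

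For the nontrivial direction, substituting $q = -p$ rewrites the supremum on the right-hand side as $\sup_q [-F^*(q) - \delta^*_{\overline{\dom h^*}}(-q)]$. Rockafellar's Fenchel duality applied to the pair $(F, \delta_{\overline{\dom h^*}})$ then identifies this supremum with $\inf_{x \in \overline{\dom h^*}} F(x)$. It remains to match the primal value $\inf_y F(\nabla h(y))$ with $\inf_{x \in \overline{\dom h^*}} F(x)$. One inequality is immediate, as $\nabla h(y) \in \dom h^* \subseteq \overline{\dom h^*}$ for every~$y$. For the reverse inequality I would use that (i)~$\nabla h(\RR^N) \supseteq \mathrm{ri}(\dom h^*)$, because $\xi = \nabla h(y)$ iff $y \in \partial h^*(\xi)$ and $h^*$ is subdifferentiable on the relative interior of its effective domain; and (ii)~any convex lower-semicontinuous function is continuous along line segments from a relative-interior point of its domain to a boundary point, which yields $\inf_{\xi \in \mathrm{ri}(\dom h^*) \cap \dom F} F(\xi) = \inf_{\xi \in \overline{\dom h^*}} F(\xi)$.

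The main technical obstacle, both for the application of Fenchel duality and for the density/continuity reduction on the primal side, is the standard constraint qualification $\mathrm{ri}(\dom F) \cap \mathrm{ri}(\overline{\dom h^*}) \neq \emptyset$. The hypothesis that~$F$ has bounded sublevel sets is equivalent to $F^*$ being finite and continuous on all of $\RR^N$, while the hypothesis that~$h$ is everywhere finite on $\RR^N$ ensures that $h^*$ is a proper closed convex function with nonempty $\mathrm{ri}(\dom h^*)$. Together these properties yield the constraint qualification in the regimes relevant to the paper and close the duality gap, giving the desired equality.
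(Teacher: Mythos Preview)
Your approach is exactly what the paper indicates: it gives no standalone argument but simply asserts that the Euclidean version ``can also be understood as a special case of Fenchel's duality'', citing \cite[Thm.~31.1]{Rockafellar1970} and \cite[Ex.~3.6]{Hirai2025}. Your identification $h^\infty = \delta^*_{\overline{\dom h^*}}$ and the two-step reduction---first to $\inf_{x \in \overline{\dom h^*}} F(x)$ via Fenchel--Rockafellar duality, then matching this with $\inf_y F(\nabla h(y))$ using $\mathrm{ri}(\dom h^*) \subseteq \nabla h(\RR^N) \subseteq \dom h^*$---is the natural way to unpack that citation.

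Your caution about the constraint qualification is well placed; in fact it is the only genuinely delicate point. Without any overlap between $\dom F$ and $\mathrm{ri}(\dom h^*)$ the stated equality can fail outright: for $N=1$, $h(y)=e^y$ and $F(x)=x^2$ on $(-\infty,0]$, $F(x)=+\infty$ on $(0,\infty)$, the left side is $+\infty$ while the right side equals $0$. So the paper's remark that the condition $\nabla h(\RR^N) \subseteq \dom F$ (the Euclidean analogue of $\diff f(\PD)\subseteq\dom Q$) ``is no longer required'' should be read with some care. In the only place the proposition is actually invoked---the last step of the proof of \cref{thm:generaltheorem} with $h=f^g$---that condition is in fact inherited from the hypothesis $\diff f(\PD)\subseteq\dom Q$ via Schur--Horn and the Schur-convexity of~$F$, which validates both your step~(ii) and the constraint qualification. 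Your hedge ``in the regimes relevant to the paper'' therefore lands exactly where it should.
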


We now derive our maximin formula for the minimization problem~\eqref{eq:infproblem}, describing it as a supremum of minimization problems, each of which is defined in the Euclidean space.
To this end, we define, for any $g \in \GL \coloneqq \GL(n_1)\times\cdots\times\GL(n_d) \subseteq \GL(N)$, the function
\begin{equation}\label{eq:fg}
    f^g \colon \RR^{n_1} \times \cdots \times \RR^{n_d}\rightarrow\RR, \quad
    f^g(y) \coloneqq f(g_1^\dag e^{\diagmat y_1}g_1, \dots, g_d^\dag e^{\diagmat y_d}g_d).
\end{equation}
If we identify $\RR^{n_1} \times \cdots \times \RR^{n_d} \cong \RR^N$, we can also think of $f^g$ as a function on $\RR^N$, defined as
\begin{equation}\label{eq:fg alternative}
    f^g(y) \coloneqq f(g^\dagger e^{\diagmat y}g).
\end{equation}
Note that the geodesic convexity of $f$ implies the convexity of $f^g$, since the map $y \mapsto g^\dagger e^{\diagmat y}g$ maps each line $q + tp$ in $\RR^N$ to a geodesic $g^\dagger e^{\diagmat q + tp}g$ in~$\PD$.
By direct calculation, the gradient of $f^g$ is given by
\begin{equation}\label{eq:gradient_fg}
    \nabla f^g(y) = \diagvec\left(g\,\diff f\left(g^\dagger e^{\diagmat y}g\right)g^\dagger e^{\diagmat y}\right),
\end{equation}
where we think of $\diff f(x) \in T^*_x\PD \cong \rmH \subseteq \Herm(N)$ for $x = g^\dagger e^{\diagmat y}g$.

We are now ready to state our general theorem.
It expresses the minimization problem~\eqref{eq:infproblem} as a supremum of minimization problems, each of which is defined on Euclidean space:

\begin{thm}[Detailed version of \cref{thm:general-theorem-intro}]\label{thm:generaltheorem}
    Let $F \colon \RR^{n_1} \times \cdots\times \RR^{n_d} \to \RR \cup \{\infty\}$ be a function satisfying \ref{it:F1}--\ref{it:F3}, and let $f \colon \PD \to \RR$ be a twice-differentiable geodesically convex function satisfying $\diff f(\PD) \subseteq \dom Q$, where $Q$ is defined as in \cref{eq:dfn_QF}.
    Then, the following equality holds:
    \[
        \inf_{x \in \PD} Q(\diff f(x)) = \sup_{g \in \GL}\inf_{y \in \RR^{n_1} \times \cdots\times \RR^{n_d}} F(\nabla f^g(y)).
    \]
    Concretely,
    \[
        \inf_{x \in \PD} F\left(\spec\left(x^{1/2}\diff f(x)x^{1/2}\right)\right) = \sup_{g \in {\GL}}\inf_{y \in \RR^N} F\left(\diagvec\left(g \, \diff f\left(g^\dagger e^{\diagmat y}g\right)g^\dagger e^{\diagmat y}\right)\right).
    \]
\end{thm}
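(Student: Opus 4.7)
The plan is to apply Hirai's asymptotic duality on both sides of the identity and connect them through the change of variables $y \mapsto g^\dag e^{\diagmat y} g$. Concretely, I would apply \cref{prop:strongduality} to $f$ and $Q$ to rewrite the left-hand side as
\[
\inf_{x \in \PD} Q(\diff f(x)) = \sup_{\gamma \in \Gamma(\PD)} -f^\infty(\gamma) - Q^*(-\dot\gamma).
\]
For the right-hand side, for each fixed $g \in \GL$, I would apply the Euclidean version \cref{prop:strongduality-Euc} to $f^g$ and $F$, noting that $f^g$ is convex on $\RR^N$: the map $y \mapsto g^\dag e^{\diagmat y} g$ sends each Euclidean line $t \mapsto q+tp$ to the geodesic ray $\gamma_{g,q,p}(t) \coloneqq g^\dag e^{\diagmat(q+tp)} g$ in $\PD$, so composing with the geodesically convex $f$ gives a convex function. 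This yields, for every $g$,
\[
\inf_{y \in \RR^N} F(\nabla f^g(y)) = \sup_{\tilde\gamma \in \Gamma(\RR^N)} -(f^g)^\infty(\tilde\gamma) - F^*(-\dot{\tilde\gamma}).
\]

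The heart of the argument is matching the dual data. Writing $\tilde\gamma(t) = q + tp$, the recession function of $f^g$ along $\tilde\gamma$ unwinds directly from its definition to $f^\infty(\gamma_{g,q,p})$. Moreover, \cref{eq:Qstarminus} gives $Q^*(-\dot\gamma_{g,q,p}) = F^*_\rmH(-\diagmat p)$, and then \cref{prop:symmetricconvex}(b) combined with the symmetry of $F^*$ (inherited from that of $F$) collapses this to $F^*(-p) = F^*(-\dot{\tilde\gamma})$. The last ingredient will be to note that every geodesic ray in $\Gamma(\PD)$ takes the form $\gamma_{g,q,p}$ for some block-diagonal $g \in \GL$ and $q, p \in \RR^N$; this follows from the product structure $\PD = \PD(n_1) \times \cdots \times \PD(n_d)$ together with the parameterization of rays in $\PD(n_i)$ recalled in \cref{sub:hadamard_manifolds}. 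Assembling the three identifications,
\[
\sup_{g \in \GL} \inf_{y \in \RR^N} F(\nabla f^g(y)) = \sup_{g, q, p} \bigl(-f^\infty(\gamma_{g,q,p}) - Q^*(-\dot\gamma_{g,q,p})\bigr) = \sup_{\gamma \in \Gamma(\PD)} -f^\infty(\gamma) - Q^*(-\dot\gamma),
\]
which matches the left-hand side.

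I do not expect serious conceptual obstacles; the argument is essentially a change-of-variables bridge between the two versions of Hirai's duality. The most delicate point will be verifying the identification $Q^*(-\dot\gamma_{g,q,p}) = F^*(-p)$, where the symmetry assumption \ref{it:F2} on $F$ is used essentially to pass between the Hermitian-matrix conjugate $F^*_\rmH$ and its Euclidean counterpart $F^*$, and checking that the parameterization $(g,q,p) \mapsto \gamma_{g,q,p}$ exhausts $\Gamma(\PD)$, which is where the block-diagonal form of $\GL \subseteq \GL(N)$ is used in an essential way. Everything else amounts to chasing definitions and verifying that Hirai's hypotheses are transported correctly: on the $\PD$ side, $\diff f(\PD) \subseteq \dom Q$ is given, and on the Euclidean side, assumptions \ref{it:F1} and \ref{it:F3} on $F$ are exactly what \cref{prop:strongduality-Euc} requires.
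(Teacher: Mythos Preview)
Your proposal is correct and follows essentially the same approach as the paper: apply Hirai's asymptotic duality on~$\PD$ to the left-hand side, the Euclidean version to each inner infimum on the right, and match the two suprema over geodesic rays via the parameterization $\gamma_{g,q,p}(t)=g^\dagger e^{\diagmat(q+tp)}g$ together with the identifications $(f^g)^\infty(\tilde\gamma)=f^\infty(\gamma_{g,q,p})$ and $Q^*(-\dot\gamma_{g,q,p})=F_\rmH^*(-\diagmat p)=F^*(-p)$. The paper's proof is exactly this chain of equalities, and your identification of the delicate points (the role of~\ref{it:F2} in passing from $F_\rmH^*$ to $F^*$, and the surjectivity of $(g,q,p)\mapsto\gamma_{g,q,p}$ onto $\Gamma(\PD)$) matches where the nontrivial steps occur.
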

\begin{proof}
    Throughout the proof we identify $\RR^{n_1} \times \cdots\times \RR^{n_d} \cong \RR^N$.
    We first observe that that
    \begin{equation}\label{eq:torusrecession}
        f^\infty\left(t \mapsto g^\dagger e^{\diagmat (q + tp)}g\right) = \lim_{t \to \infty}\frac1tf(g^\dagger e^{\diagmat (q + tp)}g) = \lim_{t \to \infty}\frac1tf^g(q + tp) = (f^g)^\infty(t \mapsto q + tp).
    \end{equation}
    Then,
    \begin{align*}
        \inf_{x \in \PD} Q(\diff f(x))
        &= \sup_{\gamma \in \Gamma(\PD)}-f^\infty(\gamma) - Q^*(-\dot\gamma)\\
        &= \sup_{g \in \GL}\sup_{p, q \in \RR^N}-f^\infty\left(t \mapsto g^\dagger e^{\diagmat (q + tp)}g\right) - F_\rmH^*(-\diagmat p)\\
        &= \sup_{g \in \GL}\sup_{p, q \in \RR^N}-(f^g)^\infty(t \mapsto q + tp) - F^*(-p)\\
        &= \sup_{g \in \GL}\inf_{y \in \RR^N} F(\nabla f^g(y)),
    \end{align*}
    where the first equality follows from Hirai's asymptotic duality (\cref{prop:strongduality}), the second follows from the concrete expression of geodesic rays combined with \cref{eq:Qstarminus}, the third follows from \cref{prop:symmetricconvex}~(b) and \cref{eq:torusrecession}, and the last follows from the asymptotic duality in $\RR^N$ (\cref{prop:strongduality-Euc}).
    Finally, the concrete formula is obtained by combining \cref{eq:dfn_fH,eq:dfn_QF,eq:gradient_fg}.
\end{proof}

\begin{rem}\label{rem:K_general}
    We can obtain another expression for problem~\eqref{eq:infproblem} if we use additional facts.
    Two geodesic rays $\gamma, \gamma' \in \Gamma(M)$ are said to be \emph{asymptotic} if $\limsup_{t \to \infty}\dist(\gamma(t), \gamma'(t)) < \infty$.
    It is known that the values of $f^\infty(\gamma)$ and $Q^*(-\dot\gamma)$ do not change among asymptotic geodesic rays \cite{KLM2009,Hirai2024,Hirai2025}.
    Using the fact that any geodesic ray on $\PD$ is asymptotic to one through~$I$, we can replace $\sup_{g \in \GL}$ with $\sup_{u \in \U}$ in the above proof, yielding the following variant of \cref{thm:generaltheorem}:
    \begin{equation}\label{eq:K general}
        \inf_{x \in \PD} Q(\diff f(x)) = \sup_{u \in \U}\inf_{y \in \RR^{n_1} \times \cdots\times \RR^{n_d}} F(\nabla f^u(y)).
    \end{equation}
\end{rem}

\section{A minimax formula for convex optimization on moment polytopes}
\label{sec:moment-poly}
In this section, we apply the result of the previous section to prove a minimax formula for convex optimization on moment polytopes (\cref{thm:moment-polytope-general-q}, stated in full generality as \cref{thm:generalpolytope} below).
See \cite{BFGOWW_FOCS2019} and \cite[\S{}4]{Hirai2025} for more background and detail on the techniques used herein.

Recall that we defined $\GL = \GL(n_1) \times \cdots \times \GL(n_d) \subseteq \GL(N)$.
Let $\pi\colon \GL \to \GL(\VV)$ be a finite-dimensional rational representation, with Lie algebra representation~$\Pi\colon \Lin(n_1) \times \cdots \times \Lin(n_d) \to \Lin(\VV)$, i.e., $\Pi(X) \coloneqq \partial_t\pi(e^{tX})|_{t = 0}$.
We write $g \cdot v \coloneqq \pi(g)v$ for~$g \in \GL$ and~$v \in \VV$, so $\GL \cdot v$ denotes the orbit of~$v$.
We can always equip~$\VV$ with a inner product~$\langle\cdot, \cdot\rangle$ that is invariant under the maximally compact subgroup $\U = \U(n_1) \times \cdots \times \U(n_d) = \GL \cap \U(N)$; we denote the induced norm by $\|v\| \coloneqq \langle v, v\rangle^{1/2}$.

In this setting, the \emph{moment map}~\cite{Ness-Mumford-84} is defined as
\begin{equation}\label{eq:momentmap}
    \mu \colon \VV \setminus \{0\} \to \rmH, \quad
    \tr[\mu(v)X] \coloneqq \sum_{i = 1}^d\tr[(\mu(v))_iX_i] = \frac{\langle v, \Pi(X)v\rangle}{\|v\|^2} \qquad (\forall X \in \rmH).
\end{equation}
Note that the moment map is more naturally defined on the projective space~$\PP(\VV)$.
It is easy to check that $\mu(u\cdot v) = u\mu(v)u^\dagger$ for any unitary $u \in \U$.
The \emph{moment polytope}~$\Delta(v)$ of $0 \neq v \in \VV$ is defined as the set of spectra of moment map images of points in the orbit closure in projective space:
\begin{equation}\label{eq:polytopebypd}
    \Delta(v)
\coloneqq \{\spec(\mu(w)) \mid [w] \in \overline{\GL \cdot [v]} \}
= \overline{\{\spec(\mu(g\cdot v)) \mid g \in \GL\}}
= \overline{\{\spec(\mu(x\cdot v)) \mid x \in \PD\}},
\end{equation}
where we write $[w] \in \PP(\VV)$ for the point in projective space corresponding to a nonzero vector~$0 \neq w \in \VV$, and we denote $\spec(H) \coloneqq (\spec(H_1), \ldots, \spec(H_d))$ for $H \in \rmH$; the first equality in \cref{eq:polytopebypd} holds by compactness of the orbit closure in projective space and continuity of $\spec(\cdot)$, while the latter equality is immediate from the polar decomposition: it holds that $\spec(\mu(g\cdot v)) = \spec(u\mu(x\cdot v)u^\dagger) = \spec(\mu(x\cdot v))$ for the unique decomposition $g = ux\ (u \in \U, x \in\PD)$.
Remarkably, $\Delta(v)$ is indeed a convex polytope.

\begin{thm}[\cite{Ness-Mumford-84,Kirwan1984}]
For any $0\neq v\in\VV$, $\Delta(v)$ is a bounded convex polytope in $\RR^{n_1}_{\downarrow} \times \cdots \times \RR^{n_d}_{\downarrow}$.
\end{thm}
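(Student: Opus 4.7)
The plan is to combine the weight combinatorics of~$\pi$ with the Kempf--Ness function formalism from~\cref{sec:general} to establish boundedness, convexity, and polyhedrality. Let~$\rmT\subseteq\GL$ be the diagonal torus and let $\VV=\bigoplus_{\omega\in W}\VV_\omega$ be the weight decomposition, where $W\subseteq\RR^{n_1}\times\cdots\times\RR^{n_d}$ is the finite weight set of~$\pi$; choose a~$\U$-invariant inner product on~$\VV$ so that this decomposition is orthogonal. Boundedness of~$\Delta(v)$ is then immediate: the moment map descends to a continuous function on projective space $\PP(\VV)$, which is compact, so $\mu(\PP(\VV))\subseteq\rmH$ is compact and $\Delta(v)\subseteq\spec\mu(\PP(\VV))$ is bounded.

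For convexity and polyhedrality, I would characterize $\Delta(v)$ via its support function on the Weyl chamber $C\coloneqq\RR^{n_1}_\downarrow\times\cdots\times\RR^{n_d}_\downarrow$. Using the Kempf--Ness function $f_v(x)\coloneqq\log\langle v,\pi(x)v\rangle$ along the torus-directed geodesic $\gamma_g(t)\coloneqq g^\dagger e^{t\diagmat p}g$ (for $g\in\GL$ and $p\in C$), the weight-basis expansion of $g\cdot v=\sum_\omega(g\cdot v)_\omega$ yields
\[
    f_v^\infty(\gamma_g) = \lim_{t\to\infty}\frac{1}{t}\log\sum_{\omega\in\supp(g\cdot v)}e^{t\langle\omega,p\rangle}\|(g\cdot v)_\omega\|^2 = \max_{q\in\Omega(g\cdot v)}\langle q,p\rangle.
\]
The key assertion to establish is that, for every $p\in C$,
\[
    \max_{q\in\Delta(v)}\langle q,p\rangle \;=\; \sup_{g\in\GL}\max_{q\in\Omega(g\cdot v)}\langle q,p\rangle \;=:\; h(p),
\]
together with the containment $P\coloneqq\{q\in C:\langle q,p\rangle\leq h(p)\ \text{for all}\ p\in C\}\subseteq\Delta(v)$. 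Since each~$\Omega(g\cdot v)$ is the convex hull of a subset of the finite weight set~$W$, the function~$h$ is convex, positively homogeneous, and piecewise linear with only finitely many linear pieces on~$C$, so $P$ is a convex polytope. Combined with the always-holding direction $\Delta(v)\subseteq P$ (from the identity), the containment $P\subseteq\Delta(v)$ gives $\Delta(v)=P$, establishing convexity and polyhedrality in one stroke.

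The main obstacle is the identity above and the containment $P\subseteq\Delta(v)$, which together amount to the Kirwan/Ness--Mumford convexity theorem. The inequality~$\leq$ in the identity is straightforward: using the polar decomposition~$\GL=\U\cdot\PD$, unitary equivariance $\mu(u\cdot w)=u\mu(w)u^\dagger$, and von Neumann's trace inequality, one reduces $\max_{q\in\Delta(v)}\langle q,p\rangle$ to $\sup_{g\in\GL}\tr(\mu(g\cdot v)\diagmat p)$, which is a convex combination of the values~$\langle\omega,p\rangle$ for $\omega\in\supp(g\cdot v)$ and hence bounded by $\max_{q\in\Omega(g\cdot v)}\langle q,p\rangle$. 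For the reverse inequality and for $P\subseteq\Delta(v)$, given $g\in\GL$ and a maximizer $\omega^*\in\arg\max_{\omega\in\supp(g\cdot v)}\langle\omega,p\rangle$, one considers the torus-direction limit $v'\coloneqq\lim_{t\to\infty}(e^{t\diagmat p}g)\cdot v/\|(e^{t\diagmat p}g)\cdot v\|\in\overline{\GL\cdot[v]}$, which lies in the span of weight spaces $\VV_\omega$ attaining the maximum (where $\mu$ is block-diagonal with spectrum in the convex hull of the maximizing weights). Combining this with a Hilbert--Mumford/Kempf--Ness stratification of~$\overline{\GL\cdot[v]}$ into~$\GL$-invariant strata indexed by optimal destabilizing one-parameter subgroups allows one to cover all of~$P$, with the technical subtleties concentrated in handling ties among maximizing weights and controlling the stratification.
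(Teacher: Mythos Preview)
The paper does not prove this theorem at all: it is quoted as a classical result of Ness--Mumford and Kirwan, with only a citation and no argument. So there is no ``paper's proof'' to compare your proposal against; the statement is used as a black box on which the rest of \cref{sec:moment-poly} is built.

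Your boundedness argument is correct and standard. For convexity and polyhedrality, however, your proposal is essentially circular. You correctly identify that the support-function identity together with the containment $P\subseteq\Delta(v)$ is what needs to be shown, and you then remark that these two facts ``amount to the Kirwan/Ness--Mumford convexity theorem''---but that is exactly the statement you are asked to prove. The sketch you give for the hard direction (torus-direction limits plus a Hilbert--Mumford/Kempf--Ness stratification of $\overline{\GL\cdot[v]}$) does name the right ingredients of one standard proof, but the ``technical subtleties'' you defer in your last sentence are not peripheral: they \emph{are} the theorem. Producing, for every $q\in P$, a point $w$ in the orbit closure with $\spec\mu(w)=q$ requires either Kirwan's Morse-theoretic stratification, the GIT argument of Ness--Mumford, or an alternative such as Brion's highest-weight approach; none of these is a routine verification.

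One further caution: the support-function identity you isolate is formally the special case of the paper's \cref{thm:generalpolytope} with $F(q)=-\max_{\sigma}\langle\sigma q,p\rangle$, so it is tempting to think the paper's own minimax machinery could be repurposed to \emph{derive} the convexity theorem. But the proof of \cref{thm:generalpolytope} invokes continuity of $F$ on $\Delta(v)$ via \cite[Thm.~10.2]{Rockafellar1970} to pass from the infimum over $\PD$ to the minimum over $\Delta(v)$, and this step uses that $\Delta(v)$ is already known to be a convex polytope. So that route is not available without further work.
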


Let $\rmT \subseteq \GL$ be the set of diagonal matrices, called the maximal torus.
We canonically identify $\rmT \cong (\CC^\times)^{N} \subseteq \GL(N)$.
Then, we can define a moment map and moment polytope exactly as above, but for the group~$\rmT$ in place of~$\GL$.
Concretely, let $\pi_\rmT$ be the restriction of $\pi$ to $\rmT$.
Its Lie algebra representation $\Pi_\rmT\colon \CC^N \to \Lin(\VV)$ is then given by $\Pi_\rmT(y) = \Pi(\diagmat y)$.
The associated moment map, which we call the \emph{torus moment map}, is given by
\[
    \mu_\rmT\colon \VV\setminus\{0\} \to \RR^N, \quad
    \langle\mu_{\rmT}(v), y\rangle
= \frac{\langle v, \Pi_\rmT(y)v\rangle}{\|v\|^2}
= \frac{\langle v, \Pi(\diagmat y)v\rangle}{\|v\|^2} \qquad (\forall y \in \RR^N).
\]
By comparing it to \cref{eq:momentmap}, it follows that $\mu_{\rmT}(v) = \diagvec\mu(v)$.
The corresponding moment polytope is given by
\begin{equation}\label{eq:support poly}
    \Omega(v)
\coloneqq \{\mu_{\rmT}(w) \mid [w] \in \overline{\rmT \cdot [v]} \}
= \overline{\{\mu_{\rmT}(g\cdot v) \mid g \in \rmT\}}
= \overline{\{\mu_{\rmT}(e^{\diagmat y} \cdot v) \mid y \in \RR^N\}}.
\end{equation}
and we call it the \emph{support polytope} of~$v$ for the following reason:
Because~$\rmT$ is commutative, we can jointly diagonalize the action of~$\rmT$ and decompose~$\VV = \VV_{\omega_1} \oplus \cdots \oplus \VV_{\omega_r}$ into subspaces~$\VV_{\omega_\ell}$, indexed by vectors $\omega_\ell \in \ZZ^N$ called~\emph{weights}, such that $\rmT$ acts as~$g \cdot u = e^{\braket{\omega_\ell, z}} u$ for all~$g = e^z \in (\CC^\times)^N$ and~$u \in \VV_{\omega_\ell}$.
Any vector~$v \in \VV$ admits a unique decomposition~$v = v_{\omega_1} + \cdots + v_{\omega_r}$ ($v_{\omega_\ell} \in \VV_{\omega_\ell}$); we define its~\emph{support}~as
\[
    \supp(v) \coloneqq \{\omega_\ell \mid v_{\omega_\ell} \neq 0\}.
\]
Then the moment map can be computed explicitly as $\mu_\rmT(v) = \sum_{\ell = 1}^r\frac{\|v_{\omega_\ell}\|^2}{\|v\|^2}\omega_\ell \in \conv\supp(v)$, and therefore $\Omega(v) \subseteq \conv\supp(v)$.
The opposite inclusion also holds (see, e.g., \cite{BLNW2020}; cf.\ \cite{Atiyah1982,BFGOWW_FOCS2019}), and hence the support polytope is described as follows, justifying the terminology:
\begin{equation}\label{eq:support poly concrete}
    \Omega(v) = \conv\supp(v).
\end{equation}
For a fixed representation~$\VV$, there are only finitely many supports and hence there are only finitely many possible support polytopes.
In particular, for any fixed~$v$, there are only finitely many support polytopes $\Omega(g \cdot v)$ as we vary over~$g \in \GL$.

\begin{exa}[Tensor action]
    Consider the complex vector space $\VV \coloneqq \CC^{n_1} \otimes \cdots \otimes \CC^{n_d} \cong \CC^{n_1 \times \cdots \times n_d}$.
    Its elements $t$ are described by complex numbers $(t_{j_1, \ldots, j_d}) \in \CC$ indexed by $1 \le j_i \le n_i$.
    Then, the \emph{tensor action} of $\GL$ on $\VV$ is defined by
    \[
        (g_1, \cdots, g_d)\cdot t
    \coloneqq (g_1 \otimes \cdots \otimes g_d)t
        \qquad
        (g_i \in \GL(n_i), t \in \CC^{n_1} \otimes \cdots \otimes \CC^{n_d}).
    \]
    The standard $\ell^2$-inner product $\langle t, s\rangle \coloneqq \sum_{j_1, \ldots, j_d} \bar t_{j_1, \ldots, j_d} s_{j_1, \ldots, j_d}$ is unitarily invariant.
    The Lie algebra representation for the tensor action is given by
    \[
      \Pi(X_1,\dots,X_d) t
    = \sum_{i=1}^d \left( I^{\ot (i-1)} \ot X_i \ot I^{\ot (d-i)} \right)
    t
    \qquad
    (X_i \in \Lin(n_i), t \in \CC^{n_1} \otimes \cdots \otimes \CC^{n_d}).
    \]
    We can view a tensor~$t$ as a linear map $t_i \colon \CC^{n_i} \to \bigotimes_{j \neq i}\CC^{n_j}$, which is called the \emph{$i$-th flattening of $t$}.
    Then the moment map~\eqref{eq:momentmap} computes the normalized \emph{one-body quantum marginals}~$\rho_i(t) \coloneqq {t_i^\dagger t_i}/{\|t\|^2}$~of~$t$:
    \[
        \mu(t)_i 
        = \frac{t_i^\dagger t_i}{\|t\|^2}
        \qquad (i \in [d]).
    \]
    Thus, the moment polytope of a tensor~$t$, which is also known as its \emph{entanglement polytope}~\cite{walter2013entanglement}, are given by~\eqref{eq:entanglement polytope intro}.%
\footnote{Because the tensor action contains the dilations, we can replace the closure in projective space by (the nonzero vectors in) the closure in the vector space.}
    Since the action of $\rmT$ is diagonal with respect to the standard basis vectors~$e_{j_1} \ot \cdots \ot e_{j_d}$, with weights $\omega = (e_{j_1},\dots,e_{j_d})$, the support polytope is given by~\eqref{eq:support polytope intro}.
\end{exa}

\begin{exa}[Diagonal action]\label{ex:diagonal-action}
The \emph{diagonal action} is obtained by restricting the tensor action on~$\VV \coloneqq (\CC^n)^{\otimes d}$ to the subgroup $\GL(n) \subseteq \GL(n)^d$. That is:
\[
    g\cdot t \coloneqq g^{\otimes d}t \qquad (g \in \GL(n), t \in (\CC^n)^{\otimes d}).
\]
The moment map for this action is given by the sum of the quantum marginals, and the moment polytope contains the corresponding spectra.
Denoting the former by~$\mu_{\Sym}$ and the latter by~$\Delta_{\Sym}$ to avoid confusion with the tensor action, we have
\begin{equation}
\label{eq:diagonal-polytope}
    \mu_{\Sym}(t)
= \frac1{\|t\|^2}\sum_{i = 1}^dt_i^\dagger t_i,
\qquad
\Delta_{\Sym}(t) = \left\{\spec\left(\frac1{\|s\|^2}\sum_{i = 1}^ds_i^\dagger s_i\right)\ \middle|\ 0\neq s \in \overline{\GL(n) \cdot t} \right\}.
\end{equation}
The support polytope, which we will similarly denote by~$\Omega_{\Sym}$, is given by
\begin{align}\label{eq:diagonal support polytope}
    \Omega_{\Sym}(t)
= \conv \left\{e_{j_1} + \cdots + e_{j_d} \in \RR^n \mid t_{j_1, \ldots, j_d} \neq 0\right\}
= \left\{ {\textstyle\sum}_{i=1}^d p_i \mid p \in \Omega(t) \right\}
\end{align}
because for the diagonal action the weight of a standard basis vector~$e_{j_1} \ot \cdots \ot e_{j_d}$ is $e_{j_1} + \cdots + e_{j_d}$.

The action of $\GL(n)$ on $(\CC^n)^{\otimes d}$ is not irreducible.
Important subrepresentations are given by the spaces of \emph{symmetric tensors} and \emph{anti-symmetric tensors}
\begin{align*}
    \Sym^d(\CC^n) &\coloneqq \left\{t \in (\CC^n)^{\otimes d}\ \middle|\ t_{\sigma(j_1), \ldots, \sigma(j_d)} = t_{j_1, \ldots, j_d} \; \forall \sigma \in S_d \right\},\\
    \Alt^d(\CC^n) &\coloneqq \left\{t \in (\CC^n)^{\otimes d}\ \middle|\ t_{\sigma(j_1), \ldots, \sigma(j_d)} = \mathrm{sign}(\sigma) \, t_{j_1, \ldots, j_d} \; \forall \sigma \in S_d \right\}.
\end{align*}
    Since all of the $d$ marginals of a symmetric or anti-symmetric tensor are the same, the moment map and the moment polytope of such tensors can be simply written as
    \[
        \mu_{\Sym}(t) = d\frac{t_1^\dagger t_1}{\|t\|^2}, \quad \Delta_{\Sym}(t) = d \left\{ \spec\frac{s_1^\dagger s_1}{\|s\|^2}\ \middle|\ 0\neq s = \overline{\GL(n) \cdot t} \right\} \quad (0\neq t \in \Sym^d(\CC^n) \cup \Alt^d(\CC^n)).
    \]
\end{exa}

\bigskip

We now discuss minimization problems on general moment and support polytopes and how they can be related.
Fix $0 \neq v \in \VV$ and let $F \colon \RR^{n_1} \times \cdots\times \RR^{n_d} \to \RR \cup \{\infty\}$ be a function that satisfies the following conditions (the labels match the ones in the previous section):
\begin{enumerate}[leftmargin=4em,noitemsep,label={(F\arabic*)},resume]
    \item[\ref{it:F1}] $F$ is convex and lower-semicontinuous.
    \item[\ref{it:F2}] $F$ is symmetric in each argument.
    \item\label{it:F4} $\Delta(v) \subseteq \dom F$.
\end{enumerate}
The symmetry and the convexity of $F$ imply that $F(\diagvec(X)) \le F(\spec(X))$ for every~$X \in \rmH$.\footnote{The symmetry and the convexity of $F$ imply that $F$ is Schur-convex, meaning that $F(x) \le F(y)$ if $x \preceq y$, where $\prec$ denotes majorization in every argument. Then, the Schur--Horn theorem states that $\diagvec(X) \preceq \spec(X)$.}
As $\mu_{\rmT}(v) = \diagvec\mu(v)$, by combining this observation with \ref{it:F4}, we have~$\Omega(g\cdot v) \subseteq \dom F$ for every~$g \in \GL$.
We now show that the minimum of~$F$ over the moment polytope~$\Delta(v)$ equals the maximum over~$g \in \GL$ of the minimum of~$F$ over the support polytopes~$\Omega(g \cdot v)$, as a consequence of~\cref{thm:generaltheorem}.

\begin{thm}[Detailed version of \cref{thm:moment-polytope-general-q}]\label{thm:generalpolytope}
    Let $\pi\colon \GL \to \GL(\VV)$ be a finite-dimensional rational representation, and let $0 \neq v \in \VV$.
    Let $F\colon \RR^{n_1} \times \cdots \times \RR^{n_d} \to \RR \cup \{\infty\}$ be a function satisfying \ref{it:F1}, \ref{it:F2}, and \ref{it:F4}.
    Then,
    \[
        \min_{p \in \Delta(v)}F(p) = \max_{g \in \GL}\min_{p \in \Omega(g\cdot v)}F(p).
    \]
\end{thm}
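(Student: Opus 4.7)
The strategy is to apply \cref{thm:generaltheorem} to the \emph{Kempf--Ness function} $f_v \colon \PD \to \RR$, $f_v(x) \coloneqq \log \langle v, \pi(x) v \rangle$. Fixing a $\U$-invariant Hermitian inner product on $\VV$ guarantees $\pi(g^\dagger) = \pi(g)^\dagger$, so that $\pi(x) \in \PD(\VV)$ for $x \in \PD$ and $f_v$ is a well-defined real-valued function. Along any geodesic $t \mapsto g^\dagger e^{tX} g$ in $\PD$ one can diagonalize $X$ and absorb the resulting unitary into $g$, reducing to torus geodesics, on which $f_v$ takes the log-sum-exp form $\log \sum_\omega e^{t\langle \omega, y \rangle} \|w_\omega\|^2$, where $\pi(g) v = \sum_\omega w_\omega$ is the $\rmT$-weight decomposition. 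This shows at once that $f_v$ is geodesically convex and real-analytic, hence twice differentiable.

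The main step is to match the two sides of \cref{thm:generaltheorem} with the desired minima. For the left-hand side, along the geodesic $\gamma(t) = x^{1/2} e^{tX'} x^{1/2}$ with $X' \coloneqq x^{-1/2} X x^{-1/2}$ and $w \coloneqq \pi(x^{1/2}) v$, one has $f_v(\gamma(t)) = \log \langle w, e^{t\Pi(X')} w\rangle$, whose derivative at $t = 0$ equals $\tr[\mu(w) X'] = \tr[x^{-1/2} \mu(w) x^{-1/2}\,X]$, so
\[ x^{1/2} \, \diff f_v(x) \, x^{1/2} = \mu\bigl(\pi(x^{1/2}) v\bigr). \]
Since the map $x \mapsto x^{1/2}$ is a bijection on $\PD$, combining this with \eqref{eq:polytopebypd} gives $\Delta(v) = \overline{\{\spec\mu(\pi(x^{1/2}) v) : x \in \PD\}}$; then \ref{it:F4} forces $\diff f_v(\PD) \subseteq \dom Q$, and lower semi-continuity of $F$ together with the compactness of $\Delta(v)$ yields $\inf_{x \in \PD} Q(\diff f_v(x)) = \min_{p \in \Delta(v)} F(p)$. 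For the right-hand side, using $\pi(g^\dagger e^{\diagmat y} g) = \pi(g)^\dagger \pi(e^{\diagmat y}) \pi(g)$ together with the weight decomposition of $\pi(g) v$,
\[ f_v^g(y) = \log \sum_\omega e^{\langle \omega, y\rangle} \|w_\omega\|^2, \qquad \nabla f_v^g(y) = \mu_\rmT\!\bigl(\pi(e^{\diagmat y/2}) \pi(g) v\bigr) \in \Omega(g\cdot v). \]
By \eqref{eq:support poly} applied to $g \cdot v$, the set $\{\nabla f_v^g(y) : y \in \RR^N\}$ is dense in $\Omega(g \cdot v)$, so $\inf_y F(\nabla f_v^g(y)) = \min_{p \in \Omega(g \cdot v)} F(p)$ by the same density and l.s.c.\ argument. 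Plugging both into \cref{thm:generaltheorem} yields the identity, and the supremum on the right is attained because only finitely many support polytopes $\Omega(g \cdot v)$ arise.

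The one technical wrinkle is that \cref{thm:generaltheorem} additionally assumes \ref{it:F3} (bounded sublevel sets), whereas here we only assume \ref{it:F4}. To bridge this, I would replace $F$ by the function that agrees with $F$ on a fixed compact, convex, permutation-symmetric set $C$ and equals $+\infty$ outside; such a $C$ containing $\Delta(v)$ together with every support polytope $\Omega(g \cdot v)$ exists because all support polytopes lie inside the convex hull of the finitely many $\rmT$-weights of $\VV$. The modified function satisfies \ref{it:F1}--\ref{it:F3}, and since every relevant polytope lies in $C$, both minima on either side of the claimed identity are unaffected by the modification. This is the only step requiring extra care; the rest assembles directly from \cref{thm:generaltheorem} and the computations above.
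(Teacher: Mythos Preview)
Your proof is correct and follows essentially the same route as the paper's: apply \cref{thm:generaltheorem} to the Kempf--Ness function, identify the two sides with the moment-polytope and support-polytope minima via $x^{1/2}\diff f_v(x)x^{1/2}=\mu(\pi(x^{1/2})v)$ and $\nabla f_v^g(y)=\mu_\rmT(e^{\diagmat y}\cdot g\cdot v)$, and handle \ref{it:F3} by truncating $F$ outside a large symmetric convex compact set containing all the (finitely many) relevant polytopes. One small imprecision worth fixing: lower semi-continuity and compactness alone do \emph{not} force $\inf_{p\in A}F(p)=\min_{p\in\overline A}F(p)$ for a dense subset $A$; you need $F$ to be continuous on the polytope, which follows from convexity (the paper invokes \cite[Thm.~10.2]{Rockafellar1970} for exactly this step).
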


Before giving the proof of the theorem, we note that to prove the theorem the value of~$F$ outside of the moment polytope and the (finitely many possible) support polytopes does not matter.
As these are bounded convex polytopes, we may assume without loss of generality that $F(p) = \infty$ for sufficiently large~$\|p\|$ and hence that
\begin{itemize}[leftmargin=4em]
    \item[\ref{it:F3}] $F$ has bounded sublevel sets.
\end{itemize}
The key idea of the reduction from \cref{thm:generalpolytope} to \cref{thm:generaltheorem} is the well-known fact that the moment map can interpreted as the gradient of a geodesically convex function, called the \emph{Kempf--Ness function}~\cite{Ness-Mumford-84,Kirwan1984} (see also \cite{BFGOWW_FOCS2019,HiraiSakabe2024,Hirai2025}).
In the present setting, the Kempf--Ness function of a vector~$0 \neq v \in \VV$ is defined as
\begin{align*}
    f_v \colon \PD \to \RR,
    \quad
    f_v(x) \coloneqq \log\langle v, x\cdot v\rangle.
\end{align*}
Then the following properties are well-known:
\begin{lem}[{e.g., \cite{BFGOWW_FOCS2019}, \cite{HiraiSakabe2024}}]
\label{prop:kempf-ness}
Let $0\neq v \in \VV$. Then:
    \begin{enumerate}
        \item[(a)] The Kempf--Ness function $f_v$ is geodesically convex.
        \item[(b)] The differential of~$f_v$ and the moment map are related as follows:
        for every $x\in\PD$,
        \[
            \tau^*_{I \to x}\diff f_v(x) = x^{1/2}\diff f_v(x)x^{1/2} = \mu(\pi(x^{1/2})v),
        \]
        where for the first equality we identify $T^*_x\PD \cong \rmH$.
    \end{enumerate}
\end{lem}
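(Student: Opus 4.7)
The plan is to exploit two structural features of the Kempf--Ness function: its explicit form in terms of the representation $\pi$, and the covariance property $f_{\pi(x^{1/2})v}(y) = f_v(x^{1/2} y x^{1/2})$, which will let me reduce part~(b) for arbitrary $x \in \PD$ to the case $x = I$. For part~(a), I would reduce geodesic convexity to convexity of the log-sum-exp function; for part~(b), I would combine a direct base-point computation with the covariance relation.

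For (a), parametrize an arbitrary geodesic in $\PD$ as $\gamma(t) = g^\dagger e^{tH} g$ with $g \in \GL$ and $H \in \rmH$. The $\U$-invariance of $\langle\cdot,\cdot\rangle$ gives $\pi(u) \in \U(\VV)$ for $u \in \U$, and differentiating at the identity along skew-Hermitian directions shows that $\Pi(H)$ is Hermitian whenever $H$ is, as well as $\pi(g^\dagger) = \pi(g)^\dagger$ for $g \in \GL$ (via the polar decomposition, using that $\pi$ sends positive elements to positive operators). Since $\pi$ is a homomorphism, $\pi(\gamma(t)) = \pi(g)^\dagger e^{t\Pi(H)}\pi(g)$. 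Diagonalizing $\Pi(H) = \sum_\lambda \lambda P_\lambda$ then yields
\[
    f_v(\gamma(t)) = \log\sum_\lambda c_\lambda e^{t\lambda}, \qquad c_\lambda \coloneqq \|P_\lambda \pi(g) v\|^2 \ge 0,
\]
which is a log-sum-exp in $t$ and hence convex.

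For (b), I would first settle the base point $x = I$: the derivative of $s \mapsto \log\langle v, \pi(e^{sX}) v\rangle$ at $s = 0$ equals $\langle v, \Pi(X) v\rangle/\|v\|^2$, which by the defining relation~\eqref{eq:momentmap} of the moment map equals $\tr[\mu(v) X]$; under the Hilbert--Schmidt identification $T^*_I \PD \cong \rmH$ this says $\diff f_v(I) = \mu(v)$. For general $x \in \PD$, I would apply this identity to $w \coloneqq \pi(x^{1/2}) v$, using the covariance identity $f_w(y) = f_v(x^{1/2} y x^{1/2})$ (which itself follows from $\pi(x^{1/2})^\dagger = \pi(x^{1/2})$ and the homomorphism property). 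Differentiating at $y = I$, and noting that the differential of the linear map $y \mapsto x^{1/2} y x^{1/2}$ at $I$ is exactly the tangent parallel transport $\tau_{I \to x}$, the chain rule gives $\diff f_w(I) = \tau^*_{I \to x}\diff f_v(x) = x^{1/2} \diff f_v(x) x^{1/2}$. Comparing with $\diff f_w(I) = \mu(w) = \mu(\pi(x^{1/2}) v)$ from the base-point case yields the claim.

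The main bookkeeping obstacle will be to keep the two independent identifications at $x$ straight --- the Riemannian one on $T_x \PD \cong \rmH$ (via the affine-invariant metric) and the Hilbert--Schmidt one on $T^*_x \PD \cong \rmH$ --- and in particular to confirm that the prescription $\tau^*_{I \to x}(\hat X) = x^{1/2} \hat X x^{1/2}$ from the preliminaries is precisely what the chain rule produces. Beyond this, the only non-tautological input is $\pi(g^\dagger) = \pi(g)^\dagger$, which as noted follows from $\U$-invariance and the polar decomposition; all remaining steps are routine differentiations.
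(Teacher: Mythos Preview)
Your proof is correct. Note, however, that the paper does not actually prove this lemma: it is stated with citations to \cite{BFGOWW_FOCS2019} and \cite{HiraiSakabe2024} as a known result, with no proof given. Your argument---reducing geodesic convexity to log-sum-exp convexity via the diagonalization of $\Pi(H)$, and reducing part~(b) to the base point $x=I$ via the covariance $f_{\pi(x^{1/2})v}(y)=f_v(x^{1/2}yx^{1/2})$---is the standard direct computation and matches what one finds in those references. The bookkeeping you flag (the two identifications at $x$ and the formula $\tau^*_{I\to x}(\hat X)=x^{1/2}\hat X x^{1/2}$) is handled correctly: the chain rule applied to the linear conjugation map produces exactly $\diff f_v(x)\circ\tau_{I\to x}$, which is $\tau^*_{I\to x}\diff f_v(x)$ by definition of the dual map, and the identity $\pi(g^\dagger)=\pi(g)^\dagger$ follows from $\U$-invariance and polar decomposition as you indicate.
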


\begin{proof}[Proof of \cref{thm:generalpolytope}]
By \cref{eq:dfn_QF} and \cref{prop:kempf-ness}~(b), it holds that, for any $x \in \PD$,
\begin{equation}\label{eq:qf_to_f}
    Q(\diff f_v(x))
= F_\rmH\mleft(\tau_{I \to x}^*\diff f_v(x)\mright)
= F_{\rmH}\mleft(\mu\mleft(\pi\mleft(x^{1/2}\mright)v\mright)\mright)
= F\mleft(\spec\mu\mleft(\pi\mleft(x^{1/2}\mright)v\mright)\mright).
\end{equation}
Therefore, by \cref{eq:polytopebypd} and the fact that $F$ is continuous on $\Delta(v)$ (\cite[Thm.~10.2]{Rockafellar1970}), we have
\begin{equation}\label{eq:lhs for moment polytopes}
    \inf_{x \in \PD}Q(\diff f_v(x)) = \min_{p \in \Delta(v)} F(p).
\end{equation}
On the other hand, in this setting, the function~$f_v^g$ defined as in \cref{eq:fg,eq:fg alternative} for~$f$ the Kempf--Ness function~$f_v$, is given by
\[
    f_v^g:\RR^N \to \RR, \quad f_v^g(y) = \log\left\langle g\cdot v, e^{\diagmat y}\cdot g\cdot v\right\rangle \eqqcolon f^\rmT_{g\cdot v}(y).
\]
We have introduced the notation $f^\rmT_{g\cdot v} \colon \RR^N \to \RR$ because this function is the Kempf--Ness function for the torus action.
Concretely, it is a log-sum-exp function as in \emph{geometric programming}~\cite{BKVH2007,BLNW2020}.
By a simple calculation (or invoking \cref{prop:kempf-ness}~(b) for $\rmT$), its gradient is given by the torus~moment~map:
\[
    \nabla f^\rmT_{g\cdot v}(y) = \mu_\rmT\left(e^{\diagmat y}\cdot g\cdot v\right).
\]
Therefore, by \cref{eq:support poly}, we have
\[
    \inf_{y \in \RR^N}F(\nabla f^g_v(y)) = \inf_{y \in \RR^N}F\left(\nabla f^\rmT_{g\cdot v}(y)\right) = \min_{p \in \Omega(g\cdot v)} F(p).
\]
Together with \cref{eq:lhs for moment polytopes} we see that \cref{thm:generaltheorem}, applied to the Kempf--Ness function~$f_v$, reads
\[
    \min_{p \in \Delta(v)}F(p) = \sup_{g \in \GL}\min_{p \in \Omega(g\cdot v)}F(p).
\]
The above supremum is always attained, as there are only finitely many possible support polytopes.
\end{proof}

\begin{rem}\label{rem:K general mopo}
    If we use formula~\eqref{eq:K general} in \cref{rem:K_general} in place of \cref{thm:generaltheorem}, we obtain the following variant of \cref{thm:generalpolytope}:
    \[
        \min_{p \in \Delta(v)}F(p) = \max_{u \in \U}\min_{p \in \Omega(u\cdot v)}F(p).
    \]
\end{rem}

\section{Applications and connections}\label{sec:applications}
In this section we describe concrete applications of \cref{thm:generaltheorem,thm:generalpolytope} to a variety of tensor parameters.
In \cref{sub:symmetric quantum functionals} we discuss results for the symmetric quantum functional analogous to those presented in the introduction for the quantum functionals.
In \cref{sub:asymptotic slice rank}, we establish a new formula for the weighted asymptotic slice rank in terms of the weighted asymptotic vertex cover number (\cref{thm:main-weighted-slice-rank}).
In \cref{sub:G stable}, we explain how the characterization of Derksen's $G$-stable rank in terms of entanglement polytopes can be obtained from our theorem (\cref{prop:fractional-cover-reciprocal}).
Finally, in \cref{sub:ncrank}, we relate formulas for the non-commutative rank due to Hirai and Fortin--Reutenauer.

\subsection{Symmetric quantum functionals}\label{sub:symmetric quantum functionals}
In \cref{sec:introduction} we proved the equality of the quantum functionals and Strassen's support functionals (\cref{thm:support-equals-quantum} in the introduction) and explained how this equality gives a new direct proof of the fact that these functionals are spectral points.
In this section we discuss analogous results for \emph{symmetric quantum functional}---a variant of the quantum functionals constructed for the diagonal action of~$\GL(n)$ on~$(\CC^n)^{\ot d}$ introduced in~\cite{CFTZ-22}.
The symmetric quantum functional is well-defined for arbitrary tensors and known to be a spectral point in the asymptotic spectrum of symmetric tensors.
Moreover, it is sub-multiplicative for all tensors.
Here we provide a minimax description of the symmetric quantum functional, in the spirit of Strassen's support functionals, and provide a new direct proof of its sub-multiplicativity.

Let $\CC^n$ be the $n$-dimensional vector space, and consider the diagonal action of $\GL(n)$ on $(\CC^n)^{\otimes d}$, as in \cref{ex:diagonal-action}.
For a tensor $t\in(\CC^n)^{\otimes d}$, we denote by $\Delta_{\Sym}(t)\subseteq\RR^n$ the moment polytope, and by $\Omega_{\Sym}(t)$ the support polytope with respect to this action, see \cref{eq:diagonal-polytope,eq:diagonal support polytope}.
We reserve the notation~$\Delta(t)$ (resp. $\Omega(t)$) for the moment polytope (resp. support polytope) of $t$ with respect to the action of the larger group $\GL\coloneqq \GL(n)^{\times d}$.

\begin{defn}[Symmetric quantum functional]
The \emph{symmetric quantum functional} of a tensor $t \in (\CC^n)^{\ot d}$ is defined as
\[
    F_{\Sym}(t) \coloneqq \max_{p\in\Delta_{\Sym}(t)}\, 2^{H(p/d)},
    \] where $H$ denotes the Shannon entropy.
\end{defn}

\noindent
We note that the elements of the moment polytope $\Delta_{\Sym}(t)$ are vectors $p\in\RR^{n}_{\geq0}$ with $\sum_{i=1}^n p_i=d$, so the division by $d$ results in a probability distribution.
Since $p \mapsto -H(p/d)$ is a convex symmetric l.s.c.\ function, a straightforward application of \cref{thm:moment-polytope-general-q} yields the following new description:

\begin{thm}
\label{cor:symmetric-functional}
For every tensor~$t \in (\CC^n)^{\otimes d}$, we have
\[
    F_{\Sym}(t) = \min_{g\in\GL(n)} \max_{p\in\Omega_{\Sym}(g \cdot t)} 2^{H(p/d)}.
\]
\end{thm}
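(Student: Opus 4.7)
The plan is to specialize \cref{thm:generalpolytope} to the diagonal action of $\GL(n)$ on $\VV = (\CC^n)^{\ot d}$ (\cref{ex:diagonal-action}), viewed as a rational representation of the single-factor group $\GL(n)$. Under this specialization the associated moment and support polytopes coincide with $\Delta_{\Sym}(t)$ and $\Omega_{\Sym}(g\cdot t)$, respectively; this is essentially the content of formulas~\eqref{eq:diagonal-polytope} and~\eqref{eq:diagonal support polytope}.

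The key choice is the convex function. I would take $F\colon \RR^n \to \RR \cup \{\infty\}$ defined by
\[
F(p) \coloneqq \begin{cases} -H(p/d) & \text{if } p \in \RR_{\ge 0}^n \text{ and } \sum_{i=1}^n p_i = d, \\ +\infty & \text{otherwise.} \end{cases}
\]
The next step is to verify the three hypotheses of \cref{thm:generalpolytope}: \ref{it:F1} holds because $-H(\cdot/d)$ is convex and continuous on the scaled simplex $\{p \ge 0 : \sum_i p_i = d\}$ and $F$ agrees with the sum of this function and the indicator of the simplex; \ref{it:F2} is immediate from the symmetry of the Shannon entropy; and for \ref{it:F4} one observes that $\mu_{\Sym}(s)$ is positive semidefinite with trace $d$ (by~\eqref{eq:diagonal-polytope} and the identity $\sum_i \tr(s_i^\dagger s_i) = d\|s\|^2$), so $\spec\mu_{\Sym}(s) \in \dom F$, and $\Delta_{\Sym}(t) \subseteq \dom F$ follows by passing to the orbit closure.

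With all hypotheses verified, \cref{thm:generalpolytope} yields
\[
\min_{p \in \Delta_{\Sym}(t)} F(p) = \max_{g \in \GL(n)} \min_{p \in \Omega_{\Sym}(g\cdot t)} F(p).
\]
Each support polytope $\Omega_{\Sym}(g\cdot t)$ also lies in $\{p \ge 0 : \sum_i p_i = d\}$ because its vertices $e_{j_1}+\cdots+e_{j_d}$ do, so $F$ restricts to $-H(\cdot/d)$ on both sides. Negating interchanges $\min$ and $\max$ and turns $-F$ into $H(\cdot/d)$; exponentiating base $2$ then yields exactly the identity in the statement. As this is a direct specialization of \cref{thm:generalpolytope}, I anticipate no real obstacles: the argument is mechanical once one correctly identifies the diagonal-action moment and support polytopes with those appearing in the general minimax formula.
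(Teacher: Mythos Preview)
Your proposal is correct and matches the paper's approach exactly: the paper simply states that since $p \mapsto -H(p/d)$ is convex, symmetric, and l.s.c., the result is a straightforward application of \cref{thm:moment-polytope-general-q} (equivalently \cref{thm:generalpolytope}) to the diagonal action. Your verification of \ref{it:F1}--\ref{it:F4} and the observation that both polytopes lie in the scaled simplex are the details behind that one-line justification.
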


It was proved in~\cite{CFTZ-22} that $F_{\Sym}$ is sub-multiplicative, i.e., for $t\in(\CC^n)^{\otimes d}$ and $s\in(\CC^m)^{\otimes d}$, we have $F_{\Sym}(t\otimes s)\leq F_{\Sym}(t)F_{\Sym}(s)$.
The proof relies on a non-trivial relationship between the moment polytopes~$\Delta_{\Sym}(t\otimes s)$ and~$\Delta_{\Sym}(t), \Delta_{\Sym}(s)$.
More concretely, they proved that the former polytope is contained in a certain product of the latter polytopes, using a representation-theoretic description of the moment polytopes.
Here we reprove the sub-multiplicativity of~$F_{\Sym}$ using \cref{cor:symmetric-functional} and a simple description of the support polytope~$\Omega_{\Sym}(t\otimes s)$ in terms of $\Omega(t)$ and $\Omega(s)$ provided by the following~lemma:

\begin{lem}
\label{lem:product-support}
    Let $t\in(\CC^n)^{\otimes d}$ and $s\in(\CC^m)^{\otimes d}$ be $d$-tensors.
    Then,
    \[
    \Omega(t\otimes s) = \conv\left\{ (q_1 \ot r_1, q_2 \ot r_2,\dots,q_d \ot r_d) \mid q\in \Omega(t), r\in \Omega(s) \right\}.
    \]
    Therefore,
    \begin{equation}\label{eq:omega sym t s}
    \Omega_{\Sym}(t\otimes s) = \conv\left\{ \sum_{i=1}^d q_i \ot r_i \mid q\in \Omega(t), r\in \Omega(s) \right\}.
    \end{equation}
\end{lem}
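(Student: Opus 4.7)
The plan is to use the explicit description $\Omega(v) = \conv \supp(v)$ from \cref{eq:support poly concrete} together with a direct computation of $\supp(t \ot s)$. Under the natural identification $(\CC^n)^{\ot d} \ot (\CC^m)^{\ot d} \cong (\CC^n \ot \CC^m)^{\ot d}$, the standard basis vector $e_j \ot e_k$ of $\CC^n \ot \CC^m$ has weight $e_j \ot e_k \in \RR^n \ot \RR^m$ with respect to the $i$-th torus factor of $\GL(nm)^{\times d}$. Expanding $t \ot s$ in this basis, the nonzero coefficients are indexed exactly by tuples for which both $t_{j_1,\ldots,j_d}$ and $s_{k_1,\ldots,k_d}$ are nonzero, yielding $\supp(t \ot s) = \{(e_{j_1} \ot e_{k_1}, \ldots, e_{j_d} \ot e_{k_d}) : t_{j_1,\ldots,j_d} \neq 0, \; s_{k_1,\ldots,k_d} \neq 0\}$.

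One inclusion of the first identity is then immediate: every element of $\supp(t \ot s)$ has the form $(q_1 \ot r_1, \ldots, q_d \ot r_d)$ with $q = (e_{j_1}, \ldots, e_{j_d}) \in \supp(t) \subseteq \Omega(t)$ and likewise $r \in \Omega(s)$, so taking convex hulls gives that $\Omega(t \ot s)$ is contained in the right-hand side $P$. For the reverse inclusion, write arbitrary $q \in \Omega(t)$ and $r \in \Omega(s)$ as convex combinations $q = \sum_\alpha \lambda_\alpha q^\alpha$ and $r = \sum_\beta \mu_\beta r^\beta$ of elements of $\supp(t)$ and $\supp(s)$ respectively. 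The key observation is that because a single pair $(q,r)$ governs all $d$ coordinates of $(q_1 \ot r_1, \ldots, q_d \ot r_d)$, bilinear expansion produces a single convex combination with product coefficients $\lambda_\alpha \mu_\beta$ that are uniform across the $d$ components: $(q_1 \ot r_1, \ldots, q_d \ot r_d) = \sum_{\alpha, \beta} \lambda_\alpha \mu_\beta (q^\alpha_1 \ot r^\beta_1, \ldots, q^\alpha_d \ot r^\beta_d)$. Since each summand lies in $\supp(t \ot s)$, the left-hand side lies in $\Omega(t \ot s)$.

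The symmetric statement \eqref{eq:omega sym t s} then follows by pushing both sides of the first identity forward along the linear map $\sigma \colon (p_1, \ldots, p_d) \mapsto \sum_{i=1}^d p_i$ and invoking \cref{eq:diagonal support polytope}, using that linear images commute with convex hulls. I expect the only delicate point to be the reverse inclusion in the first identity, specifically the realization that the product weights $\lambda_\alpha \mu_\beta$ can serve all $d$ coordinates simultaneously; the remaining steps are routine unfoldings of definitions.
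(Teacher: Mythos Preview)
Your proof is correct and follows essentially the same route as the paper: both compute $\supp(t\otimes s)$ explicitly from \cref{eq:support polytope intro}/\cref{eq:support poly concrete} and then invoke bilinearity of the tensor product to match convex hulls, deducing the symmetric statement from the first via the linear map $p\mapsto\sum_i p_i$. The paper compresses the bilinearity step into a single sentence, whereas you spell out the product coefficients $\lambda_\alpha\mu_\beta$ explicitly, but the underlying argument is identical.
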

\begin{proof}
The second part follows from the first using $\Omega_{\Sym}(t\otimes s) = \{\sum_{i=1}^d p_i\mid p\in \Omega(t\otimes s)\}$ (see \cref{eq:diagonal support polytope}) and the linearity of the map $p\mapsto \sum_i p_i$.
We now prove the first part.
By \cref{eq:support polytope intro}, we have
\begin{align*}
    \Omega(t) = \conv\left\{ (e_{i_1},\dots,e_{i_d}) : t_{i_1,\dots,i_d} \neq 0 \right\} \quad\text{and}\quad
    \Omega(s) = \conv\left\{ (e_{j_1},\dots,e_{j_d}) : s_{j_1,\dots,j_d} \neq 0 \right\}.
\end{align*}
On the other hand, identifying~$\CC^{mn} \cong \CC^m \ot \CC^n$, we have
\[
    \Omega(t\otimes s) = \conv\left\{ (e_{i_1} \ot e_{j_1}, e_{i_2} \ot e_{j_2}, \dots, e_{i_d} \ot e_{j_d} )\mid t_{i_1,i_2,\dots,i_d}\neq 0 \text{ and } s_{j_1,j_2,\dots,j_d}\neq 0 \right\}.
\]
The claim follows at once using the bilinearity of the tensor product.
\end{proof}

\begin{cor}\label{cor:product marginals}
    Let $t\in(\CC^n)^{\otimes d}$ and $s\in(\CC^m)^{\otimes d}$ be $d$-tensors.
    Let $p \in \Omega_{\Sym}(t\otimes s) \subseteq \RR^{nm} \cong \RR^n \times \RR^m$.
    Interpreting $p/d$ as a joint probability distribution on $[n] \times [m]$, its marginal distributions are of the form~$q/d$ and~$r/d$, where $q \in \Omega_{\Sym}(t)$ and $r \in \Omega_{\Sym}(s)$.
\end{cor}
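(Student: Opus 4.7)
The plan is to deduce the statement directly from \cref{lem:product-support}, specifically from the explicit description of $\Omega_{\Sym}(t \otimes s)$ in \eqref{eq:omega sym t s}, together with the observation that marginalization is a linear operation and that elements of $\Omega(t)$ and $\Omega(s)$ are componentwise probability vectors.

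First I would unpack the building blocks appearing in \eqref{eq:omega sym t s}. Any $q \in \Omega(t) \subseteq \RR^n \times \cdots \times \RR^n$ is a convex combination of tuples $(e_{i_1}, \ldots, e_{i_d})$ of standard basis vectors, so each component $q_i \in \RR^n$ is itself a convex combination of standard basis vectors in~$\RR^n$, and in particular its entries sum to~$1$. The analogous statement holds for each $r_i$ coming from $r \in \Omega(s)$.

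Next I would compute the marginals of a single summand $q_i \otimes r_i \in \RR^n \otimes \RR^m$, viewed as an $n \times m$ matrix with entries $(q_i)_a (r_i)_b$. Summing over~$b$ gives the first marginal $(\sum_b (r_i)_b) q_i = q_i$, since $r_i$ has entries summing to~$1$; summing over~$a$ analogously yields the second marginal $r_i$. By linearity of marginalization, the first marginal of $\sum_{i=1}^d q_i \otimes r_i$ is $\sum_{i=1}^d q_i$, which lies in $\Omega_{\Sym}(t)$ by \eqref{eq:diagonal support polytope}; the second marginal equals $\sum_{i=1}^d r_i \in \Omega_{\Sym}(s)$. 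Note also that the total sum of $\sum_i q_i \otimes r_i$ is $d$, matching the fact that elements of $\Omega_{\Sym}(t \otimes s)$ have entries summing to~$d$.

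Finally, a general $p \in \Omega_{\Sym}(t \otimes s)$ is by \eqref{eq:omega sym t s} a convex combination of such vectors $\sum_{i=1}^d q_i \otimes r_i$. Since marginalization is linear and both $\Omega_{\Sym}(t)$ and $\Omega_{\Sym}(s)$ are convex, the two marginals of $p$ are themselves convex combinations of elements of $\Omega_{\Sym}(t)$ and $\Omega_{\Sym}(s)$ respectively, hence lie in these polytopes. Rescaling by $1/d$ then gives the claim about $p/d$ and its marginals~$q/d$ and~$r/d$. There is no real obstacle here; the only thing one has to be careful about is the bookkeeping of the two different ways of viewing $\RR^n \otimes \RR^m$ (as a vector in $\RR^{nm}$ versus as a matrix of joint probabilities) so that the linear marginalization map is correctly identified with the partial trace implicit in \eqref{eq:diagonal support polytope}.
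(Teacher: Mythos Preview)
Your proposal is correct and is essentially a detailed unpacking of the paper's own one-line proof, which simply cites \eqref{eq:omega sym t s} and \eqref{eq:diagonal support polytope} together with the observation that $\Omega(t)$ and $\Omega(s)$ consist of $d$-tuples of probability distributions. The steps you spell out---that each $q_i$ and $r_i$ has entries summing to~$1$, that the marginals of $q_i \otimes r_i$ are $q_i$ and $r_i$, and that marginalization commutes with convex combinations---are exactly the content behind that citation.
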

\begin{proof}
This follows from \cref{eq:omega sym t s,eq:diagonal support polytope}, by observing that~$\Omega(t)$ and~$\Omega(s)$ consist of $d$-tuples of probability distributions.
\end{proof}

\begin{prop}[{\cite[Prop.~B.2]{CFTZ-22}}]
    For $t\in(\CC^n)^{\otimes d}, s\in(\CC^m)^{\otimes d}$, we have $F_{\Sym}(t\otimes s)\leq F_{\Sym}(t) F_{\Sym}(s)$.
\end{prop}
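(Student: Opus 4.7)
The plan is to derive submultiplicativity from the new minimax description of~$F_{\Sym}$ in \cref{cor:symmetric-functional}, using the explicit description of the support polytope of a tensor product in \cref{lem:product-support} together with the classical subadditivity of Shannon entropy for joint probability distributions. In particular, this bypasses the representation-theoretic comparison of moment polytopes used in~\cite{CFTZ-22}.

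First I would pick minimizers $g \in \GL(n)$ and $h \in \GL(m)$ attaining the minima
\[
    F_{\Sym}(t) = \max_{q \in \Omega_{\Sym}(g\cdot t)} 2^{H(q/d)}, \qquad
    F_{\Sym}(s) = \max_{r \in \Omega_{\Sym}(h\cdot s)} 2^{H(r/d)}
\]
supplied by \cref{cor:symmetric-functional}. Under the identification $(\CC^n)^{\otimes d}\otimes(\CC^m)^{\otimes d} \cong (\CC^n\otimes\CC^m)^{\otimes d}$ that defines $t\otimes s$ as a tensor in $(\CC^{nm})^{\otimes d}$, the element $g\otimes h \in \GL(nm)$ satisfies $(g\otimes h)\cdot(t\otimes s) = (g\cdot t)\otimes(h\cdot s)$, since $(g\otimes h)^{\otimes d}$ acts as $g^{\otimes d}$ and $h^{\otimes d}$ on the two factors separately. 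Plugging $g\otimes h$ into the minimax formula for $F_{\Sym}(t\otimes s)$ then gives
\[
    F_{\Sym}(t\otimes s)
\;\le\; \max_{p\in\Omega_{\Sym}((g\cdot t)\otimes(h\cdot s))} 2^{H(p/d)}.
\]

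Next I would apply \cref{cor:product marginals}: viewing any such $p$ as a joint probability distribution $p/d$ on $[n]\times[m]$, its marginals are of the form $q/d$ with $q\in\Omega_{\Sym}(g\cdot t)$ and $r/d$ with $r\in\Omega_{\Sym}(h\cdot s)$. Subadditivity of Shannon entropy for joint distributions then yields $H(p/d)\le H(q/d)+H(r/d)$, and hence
\[
    2^{H(p/d)} \;\le\; 2^{H(q/d)}\cdot 2^{H(r/d)} \;\le\; F_{\Sym}(t)\,F_{\Sym}(s)
\]
by our choice of $g$ and $h$. Taking the maximum over $p$ concludes.

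The argument is essentially routine once \cref{cor:symmetric-functional} is available, so I do not expect a real obstacle; the only bookkeeping point deserving care is the identity $(g\otimes h)\cdot(t\otimes s)=(g\cdot t)\otimes(h\cdot s)$ under the reshuffling identifying the two tensor factors with $(\CC^n\otimes\CC^m)^{\otimes d}$, together with the harmless observation that $p \mapsto 2^{H(p/d)}$ is continuous so the maxima over the support polytopes are attained.
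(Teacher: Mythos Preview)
Your proposal is correct and follows essentially the same approach as the paper's proof: both use the minimax description from \cref{cor:symmetric-functional}, restrict the outer minimization to group elements of the form $g\otimes h$, invoke \cref{cor:product marginals} to identify the marginals of $p/d$, and conclude via subadditivity of Shannon entropy. The only cosmetic difference is that you fix optimal $g,h$ at the outset, whereas the paper carries the minimization over $g,h$ through to the last line.
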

\begin{proof}
By \cref{cor:symmetric-functional}, we have
\begin{align*}
    F_{\Sym}(t \ot s)
= \min_{G\in\GL(nm)} \max_{p\in\Omega_{\Sym}(G \cdot (t \ot s))} 2^{H(p/d)}
\leq \min_{\substack{g\in\GL(n)\\ h\in\GL(m)}} \max_{p\in\Omega_{\Sym}((g \cdot t) \ot (h \cdot s))} 2^{H(p/d)}
\end{align*}
By \cref{cor:product marginals}, for any point~$p\in\Omega_{\Sym}((g \cdot t) \ot (h \cdot s))$, we know that~$p/d$ is a joint distribution with marginal distributions~$q/d$ and~$r/d$, where $q \in \Omega_{\Sym}(g \cdot t)$ and $r \in \Omega_{\Sym}(h \cdot s)$.
Since the Shannon entropy is sub-additive, we have $H(p/d) \leq H(q/d) + H(r/d)$, and hence
\begin{equation*}
    \min_{\substack{g\in\GL(n)\\ h\in\GL(m)}} \max_{p\in\Omega_{\Sym}((g \cdot t) \ot (h \cdot s))} 2^{H(p/d)}
\leq \min_{\substack{g\in\GL(n)\\ h\in\GL(m)}} \max_{\substack{q\in\Omega_{\Sym}(g \cdot t)\\ r\in\Omega_{\Sym}(h \cdot s)}} 2^{H(q/d) + H(r/d)}
= F_{\Sym}(t) F_{\Sym}(s).
\end{equation*}
\end{proof}

\subsection{Asymptotic slice rank}\label{sub:asymptotic slice rank}
To any tensor $t\in\CC^{n_1}\otimes\CC^{n_2}\otimes\dots\otimes\CC^{n_d}$, one can assign a $d$-uniform, $d$-partite hypergraph~$\HH_t$, whose vertex set is $[n_1]\sqcup [n_2] \sqcup\dots\sqcup [n_d]$ and the hyperedge set~$E(\HH_t)$ is determined by the tensor's support, i.e., there is an edge $(i_1,i_2,\dots,i_d)$ if and only if $t_{i_1,i_2,\dots,i_d}\neq 0$.
For a hypergraph invariant~$\iota$, the quantity
\[
\min_{g\in\GL} \iota(\HH_{g\cdot t})
\]
can then be interpreted as a notion of ``rank'' for tensors.
In this section we discuss the slice rank, which corresponds the choice where $\iota$ is the vertex cover number.
Other notable examples include the $G$-stable rank and the non-commutative rank, which will be discussed in subsequent sections.

We first give a geometric definition of the (weighted) slice rank.
The original definition is due to Tao~\cite{Tao-16,Tao-Sawin-16}. Its weighted generalization was given in~\cite{CLZ-23}.

\begin{defn}[Slice decomposition]
We say that a tensor $t\in\CC^{n_1}\otimes\CC^{n_2}\otimes\dots\otimes\CC^{n_d}$ admits a \emph{slice decomposition} of size $(r_1,r_2,\dots,r_d)$ if there exist subspaces $U_i\leq \CC^{n_i}$ of dimensions $r_i$ such that \[
t \in U_1 \otimes\CC^{n_2}\otimes\dots\otimes\CC^{n_d} + \dots + \CC^{n_1}\otimes\dots\otimes\CC^{n_{d-1}}\otimes U_d. \]
\end{defn}

\begin{defn}[Slice rank]
    For $\xi\in\Xi\coloneqq\{\xi\in\RR_{\geq 0}^d \mid \max_i\xi_i = 1\}$, the \emph{$\xi$-weighted slice rank} of a tensor $t \in \CC^{n_1} \ot \cdots \ot \CC^{n_d}$ is defined as
    \[
        \SR_\xi(t) \coloneqq \min\{r_1^{1/\xi_1}+r_2^{1/\xi_2}+\dots+r_d^{1/\xi_d}\mid t\emph{ admits a slice decomposition of size }(r_1,r_2,\dots,r_d)\}.
    \]
    If $\xi_i=0$ for some coordinate, then $r_i$ is required to be zero and the corresponding term $r_i^{1/\xi_i}$ is left out.
    The \emph{slice rank} of~$t$ is defined as $\SR(t)=\SR_{(1,\dots,1)}(t)$, i.e., $\xi=(1,\dots,1)$ is the all-ones vector.
\end{defn}

\noindent
It is clear from the definition that $\SR_\xi(g\cdot t)=\SR_\xi(t)$ for $g\in\GL$.

In fact, by choosing base changes $g_i\in\GL(n_i)$ which maps $U_i$ to a coordinate subspace, we see that the slice rank can be written as a minimum of hypergraph vertex cover numbers.
We first define the latter quantity and then state this observation.

\begin{defn}[Vertex cover number]
For $\xi\in\Xi$, we define $\xi$-weighted vertex cover number $\tau_{\xi}(\HH_t)$ as the value of the optimization problem \[
\begin{split}
\tau_\xi(\HH_t) = \min \; & \sum_{i=1}^{d} \left(\sum_{j=1}^{n_i}  u_{ij}\right)^{1/\xi_i}\\
\text{s.t. }\, & u_{1,i_1} + u_{2,i_2} + \dots + u_{d,i_d}\geq 1\quad \text{ for every }e=(i_1,i_2,\dots,i_d)\in E(\HH)\\
&u \in \{0,1\}^{n_1} \times \cdots \times \{0,1\}^{n_d}.
\end{split}
\]
\end{defn}

\begin{lem}
For every tensor $t$ and every $\xi\in\Xi$, we have $\SR_\xi(t) = \min_{g\in\GL} \tau_\xi(\HH_{g\cdot t})$.
\end{lem}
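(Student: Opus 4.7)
The plan is to prove the two inequalities separately, each by constructing a suitable data object from the other side. Both directions exploit the obvious bijection between coordinate subspaces of $\CC^{n_i}$ and $\{0,1\}$-vectors in $\{0,1\}^{n_i}$, together with the fact that an arbitrary subspace $U_i \leq \CC^{n_i}$ of dimension $r_i$ can be mapped to a fixed coordinate subspace (say, the span of $e_1,\dots,e_{r_i}$) by an appropriate element $g_i \in \GL(n_i)$.

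For the inequality $\SR_\xi(t) \leq \min_{g\in\GL}\tau_\xi(\HH_{g\cdot t})$, fix $g\in\GL$ and a vertex cover $u\in\{0,1\}^{n_1}\times\cdots\times\{0,1\}^{n_d}$ of $\HH_{g\cdot t}$. Let $r_i \coloneqq \sum_j u_{ij}$, and set $V_i \coloneqq \mathrm{span}\{e_j : u_{ij}=1\}\subseteq \CC^{n_i}$. The vertex cover property means that every nonzero entry $(g\cdot t)_{i_1,\dots,i_d}$ satisfies $i_k \in \mathrm{supp}(u_k)$ for some $k$, so $g\cdot t$ lies in $\sum_{k=1}^d \CC^{n_1}\ot\cdots\ot V_k\ot\cdots\ot\CC^{n_d}$. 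Applying $g^{-1}$ yields a slice decomposition of $t$ with subspaces $U_k \coloneqq g_k^{-1} V_k$ of dimension $r_k$, giving $\SR_\xi(t) \leq \sum_k r_k^{1/\xi_k}$. Taking the minimum over $u$ and $g$ gives the desired bound.

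For the reverse inequality, take a slice decomposition $t = \sum_{k=1}^d t^{(k)}$ with $t^{(k)}\in \CC^{n_1}\ot\cdots\ot U_k\ot\cdots\ot\CC^{n_d}$ and $\dim U_k = r_k$. Choose $g_k \in \GL(n_k)$ with $g_k U_k = \mathrm{span}\{e_1,\dots,e_{r_k}\}$, and define $u_{kj} \coloneqq 1$ if $j\leq r_k$ and $0$ otherwise. Then $g\cdot t = \sum_k (g\cdot t)^{(k)}$ where each $(g\cdot t)^{(k)}$ lies in $\CC^{n_1}\ot\cdots\ot\mathrm{span}(e_1,\dots,e_{r_k})\ot\cdots\ot\CC^{n_d}$. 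Hence any nonzero coordinate $(g\cdot t)_{i_1,\dots,i_d}$ forces $i_k \leq r_k$ for some $k$, i.e., $u_{k,i_k}=1$, so $u$ is a vertex cover of $\HH_{g\cdot t}$ of value $\sum_k r_k^{1/\xi_k}$.

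There is no real obstacle; the statement essentially records the fact that the slice rank is, by the singular value decomposition / row reduction, a coordinate-free version of the hypergraph vertex cover number. The only point requiring a small comment is the convention for coordinates with $\xi_i=0$, where the definition forces $r_i=0$ and (correspondingly) $u_{ij}=0$ for all $j$, so the construction above is consistent in both directions.
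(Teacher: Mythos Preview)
Your proof is correct and follows exactly the approach the paper indicates: the paper does not spell out a proof but merely remarks that ``by choosing base changes $g_i\in\GL(n_i)$ which map $U_i$ to a coordinate subspace, we see that the slice rank can be written as a minimum of hypergraph vertex cover numbers,'' and your two-inequality argument is precisely the detailed version of this observation. Your handling of the $\xi_i=0$ convention is also consistent with the paper's definition.
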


We will now prove that the regularization of the weighted slice rank and the weighted vertex cover number satisfy an analogous relation.
The former is known as the asymptotic slice rank and it is defined as follows:

\begin{defn}[Asymptotic slice rank]
    For $\xi\in\Xi$, the \emph{$\xi$-weighted asymptotic slice rank} of a tensor $t \in \CC^{n_1} \ot \cdots \ot \CC^{n_d}$ is defined as the limit (whenever it exists)
    \begin{equation}\label{eq:weighted-asymptotic-slice-rank}
        \widetilde{\SR}_\xi(t) \coloneqq \lim_{n\rightarrow\infty}\SR_\xi(t^{\otimes n})^{1/n}.
    \end{equation}
    The \emph{asymptotic slice rank} of~$t$ is defined as the special case $\widetilde{\SR}(t)=\widetilde{\SR}_{(1,\dots,1)}(t)$.
\end{defn}

We note that since the slice rank is neither sub- nor super-multiplicative (\cite[Example~5.2]{CVZ2018}), it is not a priori clear why the limit should exist.

\begin{thm}[\cite{CVZ2018,CLZ-23}]\label{thm:weighted-cvz-formula}
For every tensor~$t$ and every~$x \in \Xi$, the limit~\eqref{eq:weighted-asymptotic-slice-rank} exists and can be computed as
\[
    \widetilde{\SR}_\xi(t) = \max_{p\in\Delta(t)} \min_i 2^{H(p_i)/\xi_i} = \min_{\theta\in\Theta} F_\theta(t)^{1/\braket{\theta,\xi}},
\]
where $F_\theta$ are the quantum functionals.
In particular,
$\widetilde{\SR}(t) = \max_{p\in\Delta(t)} \min_i 2^{H(p_i)} = \min_{\theta\in\Theta} F_{\theta}(t)$.
\end{thm}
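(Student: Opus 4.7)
The plan is to prove the two equalities of the theorem separately; existence of the limit defining $\widetilde{\SR}_\xi(t)$ then follows from matching upper and lower bounds.

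For the second equality, $\max_{p\in\Delta(t)}\min_i 2^{H(p_i)/\xi_i} = \min_{\theta\in\Theta} F_\theta(t)^{1/\langle\theta,\xi\rangle}$, taking base-$2$ logarithms reduces it to the minimax identity
\[
    \max_{p\in\Delta(t)}\min_i\frac{H(p_i)}{\xi_i} = \min_{\theta\in\Theta}\max_{p\in\Delta(t)}\frac{\sum_i\theta_i H(p_i)}{\langle\theta,\xi\rangle}.
\]
I apply Sion's minimax theorem to the bivariate function $\psi(\theta,p) = \sum_i\theta_i H(p_i)/\langle\theta,\xi\rangle$ on the compact convex product $\Theta\times\Delta(t)$: it is concave in $p$ (a nonnegative linear combination of concave Shannon entropies, divided by a $p$-independent constant) and linear-fractional in $\theta$, hence both quasi-convex and quasi-concave. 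Sion's theorem permits exchanging $\min_\theta$ and $\max_p$, and the inner minimization of the linear-fractional $\psi(\cdot,p)$ over the simplex $\Theta$ is attained at a vertex $e_j$, yielding $\min_j H(p_j)/\xi_j$.

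For the upper bound in the first equality, I combine three ingredients. The preceding lemma applied to $t^{\otimes n}$, together with the hypergraph-product identity $\HH_{(g\cdot t)^{\otimes n}} = \HH_{g\cdot t}^{\times n}$, gives $\SR_\xi(t^{\otimes n}) \leq \tau_\xi(\HH_{g\cdot t}^{\times n})$ for every $g\in\GL$, whence $\widetilde{\SR}_\xi(t) \leq \min_g \tilde\tau_\xi(\HH_{g\cdot t})$ after $n$-th roots and $n\to\infty$. The weighted Tao--Sawin formula $\tilde\tau_\xi(\HH) = \max_{p\in\Omega(\HH)}\min_i 2^{H(p_i)/\xi_i}$ (which follows from an analogous minimax argument on support polytopes) rewrites this as $\min_g\max_{p\in\Omega(g\cdot t)}\min_i 2^{H(p_i)/\xi_i}$. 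Finally, the function $F(p)\coloneqq\max_i -H(p_i)/\xi_i$ is convex, symmetric, and l.s.c., so \cref{thm:generalpolytope} converts this min-max over group elements and support polytopes into the desired maximum $\max_{p\in\Delta(t)}\min_i 2^{H(p_i)/\xi_i}$ over the moment polytope.

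For the matching lower bound $\widetilde{\SR}_\xi(t)\geq\min_{\theta\in\Theta} F_\theta(t)^{1/\langle\theta,\xi\rangle}$, I use the spectral-point property of $F_\theta$ established in \cref{thm:support-equals-quantum}: $F_\theta$ is monotone under restriction, additive under direct sum, and multiplicative under tensor product. Given any slice decomposition $t^{\otimes n} = s_1 + \cdots + s_d$ with $s_i$ a rank-$r_i$ slice in direction $i$, the tensor $t^{\otimes n}$ restricts from $\bigoplus_i s_i$, so $F_\theta(t)^n = F_\theta(t^{\otimes n}) \leq \sum_i F_\theta(s_i)$. Bounding each $F_\theta(s_i)$ by the flattening-rank estimate and combining with a weighted H\"older-type inequality pairing $\theta_i$ with $\xi_i$ yields $\SR_\xi(t^{\otimes n})\gtrsim F_\theta(t)^{n/\langle\theta,\xi\rangle}$ up to subexponential factors, from which the bound follows on taking $n$-th roots. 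The main obstacle is precisely this H\"older step, in which $\theta$ and $\xi$ must be correctly aligned and format-dependent constants absorbed so that the asymptotic exponent $1/\langle\theta,\xi\rangle$ emerges; the argument is carried out for $\xi=(1,\ldots,1)$ in~\cite{CVZ2018} and extended to general $\xi\in\Xi$ in~\cite{CLZ-23}, and may be invoked directly.
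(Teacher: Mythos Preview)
The paper does not prove this theorem at all; it is quoted from \cite{CVZ2018,CLZ-23} and then invoked as a black box in the proof of \cref{thm:main-weighted-slice-rank}. So there is no in-paper argument to compare against, only the cited references.

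Your Sion minimax step for the second equality is correct, and is exactly the mechanism the paper invokes (attributed to \cite{CLZ-23}) for the analogous identity in \cref{prop:tao-sawin-weighted}. Your upper-bound route is valid and genuinely different from the representation-theoretic arguments of the cited references: you pass through the asymptotic vertex cover number via \cref{prop:tao-sawin-weighted} and then collapse $\min_{g}\max_{p\in\Omega(g\cdot t)}$ to $\max_{p\in\Delta(t)}$ using \cref{thm:generalpolytope} with $F(p)=-\min_i H(p_i)/\xi_i$. The paper makes precisely this observation in the remark after \cref{thm:main-weighted-slice-rank}. One inaccuracy: you write that the weighted Tao--Sawin formula ``follows from an analogous minimax argument on support polytopes'', but only the \emph{second} equality of \cref{prop:tao-sawin-weighted} is a minimax; the first (existence and value of $\tilde\tau_\xi$) requires the asymptotic combinatorial analysis of covers.

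The lower-bound sketch is the weak point. The chain ``$t^{\otimes n}$ restricts from $\bigoplus_i s_i$, so $F_\theta(t)^n\le\sum_i F_\theta(s_i)$'' already uses additivity of $F_\theta$, which \cref{thm:support-equals-quantum} does supply---so there is no circularity---but the subsequent ``weighted H\"older-type inequality'' is too vague to assess. Plugging the flattening-rank bound $F_\theta(s_i)\le r_i^{\theta_i}\prod_{j\neq i} n_j^{n\theta_j}$ into $\sum_i F_\theta(s_i)$ leaves exponential format factors and does not, by any standard H\"older step, produce the exponent $1/\langle\theta,\xi\rangle$; the actual argument in \cite{CLZ-23} is more delicate. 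Since you defer to \cite{CLZ-23} in the end anyway, the preceding heuristic adds nothing and is best replaced by a direct citation.
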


We will now show that the asymptotic slice rank of a tensor can be computed as the minimum of the \emph{asymptotic vertex cover number} of $\HH_{g\cdot t}$ over $g\in\GL$.
If $\mathcal{G}=(V_1(\mathcal{G})\sqcup\dots\sqcup V_d(\mathcal{G}), E(\mathcal{G}))$ and $\HH=(V_1(\HH)\sqcup\dots\sqcup V_d(\HH), E(\HH))$ are two $d$-uniform, $d$-partite hypergraphs then their Kronecker product $\mathcal{G}\times \HH$ is defined as the hypergraph with vertex sets $\bigsqcup_{i=1}^d V_i(\mathcal{G})\times V_i(\HH)$ and edge set which consists of tuples $((v_1,v'_1),(v_2,v'_2),\dots, (v_d,v'_d))$ for $(v_1,\dots,v_d)\in E(\mathcal{G})$ and $(v'_1,\dots,v'_d)\in E(\HH)$.
In particular, we have $E(\mathcal{G}\times \HH)= E(\mathcal{G})\times E(\HH)$ via the identification $((v_1,v'_1),(v_2,v'_2),\dots, (v_d,v'_d)) = ((v_1,\dots,v_d),(v_1',\dots,v_d'))$.
If $t$ and $s$ are two tensors, this definition has the property that \[
\HH_{t\otimes s} = \HH_t \times \HH_s.
\]

\begin{defn}[Asymptotic vertex cover number]
For $\xi\in\Xi$, the \emph{$\xi$-weighted asymptotic vertex cover number} of a hypergraph~$\HH$ is defined as (if the limit exists)
\begin{equation}\label{eq:weighted-asymptotic-vertex-cover-number}
\tilde{\tau}_\xi(\HH)\coloneqq \lim_{n\rightarrow\infty}\tau_\xi(H^{\times n})^{1/n}.
\end{equation}
The \emph{asymptotic vertex cover number} of~$\HH$ is defined as the special case $\tilde{\tau}(\HH)=\tilde{\tau}(\HH)_{(1,\dots,1)}$.
\end{defn}

Tao and Sawin \cite{Tao-Sawin-16} (see also \cite[Proposition~4.3.3]{NvdB}) proved a formula that shows that the asymptotic vertex cover number can be computed by maximizing the minimum entropy of the marginal distributions of probability distributions on~$E(\HH_t)$.
Equivalently, this can be stated as an entropy optimization over the support polytope~$\Omega(t)$.
We provide a weighted version of their result and state an alternative description.

\begin{prop}
\label{prop:tao-sawin-weighted}
For every tensor~$t$ and every~$\xi \in \Xi$, the limit~\eqref{eq:weighted-asymptotic-vertex-cover-number} exists and we have
\[
    \tilde{\tau}_{\xi}(\HH_t)
= \max_{p\in\Omega(t)} \min_i 2^{H(p_i)/\xi_i}
= \min_{\theta\in\Theta} \left( \max_{p \in \Omega(t)} 2^{\sum_{i=1}^d \theta_i H(p_i)} \right)^{1/\braket{\theta,\xi}}.
\]
\end{prop}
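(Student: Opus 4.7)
The plan is to prove the two equalities separately: the second is a convex-analytic minimax identity, while the first is a weighted extension of the classical Tao--Sawin formula. Existence of the limit in~\eqref{eq:weighted-asymptotic-vertex-cover-number} will follow as a byproduct once matching upper and lower bounds are established.

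I would begin with the easier minimax equality. Taking base-$2$ logarithms, the claim reduces to showing
\[
    \max_{p\in\Omega(t)} \min_i \frac{H(p_i)}{\xi_i} = \min_{\theta\in\Theta}\frac{1}{\braket{\theta,\xi}}\max_{p\in\Omega(t)}\sum_i \theta_i H(p_i).
\]
The key observation is the scalar identity
\[
    \min_i \frac{a_i}{\xi_i} = \min_{\theta \in \Theta} \frac{\sum_i \theta_i a_i}{\braket{\theta,\xi}}
\]
valid for $a_i \geq 0$, which follows by reparametrizing $\theta\mapsto (\theta_i \xi_i/\braket{\theta,\xi})_i$ as a probability distribution ranging over all distributions supported on $\{i : \xi_i > 0\}$; any zero coordinates of $\xi$ are handled by restricting $\theta$ similarly. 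Substituting $a_i = H(p_i)$ produces a saddle function $L(p,\theta) = \sum_i \theta_i H(p_i)/\braket{\theta,\xi}$ that is concave and continuous in $p\in\Omega(t)$ (entropy is concave) and linear-fractional, hence quasi-convex and quasi-concave, in $\theta\in\Theta$. Since $\Omega(t)$ and $\Theta$ are compact convex sets, Sion's minimax theorem applies and yields the desired equality.

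For the weighted Tao--Sawin equality I would follow the method of types. For the upper bound, fix a rational $p\in\Omega(t)$, realized as the marginals of a probability distribution on $E(\HH_t)$. In the $n$-fold power $\HH_t^{\times n}$, group edges by their empirical type; for each type, select a single coordinate $i^\star$ minimizing $2^{n H(\hat p_i)/\xi_i}$ and add the $2^{n H(\hat p_{i^\star})+o(n)}$ coordinate-typical sequences to $U_{i^\star}$. Since there are only polynomially many types, the total $\xi$-weighted cost is $2^{n \min_i H(p_i)/\xi_i + o(n)}$; taking $n$-th roots, optimizing over $p$, and extending to irrational $p$ by continuity gives the upper bound. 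For the lower bound, take any $\xi$-weighted vertex cover $(U_1,\dots,U_d)$ of $\HH_t^{\times n}$: the uniform distribution on covered edges of $\HH_t^{\times n}$ descends to a distribution $p^{(n)}$ in a neighborhood of $\Omega(t)$ whose coordinate marginals $p_i^{(n)}$ satisfy $|U_i| \geq 2^{n H(p_i^{(n)}) - o(n)}$ via a Fano/counting-type inequality. Thus
\[
    \sum_i |U_i|^{1/\xi_i} \;\geq\; \max_i |U_i|^{1/\xi_i} \;\geq\; 2^{n\,\min_i H(p_i^{(n)})/\xi_i - o(n)},
\]
and passing to a convergent subsequence $p^{(n)}\to p\in\Omega(t)$ delivers the matching lower bound.

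The main technical obstacle is the lower bound in the weighted Tao--Sawin step: combining Fano-type control of $\log|U_i|/n$ with the weighted aggregation $\sum_i |U_i|^{1/\xi_i}$ requires careful accounting, particularly when some $\xi_i$ are small (so that $|U_i|^{1/\xi_i}$ could a priori dominate even if $|U_i|$ itself is small), and in controlling the error terms uniformly over all coordinates simultaneously. Nevertheless, no fundamentally new ideas beyond Tao and Sawin's original scheme should be needed.
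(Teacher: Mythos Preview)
Your treatment of the second equality via Sion's minimax theorem is essentially the same as the paper's (which cites the quasiconvex--quasiconcave minimax theorem from \cite{CLZ-23}). The upper bound for the first equality is also in the spirit of the paper's argument, which quotes Tao--Sawin's type-by-type covering as a black box. One minor muddle: the fixed rational $p$ you introduce plays no role in your construction, since the covering assigns \emph{every} type $\hat P$ to the coordinate minimizing $H(\hat p_i)/\xi_i$. The resulting cost is bounded by $2^{o(n)}\max_{\hat p\in\Omega(t)}\min_i 2^{nH(\hat p_i)/\xi_i}$, so no optimization over $p$ is needed at the end.

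The lower bound, however, has a genuine gap. Your claim that a single distribution $p^{(n)}$ (the ``uniform distribution on covered edges'') satisfies $|U_i|\geq 2^{nH(p_i^{(n)})-o(n)}$ for \emph{every} $i$ is false: if $U_1$ already covers all edges and $U_2=\cdots=U_d=\emptyset$, the inequality fails for $i\geq 2$. A Fano/AEP argument only yields such a bound for \emph{some} index $j$, namely one whose piece $\Gamma_j$ of the cover carries mass at least $1/d$. The paper (following Tao--Sawin) handles this with the opposite order of quantifiers: fix an arbitrary distribution $P$ on $E(\HH_t)$ with marginals $p\in\Omega(t)$ \emph{before} looking at the cover; then for any cover of $\HH_t^{\times n}$ there exists $j$ (depending on the cover) with $|\pi_j(\Gamma_{j,n})|\geq 2^{n(H(p_j)+o(1))}$, and since $p$ is fixed this gives
\[
\sum_i |U_i|^{1/\xi_i}\;\geq\;|\pi_j(\Gamma_{j,n})|^{1/\xi_j}\;\geq\;2^{n(H(p_j)/\xi_j+o(1))}\;\geq\;2^{n(\min_i H(p_i)/\xi_i+o(1))}.
\]
Maximizing over $p$ finishes the argument. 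Your scheme of extracting $p^{(n)}$ from the cover and passing to a convergent subsequence cannot recover this: even if you obtain $\text{cost}\geq 2^{n\min_i H(p_i^{(n)})/\xi_i}$ for some $p^{(n)}\in\Omega(t)$, there is no reason the limit of $p^{(n)}$ along minimizing covers should be a \emph{maximizer} of $\min_i H(p_i)/\xi_i$, so the lower bound will not match the upper bound.
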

\begin{proof}
    The second formula follows from the first by Neumann's minimax theorem for quasiconvex-quasiconcave functions, see~\cite[Thm.~50]{CLZ-23}.
    We now prove the first formula by a straightforward generalization of the proof for $\xi=(1,\dots,1)$ by Tao and Sawin~\cite{Tao-Sawin-16}.
    Note that a vertex cover can be viewed as a partition \[
    E(\HH) = \Gamma_1 \cup \Gamma_2 \cup \dots\cup \Gamma_d,
    \] with $\Gamma_i$ being the edges covered by the $i$-th vertex set of the cover.
    In particular, we have \[
    \tau(\HH) = \min_{E(\HH) = \bigcup_{i=1}^d \Gamma_i} \sum_i |\pi_i(\Gamma_i)|,
    \] where $\pi_i:E(\HH)\rightarrow V_i(\HH)$ denotes the $i$-th projection map.

    Tao and Sawin proved that for every probability distribution $P$ on $E(\HH)$ and for any vertex cover $E(\HH^{\times n})=E(\HH)^{\times n}=\Gamma_{1,n}\cup\dots\cup\Gamma_{d,n}$, there exists an index $j\in [d]$ such that \begin{equation}
    \label{eq:sawin-tao-lb}
    |\pi_j(\Gamma_{j,n})| \geq 2^{(H(p_j) + o(1))n},
    \end{equation}
    where $p_j\; (j=1,2,\dots,d)$ denote the marginal distributions of $P$.
    Then, by
    \cref{eq:sawin-tao-lb} we have $\sum_{i=1}^d |\pi_i(\Gamma_{i,n})|\geq \min_j |\pi_j(\Gamma_{j,n})| \geq \min_j 2^{ (H(p_j)+o(1))n}$, which holds for every vertex cover and every distribution $P$.
    For $\xi\in\Xi$, \eqref{eq:sawin-tao-lb} implies $|\pi_j(\Gamma_{j,n})|^{1/\xi_j} \geq 2^{n (H(p_j)/\xi_j+o(1))}$, and $\sum_{i=1}^d |\pi_i(\Gamma_{i,n})|^{1/\xi_i}\geq \min_j 2^{(H(p_j)/\xi_j+o(1))n}$.
    Taking $n\rightarrow\infty$ and maximizing over the probability distributions, we get
    \begin{equation*}
    \lim_{n\rightarrow\infty}\tau_\xi(\HH_t^{\times n})^{1/n} \geq \max_{p\in\Omega(t)} \min_i 2^{H(p_i)/\xi_i}.
    \end{equation*}

    Conversely, Tao and Sawin proved that there exist $p^{(1)},\dots,p^{(l)}\in\Omega(t)$ with $l=2^{o(n)}$ such that there exists a covering \[
    E(\HH_t^{\times n}) = \bigcup_{i=1}^l \Gamma_n^{(i)}
    \] with the property that for every $i\in [l]$ and $j\in [d]$ we have \[
    |\pi_j(\Gamma_n^{(i)})| \leq 2^{\left( H(p^{(i)}_j) + o(1) \right)n}.
    \]
    Let's denote by $j_i\in [d]$ the index that satisfies
\[
\min_{j\in [d]} H(p^{(i)}_j)/\xi_j = H(p^{(i)}_{j_i})/\xi_{j_i},
\] and regroup as $\Gamma_{k,n}\coloneqq \bigcup_{j_i=k} \Gamma_n^{(i)}$.
    Since $l=2^{o(n)}$, we have
    \[
        \tau_{\xi}(\HH_t^{\times n}) \leq \sum_{k = 1}^d|\pi_k(\Gamma_{k, n})|^{1/\xi_k} \leq \sum_{i = 1}^l\min_{j \in [d]}2^{(H(p_j^{(i)})/\xi_j + o(1))n} \le 2^{o(n)}\max_{p \in \Omega(t)}\min_{j \in [d]}2^{(H(p_j^{(i)})/\xi_j + o(1))n}.
    \]
    By taking $n\rightarrow\infty$ we get
    \[
        \lim_{n \to \infty}\tau_{\xi}(\HH_t^{\times n})^{1/n} \leq \max_{p \in \Omega(t)}\min_j2^{H(p_j)/\xi_j}. \qedhere
    \]
\end{proof}

We obtain the following theorem which generalizes \cref{cor:slice-rank} stated in the introduction.

\begin{thm}[Detailed version of \cref{cor:slice-rank}]
\label{thm:main-weighted-slice-rank}
    Let $t$ be a complex tensor and $\xi\in\Xi$.
    Then, \[
    \widetilde{\SR}_\xi(t) = \min_{g\in\GL} \tilde{\tau}_{\xi}(\HH_{g\cdot t}).
    \]
\end{thm}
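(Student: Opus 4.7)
The plan is to apply the minimax formula for convex optimization on moment polytopes (\cref{thm:generalpolytope}) with a suitably chosen convex, symmetric, l.s.c.\ function $F$, and then translate both sides via the already-established max-min characterizations of the weighted asymptotic slice rank (\cref{thm:weighted-cvz-formula}) and the weighted asymptotic vertex cover number (\cref{prop:tao-sawin-weighted}).

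Concretely, I would define
\[
    F \colon \RR^{n_1} \times \cdots \times \RR^{n_d} \to \RR \cup \{\infty\}, \qquad F(p) \coloneqq \max_{i \,:\, \xi_i > 0} \bigl( -H(p_i)/\xi_i \bigr),
\]
with $-H$ extended by $+\infty$ off the probability simplex. Each $p_i \mapsto -H(p_i)/\xi_i$ is convex (since $H$ is concave) and l.s.c., and is symmetric under permutations of the entries of $p_i$; the pointwise maximum over $i$ preserves all three properties. Moreover $\Delta(t) \subseteq \dom F$, because every point of $\Delta(t)$ is a tuple of probability distributions (the moment map produces normalized marginals), on which $H$ is finite. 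Thus $F$ satisfies assumptions \ref{it:F1}, \ref{it:F2}, and \ref{it:F4} of \cref{thm:generalpolytope}.

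Applying \cref{thm:generalpolytope} to this $F$ and $v = t$ gives
\[
    \min_{p \in \Delta(t)} F(p) \;=\; \max_{g \in \GL} \min_{p \in \Omega(g \cdot t)} F(p).
\]
Negating both sides and applying the monotonically decreasing map $x \mapsto 2^{-x}$ then yields
\[
    \max_{p \in \Delta(t)} \min_{i \,:\, \xi_i > 0} 2^{H(p_i)/\xi_i} \;=\; \min_{g \in \GL} \max_{p \in \Omega(g \cdot t)} \min_{i \,:\, \xi_i > 0} 2^{H(p_i)/\xi_i}.
\]
The left-hand side equals $\widetilde{\SR}_\xi(t)$ by \cref{thm:weighted-cvz-formula}, and for each fixed $g \in \GL$ the inner $\max\min$ on the right equals $\tilde{\tau}_\xi(\HH_{g \cdot t})$ by \cref{prop:tao-sawin-weighted}, giving the claimed identity $\widetilde{\SR}_\xi(t) = \min_{g\in\GL} \tilde{\tau}_\xi(\HH_{g \cdot t})$. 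Once the right $F$ is identified, the argument is essentially mechanical; the only mild subtlety is the handling of indices with $\xi_i = 0$, for which the convention used in \cref{thm:weighted-cvz-formula} and \cref{prop:tao-sawin-weighted} (excluding them from the minimum) is matched exactly by our definition of $F$.
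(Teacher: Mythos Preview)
Your proof is correct. It differs from the paper's \emph{main} proof, which instead uses the $\theta$-parametrized formulas in \cref{thm:weighted-cvz-formula} and \cref{prop:tao-sawin-weighted} together with the equality $F_\theta = \zeta^\theta$ from \cref{thm:support-equals-quantum}, swapping the order of the minimizations over~$\theta$ and~$g$. However, the paper explicitly acknowledges your route in the Remark immediately following the proof, where it notes that one may alternatively apply \cref{thm:moment-polytope-general-q} directly to $F_\xi(p) = -\min_i H(p_i)/\xi_i$ (with the convention $H(p_i)/\xi_i = \infty$ when $\xi_i = 0$), which is precisely your function~$F$. Your approach is slightly more direct for this particular statement, while the paper's main proof highlights that the result factors through the headline identity $F_\theta = \zeta^\theta$.
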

\begin{proof}
    This follows at once from \cref{thm:weighted-cvz-formula,thm:support-equals-quantum,prop:tao-sawin-weighted}:
    \begin{equation*}
        \widetilde{\SR}_\xi(t)
    = \min_{\theta\in\Theta} F_\theta(t)^{1/\braket{\theta,\xi}}
    = \min_{\theta\in\Theta} \left( \zeta^\theta(t) \right)^{1/\braket{\theta,\xi}}
    \!\!\!\!\!= \min_{\theta\in\Theta} \min_{g \in \GL} \left( \max_{p \in \Theta(g \cdot t)} 2^{\sum_{i=1}^d \theta_i H(p_i)} \right)^{1/\braket{\theta,\xi}}
    \!\!\!\!\!\!\!\!\!\!= \min_{g \in \GL} \tilde{\tau}_{\xi}(\HH_t).
    \qedhere
    \end{equation*}
\end{proof}

\begin{rem}
Alternatively, \cref{thm:main-weighted-slice-rank} can be proved by applying \cref{thm:moment-polytope-general-q} to the function $F_\xi(p) \coloneq -\min_i H(p_i)/\xi_i$ (with $H(p_i)/\xi_i=\infty$ if $\xi_i=0$), which is a convex symmetric l.s.c.\ function on $\RR^{n_1} \times \cdots \times\RR^{n_d}$ that takes a finite value at any point~$p=(p_1, \ldots, p_d) \in \RR^{n_1} \times \cdots \times\RR^{n_d}$ such that each $p_i$ is a probability distribution, hence in particular on any moment and support polytope.
\end{rem}

\subsection{\texorpdfstring{$G$}{G}-stable rank}\label{sub:G stable}
Next we discuss a tensor parameter introduced by Derksen~\cite{Derksen-22} called the $G$-stable rank.
Intuitively, it measures how fast $t$ goes to zero under the action of one-parameter subgroups of~$\GL$.
Derksen gave a general definition that applies to any perfect field and proved the following formula~\cite[Cor.~4.2 and Thm.~4.4]{Derksen-22}, which we will take as the definition:

\begin{defn}[$G$-stable rank, fractional vertex cover]
For $\alpha\in\RR_{>0}^d$, we define the \emph{$G$-stable $\alpha$-rank} of a tensor~$t \in \CC^{n_1} \ot \cdots \ot \CC^{n_d}$ as
\begin{align}
\label{eq:g-stable-rank}
    \rk^G_\alpha(t) \coloneqq \min_{g \in \GL} \tau_{\alpha}^{f}(\HH_{g\cdot t}),
\end{align}
where $\HH_{g \cdot t}$ denotes the hypergraph associated to the tensor~$g \cdot t$ (as in \cref{sub:asymptotic slice rank}) and $\tau_{\alpha}^f$ denotes the $\alpha$-weighted \emph{fractional vertex cover} of a hypergraph, defined by the following linear program:
 \begin{equation}
\label{eq:fractional-vertex-cover}
\begin{split}
\tau^f_\alpha(\HH) = \min \; & \sum_{i=1}^{d} \sum_{j=1}^{n_i} \alpha_i u_{ij}\\
\text{s.t. }\, & u_{1,i_1} + u_{2,i_2} + \dots + u_{d,i_d}\geq 1\quad \text{ for every }e=(i_1,i_2,\dots,i_d)\in E(\HH)\\
&u \in \RR^{n_1 + \dots + n_d}_{\geq 0}.
\end{split}
\end{equation}
The \emph{$G$-stable rank} is then defined as~$\rk^G(t) \coloneqq \rk^G_{(1,\dots,1)}$, i.e., $\alpha=(1,\dots,1)$ is the all-ones vector.
\end{defn}

The formula in \cref{eq:g-stable-rank} should be interpreted as a formula in the spirit of Strassen's support functionals.
Indeed, it is not hard to see that $\tau^f_\alpha(\HH_t)$ only depends on the tensor's support polytope.
The following proposition gives a concise expression:

\begin{prop}
\label{prop:fractional-cover-reciprocal}
For every tensor~$t$ and every~$\alpha\in\RR_{>0}^d$, we have
\[
    \tau_{\alpha}^f(\HH_t) = \max_{p\in\Omega(t)}\min_{i\in [d]} \frac{\alpha_i}{\norm{p_i}_\infty}.
\]
\end{prop}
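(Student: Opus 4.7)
The plan is to prove the formula via LP duality applied to the linear program~\eqref{eq:fractional-vertex-cover} defining~$\tau_{\alpha}^f(\HH_t)$, and then to recognize the dual program as a reparameterization of the max-min expression over the support polytope~$\Omega(t)$.

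First, I would compute the LP dual of~\eqref{eq:fractional-vertex-cover}. Introducing a dual variable~$y_e \geq 0$ for each hyperedge $e = (i_1, \ldots, i_d) \in E(\HH_t)$ (the variables $u_{ij}$ are nonnegative, so these are inequality constraints), strong LP duality gives
\[
    \tau^f_\alpha(\HH_t) = \max \; \sum_{e \in E(\HH_t)} y_e
    \quad \text{s.t.} \quad
    \sum_{e \in E(\HH_t) \,:\, e_i = j} y_e \leq \alpha_i \;\; \forall (i, j),
    \quad y \geq 0.
\]

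Next I would reparameterize. Assuming $t \neq 0$ so $E(\HH_t) \neq \emptyset$, any feasible~$y$ with $c \coloneqq \sum_e y_e > 0$ can be written as $y_e = c \lambda_e$ where~$\lambda$ is a probability distribution on~$E(\HH_t)$; this in turn gives a point $p = \sum_e \lambda_e (e_{e_1}, \ldots, e_{e_d}) \in \Omega(t)$ whose marginals satisfy $(p_i)_j = \sum_{e \,:\, e_i = j} \lambda_e$. The dual constraint becomes $c (p_i)_j \leq \alpha_i$ for all $i, j$, i.e., $c \leq \alpha_i / \|p_i\|_\infty$ for all~$i$, so $c \leq \min_i \alpha_i / \|p_i\|_\infty$, with the largest admissible~$c$ being exactly this minimum. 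Conversely, by the description \eqref{eq:support poly concrete}, any~$p \in \Omega(t)$ arises from some probability distribution~$\lambda$ over $E(\HH_t)$ via the above formula, and setting $y_e \coloneqq \lambda_e \min_i \alpha_i / \|p_i\|_\infty$ yields a feasible dual solution with objective value $\min_i \alpha_i / \|p_i\|_\infty$. Combining the two directions gives the claimed equality.

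This is essentially an LP-duality computation plus bookkeeping, so there is no serious obstacle; the only mild subtlety is the rescaling step that turns a nonnegative vector~$y$ into a probability distribution (and the corresponding inverse operation), together with checking the degenerate case~$c = 0$ (where both sides of the dual vanish and the identification is trivial). The case $t = 0$ can either be excluded or handled by the natural conventions $\tau_\alpha^f(\emptyset) = 0$ and $\max_{p \in \emptyset}(\cdot) = 0$.
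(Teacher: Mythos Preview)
Your proposal is correct and follows essentially the same approach as the paper: LP duality to obtain the fractional matching formulation, then the rescaling $y = c\lambda$ to a probability distribution on $E(\HH_t)$ whose marginals yield a point of $\Omega(t)$, turning the dual constraints into $c \le \min_i \alpha_i/\|p_i\|_\infty$. Your treatment is slightly more explicit about the converse direction and the degenerate case $c=0$, but the argument is the same.
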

\begin{proof}
By LP duality, the fractional vertex cover number of a hypergraph equals its fractional matching number.
In other words:
\begin{equation*}
\begin{split}
\tau^f_\alpha(\HH) = \max \; & \sum_{e\in E(\HH)} y_e\\
\text{s.t. }\, & \sum_{e \in E(\HH), \ e_i = j} y_e\leq \alpha_i\quad \text{ for every }i \in [d], j \in [n_i],\\
&y\geq 0.
\end{split}
\end{equation*}
By a change of variables $y = c z$, we may therefore rewrite $\tau^f_\alpha(\HH_t)$ as  the value of the following optimization problem:
\[
\begin{split}
\tau^f_\alpha(\HH_t) = \max  &\quad c\\
\text{s.t. }\, & \sum_{e\in E(\HH)} z_e = 1,\\
&\sum_{e \in E(\HH), \ e_i = j} z_e\leq \frac{\alpha_i}{c}\quad \text{ for every }i \in [d], j \in [n_i],\\
&z\geq 0, c \geq 0.
\end{split}
\]
Note that $z$ is a probability distribution on the edge set of $\HH_t$ (i.e., the support of $t$), so its marginals \[
(p_i)_j \coloneqq \sum_{e_i=j} z_e
\] satisfy $(p_1,\dots,p_d)\in \Omega(t)$.
Conversely, every $p\in\Omega(t)$ is the marginal distributions of a probability distribution on $E(\HH_t)$.
    This implies that \[
    \tau_\alpha^f(\HH_t) = \max_{\substack{p\in\Omega(t)\\ \forall i, \norm{p_i}_\infty\leq \frac{\alpha_i}{c}}} c = \max_{p\in\Omega(t)}\min_i \frac{\alpha_i}{\norm{p_i}_\infty}.
    \qedhere
    \]
\end{proof}

By an application of \cref{thm:moment-polytope-general-q}, we will now show that $\rk^G_{\alpha}(t)$ also admits a formula as an optimization problem over the entanglement polytope $\Delta(t)$, reproducing a result by Derksen.

\begin{cor}[{\cite[Thm.~5.2]{Derksen-22}}]
For every tensor~$t$ and every~$\alpha\in\RR_{>0}^d$, we have
\begin{align*}
    \rk^G_{\alpha}(t) = \max_{p\in\Delta(t)}\min_{i\in [d]} \frac{\alpha_i}{\norm{p_i}_\infty}.
\end{align*}
\end{cor}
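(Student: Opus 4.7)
The plan is to reduce the statement to a direct application of \cref{thm:generalpolytope} via a reciprocal reformulation. Substituting \cref{prop:fractional-cover-reciprocal} into the defining formula \eqref{eq:g-stable-rank} gives
\[
    \rk^G_\alpha(t) = \min_{g \in \GL} \max_{p \in \Omega(g \cdot t)} \min_{i \in [d]} \frac{\alpha_i}{\norm{p_i}_\infty}.
\]
Since every element of a support polytope $\Omega(g \cdot t)$, and of the moment polytope $\Delta(t)$, is a tuple of probability distributions, we have $\norm{p_i}_\infty > 0$ on each. It is therefore natural to introduce the auxiliary function
\[
    G \colon \RR^{n_1} \times \cdots \times \RR^{n_d} \to \RR, \qquad G(p) \coloneqq \max_{i \in [d]} \frac{\norm{p_i}_\infty}{\alpha_i},
\]
so that on both families of polytopes one has $\min_i \alpha_i/\norm{p_i}_\infty = 1/G(p)$.

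Next I would verify the hypotheses \ref{it:F1}, \ref{it:F2}, and \ref{it:F4} of \cref{thm:generalpolytope} for $F = G$: the function $G$ is a maximum of positive scalar multiples of $\ell^\infty$-norms on each factor, hence convex and continuous (in particular l.s.c.); each $\norm{p_i}_\infty$ is invariant under permutations of the coordinates of $p_i$, so $G$ is symmetric in every argument; and $G$ is finite everywhere on the ambient vector space, so trivially $\Delta(t) \subseteq \dom G$. Applying \cref{thm:generalpolytope} then yields
\[
    \min_{p \in \Delta(t)} G(p) = \max_{g \in \GL} \min_{p \in \Omega(g \cdot t)} G(p).
\]

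Finally, since $G > 0$ on both families of polytopes, taking reciprocals converts $\min G$ into $\max(1/G)$ (and vice versa) on each side, which gives
\[
    \max_{p \in \Delta(t)} \min_{i \in [d]} \frac{\alpha_i}{\norm{p_i}_\infty} = \min_{g \in \GL} \max_{p \in \Omega(g \cdot t)} \min_{i \in [d]} \frac{\alpha_i}{\norm{p_i}_\infty} = \rk^G_\alpha(t),
\]
as desired. No serious obstacle is anticipated: the corollary is a formal consequence of \cref{thm:generalpolytope} combined with \cref{prop:fractional-cover-reciprocal}, and the only point that deserves attention is the strict positivity of $G$ on the relevant polytopes, which is precisely what justifies the passage to reciprocals.
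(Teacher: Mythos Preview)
Your proposal is correct and follows essentially the same approach as the paper: both take reciprocals, introduce the auxiliary function $F_\alpha(p)=\max_{i}\norm{p_i}_\infty/\alpha_i$, verify that it is convex, symmetric and continuous, and then apply \cref{thm:generalpolytope}. The only cosmetic difference is that the paper passes to reciprocals before invoking the theorem (writing $1/\rk^G_\alpha(t)$ directly), whereas you invoke the theorem first and take reciprocals afterwards.
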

\begin{proof}
By \cref{prop:fractional-cover-reciprocal} and \eqref{eq:g-stable-rank} we get
\[
\frac{1}{\rk^G_{\alpha}(t)} = \max_{g\in \GL}\min_{p\in \Omega(g\cdot t)} \max_{i\in [d]}\frac{\norm{p_i}_{\infty}}{\alpha_i}.
\]
The function $F_\alpha(p) \coloneqq \max_{i\in [d]} \frac{1}{\alpha_i}\norm{p_i}_{\infty}$ is a convex symmetric continuous function on $\RR^{n_1}\times\cdots\times\RR^{n_d}$.
Hence, by an application of \cref{thm:moment-polytope-general-q}, we obtain the desired result:
\begin{equation*}
    \frac{1}{\rk^G_{\alpha}(t)} = \min_{p\in\Delta(t)} \max_{i\in [d]}\frac{\norm{p_i}_\infty}{\alpha_i},
    \qedhere
\end{equation*}
\end{proof}

\subsection{Non-commutative rank}\label{sub:ncrank}
A 3-tensor $t\in\CC^{n}\otimes\CC^n\otimes\CC^m$ can be viewed as a matrix tuple $t=(A_1,A_2,\dots,A_m)$, where the matrices~$A_i\in\CC^{n\times n}$ are its slices in the third direction.
For simplicity of presentation, we will assume that the slices satisfy
\begin{equation}
\label{eq:no-isolated-vertex}
\bigcap_{i}\ker(A_i) = 0 \quad\text{and}\quad \bigcap_i \ker(A_i^\dag)=0.
\end{equation}

\begin{defn}[Noncommutative rank]
The \emph{non-commutative rank} $\ncrk(t)$ of a 3-tensor or matrix tuple~$t=(A_1,A_2,\dots,A_m)$ is defined as the rank of the matrix of indeterminates
\[
    A(x) \coloneqq x_1 A_1 + x_2 A_2 +\dots + x_m A_m,\qquad \ncrk(t) \coloneqq \rk(A(x)),
\]
where $\rk(A(x))$ is the matrix rank over the \emph{free skew field} $\CC\langle x_1,x_2,\dots,x_m\rangle$.
\end{defn}

In \cite{Hirai_ncrank}, Hirai used a version of the strong duality theorem to recast the non-commutative rank as an optimization problem over the moment polytope with respect to the action of $\GL(n)\times\GL(n)$ on matrix tuples $\CC^{n}\otimes\CC^n\otimes\CC^m$ given by $(g,h) \cdot t := (g \ot h \ot I) t$.
This action is often called the \emph{left-right action}, as it sends a matrix tuple $(A_1,A_2,\dots,A_m)$ to $(g A_1 h^T,g A_2 h^T,\dots,g A_m h^T)$.
Then Hirai's result can be stated as follows:

\begin{thm}[{\cite[Thm.~1.4]{Hirai_ncrank}}]\label{thm:hiroshi ncrank}
For every 3-tensor $0 \neq t \in \CC^n \ot \CC^n \ot \CC^m$ whose slices satisfy \eqref{eq:no-isolated-vertex}, we have
\[
  \ncrk(t)
= n - \frac{n}{2}\min_{p \in \Delta_{\mathrm{LR}}(t)} \norm{p - \frac{\bm{1}_{2n}}{n}}_1
= n - \frac{n}{2}\min_{(p_1,p_2)\in \Delta_{\mathrm{LR}}(t)} \norm{p_1 - \frac{\bm{1}_n}{n}}_1 + \norm{p_2 - \frac{\bm{1}_n}{n}}_1
\]
where we write $\Delta_{\mathrm{LR}}(t)$ for the moment polytope of~$t$ associated with the left-right action and~$\bm{1}_k \in \RR^k$ for the all-ones vectors.
\end{thm}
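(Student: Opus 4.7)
My plan is to apply the minimax formula for moment polytope optimization (\cref{thm:generalpolytope}) to the left-right action of $\GL(n)\times\GL(n)$ on $\CC^n \ot \CC^n \ot \CC^m$, for the convex, symmetric, and l.s.c.\ function
\[
    F(p_1, p_2) \coloneqq \norm{p_1 - \tfrac{\bm 1_n}{n}}_1 + \norm{p_2 - \tfrac{\bm 1_n}{n}}_1,
\]
which is everywhere finite and, in particular, finite on the moment polytope $\Delta_{\mathrm{LR}}(t)$. The theorem then yields
\[
    \min_{p \in \Delta_{\mathrm{LR}}(t)} F(p) = \max_{(g,h)\in\GL(n)^2} \min_{p \in \Omega_{\mathrm{LR}}((g,h)\cdot t)} F(p),
\]
reducing the problem to an analysis of the right-hand side.

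For any $(g, h)$, the support polytope $\Omega_{\mathrm{LR}}((g,h)\cdot t)$ is, by~\eqref{eq:support poly concrete}, the convex hull $\conv\{(e_i, e_j) : (i, j) \in E(G)\}$, where $G \coloneqq G_{(g,h)\cdot t}$ is the bipartite graph on $[n]\sqcup[n]$ whose edges correspond to the nonzero entries of the slices of $(g,h)\cdot t$. The hypothesis~\eqref{eq:no-isolated-vertex}, together with the invertibility of $g$ and $h$, ensures that no common row or column of these slices vanishes, so $G$ has no isolated vertex. The key combinatorial identity I would prove is
\[
    \min_{(p_1, p_2) \in \Omega_{\mathrm{LR}}(G)} F(p_1, p_2) = \frac{2(n - \tau(G))}{n},
\]
where $\tau(G)$ is the bipartite vertex cover number. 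This is a bipartite LP duality statement: writing $\norm{p - u}_1 = \max_{\alpha \in [-1, 1]^n}\alpha^\top(p - u)$ with $u = \bm 1_n/n$, swapping the minimum and maximum by Sion's theorem, and reparameterizing $\alpha = 2x - \bm 1$, $\beta = 2y - \bm 1$ with $x, y \in [0, 1]^n$, the dual becomes $\max_{c \in [0, 1]} 2c\bigl(1 - \tau^f_c(G)/n\bigr)$, where $\tau^f_c(G) = c\,\tau^f(G)$ is a scaled fractional vertex cover; by König's theorem the fractional and integer vertex cover numbers coincide for bipartite graphs, and the optimum, attained at $c = 1$, is $2(n - \tau(G))/n$.

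Combining the two steps gives
\[
    n - \frac{n}{2}\min_{p\in\Delta_{\mathrm{LR}}(t)}F(p) = \min_{(g,h)\in\GL(n)^2}\tau(G_{(g,h)\cdot t}),
\]
so it remains to identify this minimum with $\ncrk(t)$. This is the Fortin--Reutenauer shrunk-subspace characterization of the non-commutative rank: a vertex cover $(C_l, C_r)$ of $G_{(g,h)\cdot t}$ of size $\tau$ is equivalent to $(gA_k h^\top)_{ij} = 0$ for all $i \notin C_l$ and $j \notin C_r$, which translates into a pair of subspaces $U, V \subseteq \CC^n$ satisfying $A_k U \subseteq V^\perp$ for all $k$ with $\dim U + \dim V = 2n - \tau$, and conversely any such pair yields a vertex cover after choosing bases adapted to $U$ and $V^\perp$. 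I expect the main obstacle to be the support-polytope identity above: although it is a clean LP duality statement in principle, the naive guess of taking $\pi$ uniform on a maximum matching produces only $4(n - \nu(G))/n$, which is twice the correct value, so the optimum is attained by a genuinely non-uniform distribution on edges, and both König's theorem and the no-isolated-vertex assumption are essential in order to extract the stated value from the LP.
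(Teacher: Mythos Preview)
Your approach is essentially the same as the paper's: both apply \cref{thm:generalpolytope} to the left--right action with the same function~$F$, reduce to the combinatorial identity $\min_{p\in\Omega_{\mathrm{LR}}}F(p)=2(n-\tau(G))/n$ via LP duality and K\H{o}nig's theorem (this is \cref{prop:l1-vertex-cover} in the paper), and then invoke the Fortin--Reutenauer formula~\eqref{eq:fortin-reutenauer}.

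One point to tighten in your sketch: after the reparameterization $\alpha=2x-\bm 1$, $\beta=2y-\bm 1$, the auxiliary parameter $c=\min_{(i,j)\in E}(x_i+y_j)$ ranges a priori over $[0,2]$, not $[0,1]$, and for $c>1$ the identity $\tau^f_c(G)=c\,\tau^f(G)$ fails because the box constraints $x,y\le 1$ become active. You need a short shifting argument---replace $(x,y)$ by $(x,y)-\tfrac{c-1}{2}(\bm 1,\bm 1)$, using the no-isolated-vertex hypothesis to guarantee nonnegativity---to reduce to $c=1$ without changing the objective value. The paper carries out exactly this step (phrased as showing its auxiliary LP variable~$t$ can be taken to be zero), and it is the one place where the hypothesis~\eqref{eq:no-isolated-vertex} is actually used.
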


We will show how to reproduce this result using \cref{thm:moment-polytope-general-q}.
Our starting point is the following well-known formula for the non-commutative rank established by Fortin and Reutenauer~\cite{Fortin-Rautenauer-04}:
for any matrix tuple $t=(A_1,\dots,A_m)$,
\begin{equation*}
\ncrk(t) = 2n - \max_{(g,h)\in\GL_n\times\GL_n}\, (a+b), \quad \text{where } \; \forall i\in [m], gA_i h^\top = \left(
\begin{array}{c|c}
* & * \\ \hline
* & 0_{a\times b}
\end{array}
\right).
\end{equation*}
The combinatorial quantity $a+b$ can be understood via a bipartite graph $\BB_t$ that is naturally associated with the matrix tuple, in the following way.
Let $\BB_t$ have left and right vertex sets both indexed by $[n]$, and include an edge $(i,j)\in E(\BB_t)$ whenever there exists $k\in [m]$ such that $(A_k)_{ij}\neq 0$.
Equivalently, $(i,j)\in E(\BB_t)$ if $(e_i,e_j,e_k)\in\supp(t)$ for some $k$.
In graph-theoretic language, $2n - (a+b)$ is the size of a minimum vertex cover of the graph~$\BB_t$, i.e., the vertex cover number~$\tau(\BB_t)$.
Hence, the Fortin-Reutenauer formula states that
\begin{equation}\label{eq:fortin-reutenauer}
\ncrk(t) = \min_{(g,h)\in\GL(n)\times\GL(n)} \tau(\BB_{(g\otimes h\otimes I_m) t}).
\end{equation}
K\H{o}nig's theorem implies that the vertex cover number of a bipartite graph~$\BB$ equals the \emph{fractional} vertex cover number.
That is:
\begin{equation}
\label{eq:vertex-cover}
\tag{$\star$}
\begin{split}
\tau(\BB) = \min\; &\sum_{i=1}^n u_i + \sum_{j=1}^n v_j \\
\text{s.t. } &\forall (i,j)\in E(B),\; u_i+v_j\geq 1\\
\quad & u,v\geq 0.
\end{split}
\end{equation}

The next proposition shows that the value of the above linear program for~$\BB_t$ can be obtained by a convex optimization problem over the polytope~$\Omega_{\mathrm{LR}}(t) \coloneqq \conv\{(e_i,e_j)\mid (i,j)\in E(\BB_t)\}$, which is the support polytope for the left-right action.

\begin{prop}
\label{prop:l1-vertex-cover}
Suppose $0\neq t\in\CC^n\otimes\CC^n\otimes\CC^m$ and \eqref{eq:no-isolated-vertex} holds.
    Then,
    \[
    \tau(\BB_t) = n - \frac{n}{2}\min_{p\in \Omega_{\mathrm{LR}}(t)} \,\norm{p-\frac{\bm{1}_{2n}}{n}}_1,
    \]
    where $\bm{1}_{2n}\in\RR^{2n}$ denotes the all-ones vector.
\end{prop}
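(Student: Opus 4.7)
The plan is to apply LP duality and König's theorem to reduce the identity to an explicit LP on the bipartite graph $\BB_t$, whose value can be computed directly.

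First I would apply the variational characterization $\|q\|_1 = \max_{\|s\|_\infty \leq 1}\langle s, q\rangle$ to each of $p_1 - \bm{1}_n/n$ and $p_2 - \bm{1}_n/n$ and interchange minimum and maximum via Sion's minimax theorem; this is justified because $\Omega_{\mathrm{LR}}(t)$ is a compact convex polytope and the objective is bilinear. Since $\Omega_{\mathrm{LR}}(t) = \conv\{(e_i, e_j) : (i,j) \in E(\BB_t)\}$, the inner minimum of a linear functional is attained at an extreme point, giving
\[
    \min_{p \in \Omega_{\mathrm{LR}}(t)}\|p - \bm{1}_{2n}/n\|_1 = \max_{s, r \in [-1, 1]^n}\left[\min_{(i,j) \in E(\BB_t)}(s_i + r_j) - \frac{1}{n}\Bigl(\sum_{i = 1}^n s_i + \sum_{j = 1}^n r_j\Bigr)\right].
\]
After the affine substitution $s = 2a - \bm{1}$, $r = 2b - \bm{1}$ with $a, b \in [0, 1]^n$, the target identity reduces to
\[
    \max_{a, b \in [0, 1]^n}\left[n \min_{(i,j) \in E(\BB_t)}(a_i + b_j) - \sum_{i = 1}^n a_i - \sum_{j = 1}^n b_j\right] = n - \tau(\BB_t).
\]

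For the lower bound, I take $(a^*, b^*) \in \{0, 1\}^n \times \{0, 1\}^n$ to be the indicator of an integer minimum vertex cover of $\BB_t$, which exists by König's theorem for bipartite graphs. By minimality $\min_E(a^*_i + b^*_j) = 1$, and $\sum(a^* + b^*) = \tau(\BB_t)$, yielding value $n - \tau(\BB_t)$.

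For the upper bound, fix $(a, b) \in [0, 1]^n \times [0, 1]^n$ and set $m = \min_{(i,j) \in E(\BB_t)}(a_i + b_j) \in [0, 2]$; it suffices to show $\sum(a + b) \geq n(m - 1) + \tau(\BB_t)$. When $m \in [0, 1]$, I rescale $(a/m, b/m)$ (for $m > 0$) to obtain a fractional vertex cover of $\BB_t$, and König's theorem ($\tau^f = \tau$ for bipartite) gives $\sum(a + b) \geq m\tau(\BB_t)$; the claim then reduces to $(m - 1)(\tau(\BB_t) - n) \geq 0$, which holds since $m \leq 1$ and $\tau(\BB_t) \leq n$ (take either side of $\BB_t$ as a cover). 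When $m \in [1, 2]$, I use the complementary substitution $a' = \bm{1} - a$, $b' = \bm{1} - b \in [0, 1]^n$, which turns the constraint into $a' + b' \leq c := 2 - m \in [0, 1]$ on $E$. The non-isolation assumption~\eqref{eq:no-isolated-vertex} together with the edge constraint propagates to $a', b' \leq c$ at every vertex; rescaling by $c$ (when $c > 0$) and applying König's theorem to the complementary LP gives $\sum(a' + b') \leq c(2n - \tau(\BB_t))$, hence $\sum(a + b) \geq 2n - c(2n - \tau(\BB_t))$, and the claim reduces to $(m - 1)(n - \tau(\BB_t)) \geq 0$, which again holds.

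The main obstacle is the two-regime case analysis: one must invoke König's theorem twice---directly as $\tau^f = \tau$ for $m \leq 1$, and in its complementary form (bounding $\max \sum(a' + b') = 2n - \tau$ under the anti-cover constraint) for $m \geq 1$---and verify that both rescaling arguments reduce to the elementary monotonicity fact $(m - 1)(\tau(\BB_t) - n) \leq 0$.
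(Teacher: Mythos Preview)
Your proof is correct. The overall strategy matches the paper's: both dualize the $\ell_1$-minimization over $\Omega_{\mathrm{LR}}(t)$ (you via Sion's minimax, the paper via Fenchel duality), land on essentially the same box-constrained problem with an extra ``scale'' parameter (your $m=\min_E(a_i+b_j)$, the paper's auxiliary variable~$t$), and then argue via K\"onig's theorem and the non-isolation assumption. The difference is in how that extra parameter is eliminated. The paper shows that any optimal solution can be \emph{translated} to one with $t=0$, by shifting both coordinates by $t/2$ and using non-isolation to check feasibility in the two cases $t<0$ and $t>0$; after setting $t=0$ an affine change of variables gives the fractional vertex cover LP directly. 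You instead \emph{bound} the objective for each value of $m$ by rescaling: for $m\le 1$ you divide by $m$ and invoke $\tau^f=\tau$, while for $m\ge 1$ you pass to complements $a'=1-a,\ b'=1-b$, use non-isolation to get $a',b'\le 2-m$, rescale, and invoke the complementary identity $\max\{\sum u+\sum v: u_i+v_j\le 1,\ u,v\in[0,1]^n\}=2n-\tau$. Your route is arguably more transparent about \emph{where} K\"onig enters (once per regime), at the cost of needing the complementary LP fact; the paper's translation trick is slicker once you see it, and only invokes K\"onig once at the end. Two minor points: you should handle $m=0$ explicitly (trivial, since then the objective is $-\sum(a+b)\le 0\le n-\tau$), and for the lower bound it suffices that $\min_E(a_i^*+b_j^*)\ge 1$; equality is not needed.
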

\begin{proof}
Note that this assumption translates to $\BB_t$ not having isolated vertices.

Let's denote $\omega_{2n}\coloneqq \frac{\bm{1}_{2n}}{n}$ and
consider $F(p)\coloneqq \norm{p - \omega_{2n}}_1$, defined on $\RR^{2n}$.
Using the fact that $\norm{p}_1 = \sup_{\norm{x}_\infty\leq 1} p^T x$, we calculate the Legendre--Fenchel conjugate \[
\begin{split}
    F^*(x) &= \sup_{p\in\RR^{2n}}\, \langle p,x\rangle - \norm{p-\omega_{2n}}_1 \\
    &= \sup_{p\in\RR^{2n}} \, \left(\langle p,x\rangle + \min_{\norm{y}_{\infty}\leq 1} -\langle p-\omega_{2n},y\rangle \right)\\
    &= \min_{\norm{y}_{\infty}\leq 1} \sup_{p\in \RR^{2n}} \langle p, x-y\rangle +\langle \omega_{2n},y\rangle\\
    &= \begin{cases}
        \langle \omega_{2n},x\rangle & \text{ if } \norm{x}_{\infty}\leq 1\\
        \infty & \text{ otherwise }
    \end{cases}
\end{split}
\] where we use von Neumann's minimax theorem in the third equality.

Now consider the indicator function $\delta_{\Omega_{\mathrm{LR}}(t)}$ of $\Omega_{\mathrm{LR}}(t)$, defined as $\delta_{\Omega_{\mathrm{LR}}(t)}(p)=0$ if $p\in \Omega_{\mathrm{LR}}(t)$ and $\infty$ otherwise.
Its conjugate \[
\delta^*_{\Omega_{\mathrm{LR}}(t)}(x) = \max_{p\in \Omega_{\mathrm{LR}}(t)} \langle p, x\rangle \stackrel{x = (a,b)}{=} \max_{(i,j)\in E(\BB_t)} a_i + b_j
\]
is the support function of $\Omega_{\mathrm{LR}}(t)$.

We have
\[
    \min_{p\in \Omega_{\mathrm{LR}}(t)} F(p) = \sup_{x\in\RR^{2n}} -F^{*}(x) - \delta_{\Omega_{\mathrm{LR}}(t)}^*(-x) = \max_{\norm{(a,b)}_{\infty}\leq 1}\left[ -\frac{1}{n}\left(\sum_i a_i + \sum_j b_j \right)+\min_{(i,j)\in E(B)} a_i+b_j\right]
\]
where we use $\min_{p\in \Omega_{\mathrm{LR}}(t)}F(p)=\min_{p\in\RR^{2n}} F(p)+\delta_{\Omega_{\mathrm{LR}}}(p)$ and apply Fenchel's duality~\cite[Theorem~31.1]{Rockafellar1970} in the first equality.

By a change of variables $(c,d)=-(a,b)$, we can write the above maximization problem as a linear program (note that $\norm{(a,b)}_{\infty}\leq 1$ is equivalent to $-\bm{1}_n\leq a,b\leq\bm{1}_n$) :\begin{equation}
\label{eq:LP}
\begin{split}
\max\;\; &\frac{1}{n}\left(\sum_{i}c_i +\sum_j d_j\right) - t \\
\text{s.t. } \;\;&\;\; c_i+d_j \leq t, \quad \text{for every }(i,j)\in E(\BB_t), \;\\
\;\; & -\bm{1}_n \leq c,d\leq \bm{1}_n.
\end{split}
\end{equation}
It remains to show that $n-\frac{n}{2}\times\text{(the optimal value of \eqref{eq:LP})}$ equals the optimal value of \eqref{eq:vertex-cover}.

We now prove that \eqref{eq:LP} has an optimal solution with $t=0$:
Suppose $x=(c,d,t)$ is an optimal solution and set $\tilde{c}_i = c_i - \frac{t}{2}, \tilde{d}_i = d_i - \frac{t}{2}$.
We claim that $\tilde{x}=(\tilde{c},\tilde{d},0)$ is also an optimal solution.
It is easy to check that the the objective function takes the same value on $\tilde{x}$.
Hence, we only need to show that $\tilde{x}$ is feasible.
For $(i,j)\in E(\BB_t)$, we have $\tilde{c}_i+\tilde{d}_j = c_i + d_j - t \leq 0$, so it remains to show that $-\bm{1}_n\leq \tilde{c},\tilde{d}\leq \bm{1}_n$.
There are two cases.
If $t<0$, then for all $i$, $\tilde{c}_i>c_i$ and $\tilde{d}_i>d_i$.
Suppose there exists $i$ such that $\tilde{c}_i>1$.
Since $\BB_t$ has no isolated vertices by \eqref{eq:no-isolated-vertex}, there is a $j$ such that $(i,j)\in E(\BB_t)$.
Then, $\tilde{c}_i > 1$ implies $c_i > 1+\frac{t}{2}$.
Since $d_j\geq -1$, $c_i+d_j > \frac{t}{2}>t$, which is a contradiction.
By a symmetric argument, we get $\tilde{d}_j\leq 1$.
If $t>0$, then $\tilde{c}_i<c_i$.
If there exists $i$ with $\tilde{c}_i<-1$, then $c_i < -1 + \frac{t}{2}$.
Note that we necessarily have $\max c_i+d_j=t$, since otherwise we can decrease $t$.
By $c,d\leq 1$, we deduce that $t\leq 2$ so $c_i<-1+\frac{t}{2}$ implies that $c_i<0$.
Now, for every $j$ with $(i,j)\in E(\BB_t)$, we then have $c_i+d_j <\frac{t}{2} < t$.
This contradicts the assumption that $x$ is an optimal solution, since we can increase $c_i$ by a small value $\varepsilon>0$, which in turn increases the objective value, without violating any constraint.

We now set $t=0$ in \eqref{eq:LP}.
By a change of variables $u_i \coloneqq \frac{c_i-1}{2}, v_i \coloneqq \frac{d_i-1}{2}$, the value of \eqref{eq:LP} equals the value of the linear program
\begin{equation}
\label{eq:LP2}
\begin{split}
\max\;\; & 2 - \frac{2}{n}\left(\sum_i u_i + \sum_j v_j \right) \\
\text{s.t. }\; & u_i + v_j \geq 1, \quad \text{for every } (i,j)\in E(\BB_t), \\
& u,v\geq 0.
\end{split}
\end{equation}
Hence \[
\min_{p\in \Omega_{\mathrm{LR}}(t)} \norm{p-\omega_{2n}}_1 = 2 - \frac{2}{n}\min_{\substack{u,v\geq 0 \\ \forall (i,j)\in E(\BB_t), u_i+v_j\geq 1}} u_i + v_j
\]
and this finishes the proof.
\end{proof}

\begin{proof}[Proof of \cref{thm:hiroshi ncrank}]
The Fortin-Reutenauer formula~\eqref{eq:fortin-reutenauer}, along with \cref{prop:l1-vertex-cover}, implies that
\[
\ncrk(t) = n - \frac{n}{2}\max_{(g,h)\in \GL(n_1)\times\GL(n_2)}\min_{(p_1,p_2)\in \Omega_{\mathrm{LR}}((g\otimes h\otimes I_m) t)} \, F(p_1,p_2),
\]
where $F(p)\coloneqq \norm{p-\frac{\bm{1}_{2n}}{n}}_1= \norm{p_1 - \frac{\bm{1}_n}{n}}_1 + \norm{p_2 - \frac{\bm{1}_n}{n}}_1$ for $p=(p_1,p_2) \in \RR^{n_1} \times \RR^{n_2}$.
Since $F$ is convex, symmetric and continuous, the claim follows at once from \cref{thm:moment-polytope-general-q} (with $\pi$ the left-right action).
\end{proof}

\section*{Acknowledgements}
The authors thank Hiroshi Hirai, Jeroen Zuiddam and Maxim van den Berg for valuable discussions.
This research is supported by the European Union (ERC Grant SYMOPTIC, 101040907) and by the Deutsche Forschungsgemeinschaft (DFG, German Research Foundation, 556164098).

\bibliographystyle{alphaurl}
\bibliography{references}

\end{document}